\tikzset{
	lien/.style={draw,circle,inner sep=0,minimum size=3ex,font=\small},
	reseau/.style={draw, rounded corners},
	par dessus/.style={draw=white, double=black, double distance=.4pt,thick},
	multi/.style={double=gray!50!white, double distance=2pt},
	points/.style={inner sep=0pt, align=center, font={\tiny}},
	nomchemin/.style={blue},
	nomcoupure/.style={red},
	slipknot/.style={gray},
	saut/.style={dotted,->,thick},
	acc/.style={decorate,decoration=brace,line width=0.5pt},
	chemin/.style={red,thick,rounded corners},
	atome/.style={inner sep=1,font={\small}},
}
\begin{document}

\title[An appl. of parallel cut elim. in MLL
to the Taylor exp. of proof nets]
{An application of parallel cut elimination in multiplicative
linear logic to the Taylor expansion of proof nets}

\author[J.~Chouquet]{Jules Chouquet}
\address{Université d’Orléans, LIFO EA 4022, FR-45067 Orléans, France}
\email{jules.chouquet@univ-orleans.fr}
\urladdr{https://orcid.org/0000-0003-2676-0297}

\author[L.~Vaux Auclair]{Lionel Vaux Auclair}
\address{Aix-Marseille Univ, CNRS, I2M, Marseille, France}
\email{lionel.vaux@univ-amu.fr}
\urladdr{https://orcid.org/0000-0001-9466-418X}

\begin{abstract}
	We examine some combinatorial properties of parallel cut elimination in
	multiplicative linear logic (MLL) proof nets.
	We show that, provided we impose a constraint on some paths,
	we can bound the size of all the nets satisfying
	this constraint and reducing to a fixed resultant net.
	This result gives a sufficient condition for an infinite weighted sum of nets
	to reduce into another sum of nets, while keeping coefficients finite.
	We moreover show that our constraints are stable under reduction.

	Our approach is motivated by the quantitative semantics of linear logic:
	many models have been proposed, whose structure reflect the Taylor expansion
	of multiplicative exponential linear logic (MELL) proof nets into 
	infinite sums of differential nets. In order to simulate one cut elimination
	step in MELL, it is necessary to reduce an arbitrary number of cuts 
	in the differential nets of its Taylor expansion.
	It turns out our results apply to differential nets, because their cut
	elimination is essentially multiplicative. We moreover show
	that the set of differential nets that occur in the Taylor expansion of an
	MELL net automatically satisfies our constraints.

	Interestingly, our nets are untyped: we only rely on the sequentiality of
	linear logic nets and the dynamics of cut elimination.
	The paths on which we impose bounds are the switching paths involved in the
	Danos--Regnier criterion for sequentiality.
	In order to accommodate multiplicative units and weakenings, our nets come
	equipped with jumps: each weakening node is connected to some other node.
	Our constraint can then be summed up as a bound on both the length of
	switching paths,
	and the number of weakenings that jump to a common node.
\end{abstract}

\maketitle
\section{Introduction}

\subsection{Context: quantitative semantics and Taylor expansion}

Linear logic takes its roots in the denotational semantics of
$\lambda$-calculus:
it is often presented, by Girard himself \cite{Girard87}, as the result of a
careful investigation of the model of coherence spaces.
Since its early days, linear logic has thus generated a rich ecosystem of
denotational models, among which we distinguish the family of \emph{quantitative
semantics}.
Indeed, the first ideas behind linear logic were exposed even before coherence
spaces, in the model of normal functors \cite{Girard88}, in which Girard
proposed to consider analyticity, instead of mere continuity, as the key property
of the interpretation of $\lambda$-terms:
in this setting, terms denote power series, representing analytic maps between
modules.

This quantitative interpretation reflects precise operational properties of
programs:
the degree of a monomial in a power series is closely related to the number of
times a function uses its argument.
Following this framework, various models were considered --- among which we shall
include the multiset relational model as a degenerate, boolean-valued instance.
These models allowed to represent and characterize quantitative properties such as the
execution time \cite{deCarvalho09}, including best and worst case analysis for
non-deterministic programs \cite{LairdMMP13}, or the probability of reaching a
value \cite{DE11}.
It is notable that this whole approach gained momentum in the early 2000’s,
after the introduction by Ehrhard of models \cite{Ehrhard02,Ehrhard05} in which
the notion of analytic maps interpreting $\lambda$-terms took its usual sense,
while Girard's original model involved set-valued formal power series.
Indeed, the keystone in the success of this line of work is an analogue of the
Taylor expansion formula, that can be established both for $\lambda$-terms
and for linear logic proofs.

Mimicking this denotational structure, Ehrhard and Regnier introduced the
differential $\lambda$-calculus \cite{ER03} and differential linear logic
\cite{ER05}, which allow to formulate a syntactic version of Taylor expansion:
to a $\lambda$-term (resp. to a linear logic proof), we associate an infinite
linear combination of approximants \cite{ER08,Ehrhard16}.
In particular, the dynamics (\emph{i.e.} $\beta$-reduction or cut elimination)
of those systems is dictated by the identities of quantitative semantics.
In turn, Taylor expansion has become a useful device to design and study new
models of linear logic, in which morphisms admit a matrix representation:
the Taylor expansion formula allows to describe the interpretation of promotion
--- the operation by which a linear resource becomes freely duplicable ---
in an explicit, systematic manner.
It is in fact possible to show that any model of differential linear logic
without promotion gives rise to a model of full linear logic in this way
\cite{deCarvalho07}:
in some sense, one can simulate cut elimination through Taylor expansion.

\subsection{Motivation: reduction in Taylor expansion}

There is a difficulty, however: Taylor expansion generates infinite sums and,
\emph{a priori}, there is no guarantee that the coefficients in these sums will
remain finite under reduction.
In previous works \cite{deCarvalho07,LairdMMP13}, coefficients 
were thus required to be taken in a complete semiring: all sums should converge.
In order to illustrate this requirement, let us first consider the case of
$\lambda$-calculus.

The linear fragment of differential $\lambda$-calculus,
called \textit{resource $\lambda$-calculus}, is the target of the
syntactic Taylor expansion of $\lambda$-terms.
In this calculus, the application of a term to another is replaced with
a multilinear variant: $\rappl{s}{\mset{t_1,\dotsc,t_n}}$ denotes the $n$-linear 
symmetric application of resource term $s$ to the multiset of resource terms 
$[t_1,\dotsc,t_n]$.
Then, if $x_1,\dotsc,x_k$ denote the occurrences of $x$ in $s$, the redex
$\rappl{\lambda x.s}{\mset{t_1,\dotsc,t_n}}$ reduces to the sum
$\sum_{f:\{1,\dotsc,k\}\stackrel\sim\to \{1,\dotsc, n\}}
s[t_{f(1)}/x_1,\dotsc,t_{f(k)}/x_k]$: here $f$ ranges over all bijections
$\{1,\dotsc,k\}\stackrel\sim\to \{1,\dotsc, n\}$ so this sum is zero if
$n\not=k$.
As sums are generated by reduction, it should be noted that all the syntactic
constructs are linear, both in the sense that they commute to sums, and in the
sense that, in the elimination of a redex, no subterm of the argument multiset
is copied nor erased.
The key case of Taylor expansion is that of application:
\begin{equation}\label{eq:taylor}
	\taylor(MN)=\sum_{n\in N} \frac{1}{n!} \rappl{\taylor(M)}{\taylor(N)^n}
\end{equation}
where $\taylor(N)^n$ is the multiset made of $n$ copies of $\taylor(N)$
--- by $n$-linearity, $\taylor(N)^n$ is itself an infinite linear 
combination of multisets of resource terms appearing in $\taylor(N)$.
Admitting that $\rappl{M}{\mset{N_1,\dotsc,N_n}}$ represents 
the $n$-th derivative of $M$, computed at $0$, and $n$-linearly
applied to $N_1$, \ldots, $N_n$, 
one immediately recognizes the usual Taylor expansion formula.

From (\ref{eq:taylor}), it is immediately clear that, 
to simulate one reduction step occurring in $N$, 
it is necessary to reduce in parallel in an unbounded number 
of subterms of each component of the expansion.
Unrestricted parallel reduction, however, is ill defined in this setting.
Consider the sum
$\sum_{n\in\N} \rappl{\lambda xx}{\mset{\cdots\rappl{\lambda xx}{\mset y}\cdots}}$
where each summand consists of $n$ successive linear applications of the identity
to the variable $y$: then by simultaneous reduction of all redexes in each component,
each summand yields $y$, so the result should be $\sum_{n\in\N} y$ which is not
defined unless the semiring of coefficients is complete in some sense.

Those considerations apply to linear logic as well as to $\lambda$-calculus.
We will use proof nets \cite{Girard87} as the syntax for proofs of
multiplicative exponential linear logic (MELL).
The target of Taylor expansion is then in promotion-free differential nets \cite{ER05},
which we call \emph{resource nets} in the following, by analogy with the resource
$\lambda$-calculus:
these form the multilinear fragment of differential linear logic.

\begin{figure}[t]
\begin{center}
	\begin{tikzpicture}[scale=.6,baseline=(cdots.base)]
	\node[lien] (oc) at (4,0) {$\oc$};
	\draw[reseau] (0.5,0.5)--(0.5,1.5)--(4.5,1.5)--(4.5,0.5)--cycle;
	\node at (2,0.25) {$\cdots$};
	\draw (oc)--(0.25,0)--(0.25,1.75)--(4.75,1.75)--(4.75,0)--(oc);
	\coordinate[above=.5ex] (in) at (oc.center);
	\node[lien] (d1) at (1,-0.75){$?$};
	\node[lien] (d2) at (3,-0.75){$?$};
	\draw[multi](d1)--(d1|-oc) (d1|-in)--(1,0.5);
	\draw[multi](d2)--(d2|-oc) (d2|-in)--(3,0.5);
	\draw(oc)--(4,0.5);
	\draw(d1)--(1,-1.5);\draw(d2)--(3,-1.5);\draw(oc)--(4,-1.5);
	\node (cdots) at (2,-0.75) {$\cdots$};
	\node at (2.5,1.0){$P$};
\end{tikzpicture}
\qquad
expands to
\qquad
$\sum_{n\in\N}\frac{1}{n!}
\begin{tikzpicture}[scale=.6,baseline=(cdots.base)]
	\draw[reseau](0.5,0)--(0.5,1)--(4.5,1)--(4.5,0)--cycle;
	\draw(2.2,-0.3)node{$\cdots$};
	\draw(2.5,0.5)node{$\taylor(P)$};
	
	\draw(5,0.5)node{$\stackrel{n}{\cdots}$}; 
	
	\draw[reseau](5.5,0)--(5.5,1)--(9.5,1)--(9.5,0)--cycle;
	\draw(7.5,0.5)node{$\taylor(P)$};
	\draw(6.8,-0.3)node{$\cdots$};
	
	\node[lien] (c1) at (3.5,-1.75){$?$};
	\node[lien] (c2) at (5.5,-1.75){$?$};
	\node[lien] (cc) at (6.5,-1.75){$!$};
	\node[above=2pt of c1, points] {$\ldots$};
	\node[above=2pt of c2, points] {$\ldots$};
	\node[above=2pt of cc, points] {$\ldots$};
	\node (cdots) at (4.5,-1.75) {$\cdots$};
	
	\draw(c1)--(3.5,-2.5);
	\draw(c2)--(5.5,-2.5);
	\draw(cc)--(6.5,-2.5);
	
	\draw[multi](1,0) to [out=270,in=125] (c1);
	\draw[multi](3,0) to [out=270,in=125] (c2);
	\draw(4,0) to [out=270,in=125] (cc);
	
	\draw[multi](6,0) to [out=270,in=55] (c1);
	\draw[multi](8,0) to [out=270,in=55] (c2);
	\draw(9,0) to [out=270,in=55] (cc);
	
\end{tikzpicture}$
\end{center}
\caption{Taylor expansion of a promotion box 
(thick wires denote an arbitrary number of wires)}
\label{fig:taylor}
\end{figure}

In linear logic, Taylor expansion consists in replacing duplicable subnets,
embodied by promotion boxes, with explicit copies, as in Figure~\ref{fig:taylor}:
if we take $n$
copies of the box, the main port of the box is replaced with an $n$-ary $\oc$-link, 
while the $\wn$-links at the border of the box collect all copies of the
corresponding auxiliary ports.
Again, to follow a single cut elimination step in $P$, 
it is necessary to reduce an arbitrary number of copies.
And unrestricted parallel cut elimination in an infinite sum of resource nets
is broken, as one can easily construct an infinite family of 
nets, all reducing to the same resource net $p$ in a single step of parallel
cut elimination: see Figure~\ref{fig:collapse}.

\begin{figure}[t]
\begin{center}
\begin{tikzpicture}[
	baseline=(cdots.base),
	rounded corners,
]
	\node (ax1) at (0.5,0.5) [lien] {ax};
	\node (cut1) at (1.5,0) [lien] {cut};
	\node (ax2) at (3.5,0.5) [lien] {ax};
	\node (cut2) at (4.5,0) [lien] {cut};
	\draw (0,0) to (0,0.5)--(ax1)
		--(1,0.5)--(1,0)--(cut1)
		--(2,0)--(2,0.2);
	\node (cdots) at (2.5,0.25){$\stackrel{n}{\cdots}$};
	\draw(3,0.3)--(3,0.5)--(ax2)
		--(4,0.5)--(4,0)--(cut2)
		--(5,0)--(5,0.5);
	\node at (5,0.75) {$p$};
	\draw[reseau](4.5,0.5)--(5.5,0.5)--(5.5,1)--(4.5,1)--cycle;
\end{tikzpicture}
\end{center}
\caption{Example of a family of nets, all reducing to a single net $p$,
by the parallel elimination of axiom cuts.}
\label{fig:collapse}
\end{figure}

\subsection{Our approach: taming the combinatorial explosion of antireduction}

\label{subsection:approach}

The problem of convergence of series of linear approximants under reduction 
was first tackled by Ehrhard and Regnier, for 
the normalization of Taylor expansion of ordinary $\lambda$-terms \cite{ER08}.
Their argument relies on a uniformity property, specific to the pure
$\lambda$-calculus: the support of the Taylor expansion of a $\lambda$-term
forms a clique in some fixed coherence space of resource terms.
This method cannot be adapted to proof nets: there is no coherence relation on
differential nets such that all supports of Taylor expansions are cliques
\cite[Section~V.4.1]{Tasson09}.

An alternative method to ensure convergence without any uniformity hypothesis was first
developed by Ehrhard for typed terms in a $\lambda$-calculus extended with
linear combinations of terms \cite{Ehrhard10}: there, the presence of sums also
forbade the existence of a suitable coherence relation.
This method can be generalized to strongly normalizable \cite{PTV16}, or even
weakly normalizable \cite{Vaux17} terms.
One striking feature of this approach is that it concentrates on
the support (\emph{i.e.} the set of terms having non-zero coefficients) 
of the Taylor expansion.
In each case, one shows that, given a normal resource term $t$ and a
$\lambda$-term $M$, there are finitely many terms $s$, such that:
\begin{itemize}
	\item the coefficient of $s$ in $\taylor(M)$ is non zero; and
	\item the coefficient of $t$ in the normal form of $s$ is non zero.
\end{itemize}
This allows to normalize the Taylor expansion:
simply normalize in each component, then compute the sum,
which is component-wise finite.

The second author then remarked that
the same could be done for $\beta$-reduction \cite{Vaux17},
even without any uniformity, typing or normalizability requirement.
Indeed, writing $s\toto t$ if $s$ and $t$ are resource terms such that
$t$ appears in the support of a parallel reduct of $s$, 
the size of $s$ is bounded by a function of the size of $t$
\emph{and} the height of $s$.
So, given that if $s$ appears in $\taylor(M)$
then its height is bounded by that of $M$,
it follows that, for a fixed resource term $t$ there are finitely 
many terms $s$ in the support of $\taylor(M)$ such that $s\toto t$:
in short, parallel reduction is always well-defined on the Taylor expansion
of a $\lambda$-term.

Our purpose in the present paper is to develop a similar technique for MELL
proof nets: we show that one can bound the size of a resource net 
$p$ by a function of the size of any of its parallel reducts, and of an
additional quantity on $p$, yet to be defined.
The main challenge is indeed to circumvent the lack of inductive structure in
proof nets: in such a graphical syntax, there is no structural notion of
height.

We claim that a side condition on switching paths, \emph{i.e.} paths in the
sense of Danos--Regnier's correctness criterion \cite{DR89}, is an appropriate
replacement.
Backing this claim, there are first some intuitions:
\begin{itemize}
	\item the main culprits for the unbounded loss of size in reduction
		are the chains of consecutive cuts, as in Figure~\ref{fig:collapse};
	\item we want the validity of our side condition to be stable
		under reduction so, rather than chains of cuts,
		we should consider the length of switching paths;
	\item indeed, if $p$ reduces to $q$ \emph{via}
		cut elimination, then the switching paths of $q$
		are somehow related with those of $p$;
	\item and the switching paths of a resource net in $\taylor(P)$
		are somehow related with those of $P$.
\end{itemize}
In the following we will establish precise formulations of 
those last two points: we study the structure of switching
paths through cut elimination in Section~\ref{section:vc};
and we describe the switching paths of the elements of $\taylor(P)$
in Section~\ref{section:Taylor}.

\begin{figure}[t]
	\begin{center}
		\begin{tikzpicture}[
				baseline=(cdots.base),
				rounded corners,
			]
			\node(1-1) at (0,0) [lien] {$+$};
			\node(cut1) at (0.5,-0.5) [lien] {cut};
			\node(bot-1) at (1,0) [lien] {$-$};
			\draw(1-1)--(0,-0.5)--(cut1)--(1,-0.5)--(bot-1);
			
			\node(1-n) at (3,0) [lien] {$+$};
			\node(cutn) at (3.5,-0.5) [lien] {cut};
			\node(bot-n) at (4,0) [lien] {$-$};
			\draw(1-n)--(3,-0.5)--(cutn)--(4,-0.5)--(bot-n);

			\node (cdots) at (2,-0.25){${\cdots}$};

			\draw[reseau](4.5,-0.5)--(5.5,-0.5)--(5.5,0)--(4.5,0)--cycle;
			\node at (5,-0.25){$p$};

		 \end{tikzpicture}
	\end{center}
	\caption{Evanescent cuts: 
		here each $(+)$ node can denote a tensor unit $\one$ or a
		coweakening (a nullary $\oc$-link), and then the corresponding $(-)$ node
		should be the dual unit $\bot$ or a weakening (a nullary $\wn$-link).
		Then the depicted net reduces to $p$ in one parallel cut elimination step.}
	\label{fig:jumps_collapse}
\end{figure}

In presence of multiplicative units, or of weakenings (nullary $\wn$-links)
and coweakenings (nullary $\oc$-links), we must also take special 
care of another kind of cuts, that we call \definitive{evanescent cuts}:
when a cut between such nullary links is eliminated, it simply vanishes,
leaving the rest of the net untouched, as in Figure~\ref{fig:jumps_collapse},
which is obviously an obstacle for our purpose.\footnote{
	The treatment of weakenings is indeed the main novelty of the present 
	extended version over our conference paper \cite{CV18}.
}

In order to deal with nullary links, a well known trick is to attach each
weakening (or $\bot$-link) to another node in the net: switching paths can then
follow such jumps, which is useful to characterize exactly those nets
that come from proof trees \cite[Appendix A.2]{Girard96}.
Here we will rely on this structure to control the effect of eliminating
evanescent cuts on the size of a net.

In all our exposition, we adopt a particular presentation of nets:
we consider $n$-ary exponential links rather than separate
(co)dereliction and (co)contraction, as this allows to reduce the dynamics
of resource nets to that of multiplicative linear logic (MLL) proof nets.\footnote{
			In other words, we adhere to a version of linear logic proof nets and
			resource nets which is sometimes called \emph{nouvelle syntaxe}, although
			it dates back to Regnier's PhD thesis \cite{Regnier92}.
			For the linear logic \emph{connoisseur}, this is already 
			apparent in Figure~\ref{fig:taylor}.
			See also the discussion in our conclusion 
			(Section~\ref{section:conclusion}).
}

\subsection{Outline}

In Section~\ref{section:nets}, we first introduce MLL proof nets formally,
in the term-based syntax of Ehrhard \cite{Ehrhard14}.
We define the parallel cut elimination relation $\toto$ in this setting,
that we decompose into multiplicative reduction $\totom$,
axiom-cut reduction $\totoa$
and evanescent reduction $\totoe$.
We also present the notion of switching path for this syntax,
and introduce the two quantities that will be our main objects of study in the following:
\begin{itemize}
	\item the maximum number $\jd(p)$ of $\bot$-links that jump to a common target;
	\item the maximum length $\ln(p)$ of any switching path in the net $p$.
\end{itemize}
Let us mention that typing plays absolutely no role in our approach, so we do
not even consider formulas of linear logic in our exposition:
we will rely on the geometrical structure of nets only.

We show in Section~\ref{section:size} that, if $p\totom q$, $p\totoa q$ 
or $p\totoe q$ then the size of $p$
is bounded by a function of $\ln(p)$, $\jd(p)$, and the size of $q$.
In order to be able to iterate this combinatorial argument, we must show that,
given bounds for $\ln(p)$ and $\jd(p)$, we can infer bounds on $\ln(q)$ and $\jd(q)$:
this is the subject of Sections~\ref{section:vc} and \ref{section:qi}.

Section~\ref{section:vc} is dedicated to the proof that we can bound
$\ln(q)$ by a function of $\ln(p)$: the main case is the
multiplicative reduction, as this may create new switching paths in $q$ that we
must relate with those in $p$.
In this task, we concentrate on the notion of \emph{slipknot}: a pair of
residuals of a cut of $p$ occurring in a path of $q$.
Slipknots are essential in understanding how switching paths 
are structured after cut elimination: this analysis is motivated by a technical
requirement of our approach, but it can also be considered as a contribution 
to the theory of MLL nets \emph{per se}.

In Section~\ref{section:qi}, we show that $\jd(q)$ is bounded by a function 
of $\ln(p)$ and $\jd(p)$: the critical case here is that of chains of jumps 
between evanescent cuts.

We leverage all of the above results in Section~\ref{section:general}, to
generalize them to a reduction $p\toto q$, or even an arbitrary sequence 
of reductions.
In particular, if $p\toto q$ then the size of $p$ is bounded by a function of
the size of $q$ and of $\ln(p)$ and $\jd(p)$.
Again, this result is motivated by 
the study of quantitative semantics, but it is essentially a theorem about MLL.

We establish the applicability of our approach to the Taylor expansion
of MELL proof nets in Section~\ref{section:Taylor}:
we show that if $p$ is a resource net of $\taylor(P)$, then $\ln(p)$
is bounded by a function of the size of $P$,
and $\jd(p)$ is bounded by the size of $P$.

Finally, we discuss the scope of our results in the concluding Section~\ref{section:conclusion}.

\section{Definitions}
\label{section:nets}

We provide here the minimal definitions necessary for us to work with MLL
proof nets.
As stated before, let us stress the fact that 
the choice of MLL is not decisive for the development of
Sections~\ref{section:nets} to \ref{section:general}.
The reader can check that we rely on three ingredients only:
\begin{itemize}
	\item the definition of switching paths;
	\item the fact that multiplicative reduction amounts to plug 
		bijectively the premises of a $\otimes$-link 
		with those of $\parr$-link (in the nullary case,
		evanescent cuts simply vanish);
	\item the definition of jumps and how they are affected 
		by cut elimination.
\end{itemize}
The results of those sections are thus directly applicable to 
resource nets, thanks to our choice of generalized exponential links:
this will be done in Section~\ref{section:Taylor}.

\subsection{Nets}

A proof net is usually presented as a graphical object 
such as that of Figure~\ref{fig:net-shape}.
Following Ehrhard \cite{Ehrhard14, Ehrhard16}, 
we will rely on a term syntax for denoting such nets.
This is based on a quite standard trichotomy: a proof net can be divided into a top
layer of axioms, followed by trees of connectives, down to cuts between the
conclusions of some trees.

\begin{figure}[t]
	\begin{center}
		\begin{tikzpicture}[
			baseline=(cdots.base),
			rounded corners,
			scale = 1,
		]
			\node(bot0) at (0,0) [lien] {$\bot$};
			\node(par)  at (1,0) [lien] {$\parr$};
			\node(ax-x) at (1,1) [lien] {ax};
			\node(cutm) at (2,-.75) [lien] {cut};
			\node(onet) at (2.5,1) [lien] {$\one$};
			\node(ten)  at (3,0) [lien] {$\otimes$};
			\node(ax-y) at (4,1) [lien] {ax};
			\node(bot1) at (4,0) [lien] {$\bot$};
			\node(cuta) at (5.25,0) [lien] {cut};
			\node(bota) at (6,1) [lien] {$\bot$};
			\node(onee) at (7,0) [lien] {$\one$};
			\node(cute) at (8,-.75) [lien] {cut};
			\node(bote) at (9,0) [lien] {$\bot$};
			\node(one)  at (10,0) [lien] {$\one$};

			\draw (bot0)--(bot0|- 0,-1);
			\draw (ax-x) to [out=west,in=north west] (par);
			\draw (ax-x) to [out=east,in=north east] (par);
			\draw (bot1)--(bot1|- 0,-1);
			\draw (par) to [out=south, in=west] (cutm);
			\draw (ten) to [out=south, in=east] (cutm);
			\draw (onet) to [out=south, in=north west] (ten);
			\draw (ax-y) to [out=west, in=north east] (ten);
			\draw (ax-y) to [out=east, in=west] (cuta);
			\draw (bota) to [out=south, in=east] (cuta);
			\draw (onee) to [out=south, in=west] (cute);
			\draw (bote) to [out=south, in=east] (cute);
			\draw (one)--(one|- 0,-1);
	\end{tikzpicture}
	\end{center}
	\caption{An example of multiplicative net}
	\label{fig:net-shape}
\end{figure}

We will represent the conclusions of axiom rules by variables: the duality
between two conclusions of an axiom rule is given by an involution $x\mapsto
\bar{x}$ over the set $\fv$ of variables.
Our nets will be finite families of trees and cuts, where trees are inductively
generated from variables by the application of MLL connectives, of arbitrary
arity: $\otimes(t_1,\dotsc,t_n)$ and $\parr(t_1,\dotsc,t_n)$.
A tree thus represents a conclusion of a net, together with the nodes above it,
up to axiom conclusions. A cut is then given by the pair of trees $\cut{t_1}{t_2}$,
whose conclusions it cuts together.
In order to distinguish between various occurrences of nullary connectives
$\one=\otimes()$ and $\bot=\parr()$, we will index them with labels taken from 
sets $\units$ and $\counits$.

Formally, the set of \definitive{raw trees} (denoted by $s$, $t$, \emph{etc.})
is generated as follows:
\[
	t \recdef x\mid \unit{\cw}\mid \counit{\w}\mid \otimes(t_1,\dots,t_n)\mid \parr(t_1,\dots,t_n)
\]
where $x$ ranges over $\fv$, $\cw$ ranges over $\units$,
$\w$ ranges over $\counits$ and we require $n\not=0$
in the two last cases. We assume $\fv$,
$\units$ and $\counits$ are pairwise disjoint and all three are
denumerably infinite.
We will always identify a nullary connective tree $\unit\cw$ or $\counit\w$
with its label $\cw$ or $\w$, so that $\atoms=\fv\cup\units\cup\counits$ is just the set
of \definitive{atomic trees}.
We will generally use letters $x$, $y$, $z$ for variables, $\w$
for the elements of $\counits$, $\cw$ for the elements of $\units$,
and $s$, $t$, $u$, $v$ for arbitrary raw trees.

We write $\ST(t)$ for the set of \definitive{subtrees} of a given raw tree $t$,
which is defined inductively in the natural way :
if $t\in\atoms$, then $\ST(t)=\{t\}$;
if $t=\diamond(t_1,\dots,t_n)$ with $\diamond\in\{\otimes,\parr\}$, then
$\ST(t)=\{t\}\cup\bigcup_{i\in\{1,\dots,n\}}\ST(t_i)$.
We moreover write $\fv(t)$ for $\ST(t)\cap\fv$,
and similarly for $\units(t)$, $\counits(t)$ and $\atoms(t)$.
A \definitive{tree} is then a raw tree $t$ such that if
$\diamond(t_1,\dotsc,t_n)\in\ST(t)$ then the sets $\atoms(t_i)$ for $1\le i\le n$
are pairwise disjoint: in other words, each atom occurs at most once in
$t$.
As a consequence, each subtree $u\in\ST(t)$ occurs exactly once in a tree $t$.

A $\definitive{cut}$ is an unordered pair $c=\cut{t}{s}$ of trees
such that $\atoms(t)\cap\atoms(s)=\emptyset$,
and then we set $\ST(c)=\ST(t)\cup\ST(s)$, and similarly for 
$\fv(c)$, $\units(c)$, $\counits(c)$ and $\atoms(c)$.
Note that, in the absence of typing, we do not put any compatibility
requirement on cut trees.

Given a set $A$, we denote by $\vec{a}$ any finite family $(a_i)_{i\in I}\in
A^I$ of elements of $A$.
In general, we abusively identify $\vec{a}$ with any enumeration
$(a_1,\dotsc,a_n)\in A^n$ of its elements, and we may even write
$\vec{a}=a_1,\dotsc,a_n$ in this case; moreover, we simply write $\vec{a},\vec{b}$ for the 
concatenation of families $\vec a$ and $\vec b$
(whose index set is implicitly the sum of the index sets of $\vec a$ and $\vec b$).
We may also write, e.g.,
$a_i\in\vec a$, identifying the family $\vec a$ with its support set.
Since we only consider families of pairwise distinct elements, such abuse of
notation is generally harmless: in this case,
the only difference between $(a_i)_{i\in I}$ and its support set $\{ a_i\tq i\in I\}$
is whether the bijection $i\mapsto a_i$ is part of the data or not.
If $f$ is a function from $A$ to any powerset,
we extend it to families in the obvious way, setting $f(\vec a)=\bigcup_{a\in\vec a} f(a)$.
E.g., if $\vec{\gamma}$ is a family of trees or cuts we write
$\fv(\vec{\gamma})=\bigcup_{\gamma\in\vec{\gamma}}\fv(\gamma)$.

An MLL \definitive{bare proof net} is a pair $p=(\vec{c};\vec{t})$
of a finite family $\vec{c}$ of pairwise distinct cuts and a finite family
$\vec{t}$ of pairwise distinct trees such that:
for all distinct cuts or trees $\gamma,\gamma'\in  \vec c\cup \vec t$,
$\atoms(\gamma)\cap \atoms(\gamma')=\emptyset$;
and $\fv(p)=\fv(\vec c)\cup\fv(\vec t)$ is closed under the involution
$x\mapsto \bar x$.
We write $\cuts(p)=\vec c$ for the family of cuts of $p$.
For any tree, cut or bare proof net $\gamma$,
we define the \definitive{size} of $\gamma$ as $\size(\gamma)=\card\ST(\gamma)$:
graphically, $\size(p)$ is nothing but the number of wires in $p$. 

\begin{rem}\label{rem:alpha}
	In a graphical structure such as that of Figure~\ref{fig:net-shape},
	the interface (\emph{i.e.} the set of extremities of dangling wires, which
	represent the conclusions of the net) is relevant:
	in particular, cut elimination preserves this interface.
	So, in $p=(\vec c;\vec t)$,
	$\vec t$ is intrinsically a family, whose index set is precisely the
	interface of the structure.

	On the other hand, the rest of the net should be considered up to
	isomorphism: in our case, this amounts to the reindexing of cuts,
	and the renaming of atoms, preserving the duality involution on variables.
	We may call $\alpha$-equivalence the corresponding equivalence 
	relation on bare proof nets, as it has the very same status
	as the renaming of bound variables in the ordinary 
	$\lambda$-calculus.
	In particular, $\vec c$ should be considered as a set,
	although we introduce it as a family here,
	just because it will be convenient to treat the concatenation $\vec c,\vec t$
	as a family of cuts and trees in the following.

	The reader may check that bare proof nets quotiented by 
	$\alpha$-equivalence, as introduced above,
	are exactly the usual (untyped) proof structures 
	for MLL (with connectives of arbitrary arity).
	We keep this quotient implicit whenever possible in the remaining:
	in any case, $\alpha$-equivalence preserves the size of nets, as well as the
	length of paths to be introduced later.\footnote{
		Note that this situation differs slightly from the
		case of interaction nets~\cite{lafont:interactionnets},
		where explicit axioms and cuts links are missing
		and there is no top-down orientation \emph{a priori}.
		Term syntaxes have been proposed for
		those~\cite[among others]{ms:chaminets,fm:calcinets}
		but the correspondence is less immediate:
		it must be restricted to deadlock-free 
		interaction nets and, in addition to $\alpha$-equivalence,
		one must introduce some mechanism to deal with implicit axiom-cut
		elimination in the application of reduction rules.
	}

\end{rem}

As announced in our introduction, our nets will be equipped with
jumps from $\counit{}$ nodes to other nodes.
An MLL \definitive{proof net} will thus be the data of a bare proof net $p$ and
of a \textit{jump function} $\jump : \counits(p)\to \ST(p)$.
We will often identify a proof net with its underlying bare net $p$, 
and then write $\jump_p$ for the associated jump function.
Figure~\ref{fig:example-net} presents such a net, whose underlying 
graphical structure is that of Figure~\ref{fig:net-shape}.

\begin{figure}[t]
	\begin{center}
		\begin{tikzpicture}[
			baseline=(cdots.base),
			rounded corners,
			scale = 1,
		]
			\node(bot0) at (0,0) [lien] {$\bot$};
			\node at (bot0.south) [below right, atome] {$\w_1$};
			\node(par)  at (1,0) [lien] {$\parr$};
			\node(ax-x) at (1,1) [lien] {ax};
			\node(x) at (ax-x.west) [above left,atome] {$x$};
			\node    at (ax-x.east) [above right,atome] {$\bar x$};
			\node(cutm) at (2,-.75) [lien] {cut};
			\node(onet) at (2.5,1) [lien] {$\one$};
			\node at (onet.south) [below left, atome] {$\cw_1$};
			\node(ten)  at (3,0) [lien] {$\otimes$};
			\node(ax-y) at (4,1) [lien] {ax};
			\node(y) at (ax-y.west) [above left,atome] {$y$};
			\node(by)at (ax-y.east) [above right,atome] {$\bar y$};
			\node(bot1) at (4,0) [lien] {$\bot$};
			\node at (bot1.south) [below right, atome] {$\w_2$};
			\node(cuta) at (5.25,0) [lien] {cut};
			\node(bota) at (6,1) [lien] {$\bot$};
			\node at (bota.south) [below right, atome] {$\w_3$};
			\node(onee) at (7,0) [lien] {$\one$};
			\node at (onee.south) [below left, atome] {$\cw_2$};
			\node(cute) at (8,-.75) [lien] {cut};
			\node(bote) at (9,0) [lien] {$\bot$};
			\node at (bote.south) [below right, atome] {$\w_4$};
			\node(one)  at (10,0) [lien] {$\one$};
			\node at (one.south) [below right, atome] {$\cw_3$};

			\draw (bot0)--(bot0|- 0,-1);
			\draw (ax-x) to [out=west,in=north west] (par);
			\draw (ax-x) to [out=east,in=north east] (par);
			\draw (bot1)--(bot1|- 0,-1);
			\draw (par) to [out=south, in=west] (cutm);
			\draw (ten) to [out=south, in=east] (cutm);
			\draw (onet) to [out=south, in=north west] (ten);
			\draw (ax-y) to [out=west, in=north east] (ten);
			\draw (ax-y) to [out=east, in=west] (cuta);
			\draw (bota) to [out=south, in=east] (cuta);
			\draw (onee) to [out=south, in=west] (cute);
			\draw (bote) to [out=south, in=east] (cute);
			\draw (one)--(one|- 0,-1);

			\draw[saut] (bot0) to [out=north east, in=west] (par);
			\draw[saut] (bot1) to [out=north, in=south] ($(ax-y.east)+(.1,-.1)$);
			\draw[saut] (bota) to [out=east, in=north] (onee);
			\draw[saut] (bote) to [out=north east, in=north west] (one);
	\end{tikzpicture}
	\end{center}
	\caption{The net $p_0=(\cut{\parr(x,\bar x)}{\otimes(\unit{\lambda_1},y)},\cut{\bar y}{\counit{\mu_3}},\cut{\unit{\lambda_2}}{\counit{\mu_4}};\counit{\mu_1},\counit{\mu_2},\unit{\lambda_3})$
	with $\jump_{p_0}:\mu_1\mapsto \parr(x,\bar x),\mu_2\mapsto \bar y,\mu_3\mapsto \lambda_2,\mu_4\mapsto\lambda_3$.}
	\label{fig:example-net}
\end{figure}

We can already introduce the first of our two key quantities:
the \definitive{jump degree} $\jd(p)$ of a net $p$.
We first define the jump degree of any tree $t\in\ST(p)$,
setting $\jd_p(t)=\card\{\w\in\counits(p)\tq \jump_p(\w)=t\}$.
We will often write $\jd(t)$ instead of $\jd_p(t)$ if $p$ is clear from
the context.
Then we set $\jd(p)=\max\{\jd(t)\tq t\in\ST(p)\}$.

\begin{rem}\label{rem:jumps}
	Originally, jumps were introduced as \emph{pis aller} for the
	characterization of sequentializable proof nets \cite[Appendix A.2]{Girard96}.
	Indeed, in presence of multiplicative units and without jumps,
	Danos--Regnier's correctness criterion, requiring the connectedness and
	acyclicity of switching graphs, fails to accept some proof nets
	corresponding to proof trees using the $\bot$-rule.
	So, to characterize all sequentializable nets,
	one has to require the existence of a jump function that makes 
	all the switching graphs connected and acyclic.
	This additional structure is somewhat arbitrary,
	and it restores a form of bureaucratic sequentiality:
	distinct jumping functions on the same bare net
	may yield equivalent sequentializations.
	And there is no satisfactory solution to that issue:
	if a notion of proof net could capture proof equivalence in MLL with units,
	then deciding the identity of proof nets in that setting would not be tractable,
	simply because proof equivalence is PSPACE-complete
	\cite{hh:mll}.

	A simple, consensual alternative is to forget about jumps and
	drop the connectedness requirement: the acyclicity criterion 
	characterizes exactly those nets that are sequentializable
	using an additional $mix$-rule, corresponding 
	to the parallel juxtaposition of nets.
	In particular, this weaker requirement is sufficient to
	avoid the problematic cases of cut elimination.
	This is the approach we adopt: 
	after defining switchings and paths in Subsection~\ref{subsection:paths},
	we will restrict our attention to acyclic nets only.
	
	We associate jump functions with nets nonetheless, but for a different purpose:
	bounding the jump degree in a net will allow us to control the combinatorics
	of the elimination of evanescent cuts, in situations such as 
	that of Figure~\ref{fig:jumps_collapse}.
	Since switching paths can follow jumps, and those paths will be our main
	focus throughout the remaining of the paper, we chose to consider proof nets
	as equipped with jumps by default.
	Still, the reader should be aware that the main subject of interest is the
	underlying structure of bare proof nets.

\end{rem}

\subsection{Cut elimination}\label{cut-elim}

A $\definitive{reducible cut}$ is a cut $\cut{t}{s}$
such that:
\begin{itemize}
	\item $t$ is a variable and $\bar t\not\in\fv(s)$ (\definitive{axiom cut});
	\item or $t\in\units$ and $s\in\counits$, and $\jump(s)\not\in\{t,s\}$
		(\definitive{evanescent cut});
	\item or we can write $t=\otimes(t_1,\dots,t_n)$ and $s=\parr(s_1,\dots,s_n)$
		(\definitive{multiplicative cut}).
\end{itemize}

The \emph{substitution} $\gamma[t/x]$ of a tree $t$ for a variable $x$ in a tree
(or cut, or family of trees and/or cuts) $\gamma$ is defined in the usual way, with the additional
assumption that $\atoms(t)$ and $\atoms(\gamma)$ are disjoint.
By the definition of trees, this substitution is essentially linear:
each variable $x$ appears at most once in $\gamma$.

There are three basic cut elimination steps defined for bare proof nets, one for each kind of reducible cut:
\begin{itemize}
	\item the elimination of a multiplicative cut yields a family of cuts:
		we write
		\[\cut{\otimes(t_1,\dots,t_n)}{\parr(s_1,\dots,s_n)}
		\tom
		\cut{t_1}{s_1},\dotsc,\cut{t_n}{s_n}\]
		that we extend to nets by setting
		$(c,\vec{c};\vec{t})\tom(\vec c',\vec c;\vec t)$
		whenever $c\tom \vec c'$;
	\item the elimination of an axiom cut generates a substitution:
		we write
		$(\cut{x}{s},\overrightarrow{c};\overrightarrow{t})
		\toa
		(\overrightarrow{c};\overrightarrow{t})[s/\bar{x}]$ 
		whenever $\bar x\not\in\fv(s)$;
	\item the elimination of an evanescent cut just deletes that cut:
		we write $(\cut{\cw}{\w},\vec c;\vec t) \toe (\vec c;\vec t)$
		whenever $\jump_p(\w)\not\in\{\w,\cw\}$.\footnote{
			Since the cuts of a net are given as a family rather than a sequence,
			the order in which we write cuts in this definition is not relevant:
			despite our abusive notation,
			the reduced cut need not be the first in the enumeration,
			because this enumeration is not fixed.
		}
\end{itemize}
Then we write $p\to p'$ if $p\tom p'$ or $p\toa p'$ or $p\toe p'$.
Observe that if $p\to p'$ then $\atoms(p')\subseteq\atoms(p)$.

In order to define cut elimination between proof nets (and not bare
proof nets only), we need to modify the jump function.
Indeed, assume $p=(\cut{t}{s},\vec c;\vec t)$ and $p'$ is obtained from $p$
by reducing the cut $\cut{t}{s}$.
Then $\counits(p')\subseteq\counits(p)$, but if $\w\in\counits(p')$ and 
$\jump_p(\w)=t$, we need to redefine $\jump_{p'}(\w)$, as in general $t\not\in\ST(p')$.
This is done as follows:
\begin{itemize}
	\item if
		$\cut{t}{s}=\cut{\otimes(t_1,\dots,t_n)}{\parr(s_1,\dots,s_n)}$
		then for all $\w\in\counits(p)=\counits(p')$ such that
		$\jump_p(\w)=\otimes(t_1,\dots,t_n)$ (resp.
		$\parr(s_1,\dots,s_n)$), we set $\jump_{p'}(\w)= t_1$ (resp.
		$s_1$);%
		\footnote{
			We arbitrarily redirect the jumps to the first subtree to simplify the
			presentation, but we could equivalently have set $\jump_{p'}(\w)$ to be any of
			the immediate subtrees of $\jump_p(\w)$, non deterministically:
			in fact, this slight generalization is necessary to deal with cut
			elimination in resource nets.

			Other strategies for choosing the destination of a jump exist in the literature:
			for instance, one may be tempted to systematically redirect jumps to atoms,
			as it is done by Tortora de Falco~\cite[Definition 1.3.3]{Tortora00}.
			But this kind of transformation is not local
			and it would certainly complicate our arguments.
		}
	\item if $\cut{t}{s}=\cut{t}{x}$ and $p'$ is obtained from $p$ 
		by substituting $t$ for $\bar x$, then for all $\w\in\counits(p)=\counits(p')$ such that
		$\jump_p\in\{x,\bar x\}$, we set $\jump_{p'}(\w)=t$;
	\item if $\cut{t}{s}=\cut{\w}{\cw}$, then for all
		$\w'\in\counits(p')=\counits(p)\setminus\{\mu\}$ such that $\jump_p(\w')\in\{\w,\cw\}$, we set
		$\jump_{p'}(\w')=\jump_p(\w)$.
\end{itemize}
The result of eliminating the multiplicative cut (resp. axiom cut; evanescent cut)
of the net $p_0$ of Figure~\ref{fig:example-net} is depicted in
Figure~\ref{fig:redm} (resp. Figure~\ref{fig:reda}; Figure~\ref{fig:rede}).

\begin{figure}[t]
	\begin{center}
		\begin{tikzpicture}[
			baseline=(cdots.base),
			rounded corners,
			scale = 1,
		]
			\node(bot0) at (0,0) [lien] {$\bot$};
			\node at (bot0.south) [below right, atome] {$\mu_1$};
			\node(ax-x) at (1,1) [lien] {ax};
			\node(x) at (ax-x.west) [above left,atome] {$x$};
			\node    at (ax-x.east) [above right,atome] {$\bar x$};
			\node(cutm1) at (1,0) [lien] {cut};
			\node(cutm2) at (2.5,-.5) [lien] {cut};
			\node(onet) at (2.5,1) [lien] {$\one$};
			\node at (onet.south) [below right, atome] {$\lambda_1$};
			\node(ax-y) at (4,1) [lien] {ax};
			\node(y) at (ax-y.west) [above left,atome] {$y$};
			\node(by)at (ax-y.east) [above right,atome] {$\bar y$};
			\node(bot1) at (4,0) [lien] {$\bot$};
			\node at (bot1.south) [below right, atome] {$\mu_2$};
			\node(cuta) at (5.25,0) [lien] {cut};
			\node(bota) at (6,1) [lien] {$\bot$};
			\node at (bota.south) [below right, atome] {$\mu_3$};
			\node(onee) at (7,0) [lien] {$\one$};
			\node at (onee.south) [below left, atome] {$\lambda_2$};
			\node(cute) at (8,-.75) [lien] {cut};
			\node(bote) at (9,0) [lien] {$\bot$};
			\node at (bote.south) [below right, atome] {$\mu_4$};
			\node(one)  at (10,0) [lien] {$\one$};
			\node at (one.south) [below right, atome] {$\lambda_3$};

			\draw (bot0)--(bot0|- 0,-1);
			\draw (ax-x) to [out=west,in=west] (cutm1);
			\draw (ax-x) to [out=east,in=west] (cutm2);
			\draw (bot1)--(bot1|- 0,-1);
			\draw[par dessus] (onet) to [out=south, in=east] (cutm1);
			\draw (ax-y) to [out=west, in=east] (cutm2);
			\draw (ax-y) to [out=east, in=west] (cuta);
			\draw (bota) to [out=south, in=east] (cuta);
			\draw (onee) to [out=south, in=west] (cute);
			\draw (bote) to [out=south, in=east] (cute);
			\draw (one)--(one|- 0,-1);

			\draw[saut] (bot0) to [out=north, in=south west] (x);
			\draw[saut] (bot1) to [out=north, in=south] ($(ax-y.east)+(.1,-.1)$);
			\draw[saut] (bota) to [out=east, in=north] (onee);
			\draw[saut] (bote) to [out=north east, in=north west] (one);
	\end{tikzpicture}
	\end{center}
	\caption{The net $p'_m=(\cut {x}{\unit{\lambda_1}},\cut{\bar x}{y},\cut{\bar y}{\counit{\mu_3}},\cut{\unit{\lambda_2}}{\counit{\mu_4}};\counit{\mu_1},\counit{\mu_2},\unit{\lambda_3})$
	with $\jump_{p'_m}:\mu_1\mapsto x,\mu_2\mapsto \bar y,\mu_3\mapsto \lambda_2,\mu_4\mapsto\lambda_3$,
	so that $p_0\tom p'_m$.}
	\label{fig:redm}
\end{figure}

\begin{figure}[t]
	\begin{center}
		\begin{tikzpicture}[
			baseline=(cdots.base),
			rounded corners,
			scale = 1,
		]
			\node(bot0) at (0,0) [lien] {$\bot$};
			\node at (bot0.south) [below right, atome] {$\mu_1$};
			\node(par)  at (1,0) [lien] {$\parr$};
			\node(ax-x) at (1,1) [lien] {ax};
			\node(x) at (ax-x.west) [above left,atome] {$x$};
			\node    at (ax-x.east) [above right,atome] {$\bar x$};
			\node(cutm) at (2,-.75) [lien] {cut};
			\node(onet) at (2.5,1) [lien] {$\one$};
			\node at (onet.south) [below left, atome] {$\lambda_1$};
			\node(ten)  at (3,0) [lien] {$\otimes$};
			\node(bot1) at (4,0) [lien] {$\bot$};
			\node at (bot1.south) [below right, atome] {$\mu_2$};
			\node(bota) at (3.5,1) [lien] {$\bot$};
			\node at (bota.south) [below left, atome] {$\mu_3$};
			\node(onee) at (5.5,0) [lien] {$\one$};
			\node at (onee.south) [below left, atome] {$\lambda_2$};
			\node(cute) at (6.5,-.75) [lien] {cut};
			\node(bote) at (7.5,0) [lien] {$\bot$};
			\node at (bote.south) [below right, atome] {$\mu_4$};
			\node(one)  at (8.5,0) [lien] {$\one$};
			\node at (one.south) [below right, atome] {$\lambda_3$};

			\draw (bot0)--(bot0|- 0,-1);
			\draw (ax-x) to [out=west,in=north west] (par);
			\draw (ax-x) to [out=east,in=north east] (par);
			\draw (bot1)--(bot1|- 0,-1);
			\draw (par) to [out=south, in=west] (cutm);
			\draw (ten) to [out=south, in=east] (cutm);
			\draw (onet) to [out=south, in=north west] (ten);
			\draw (bota) to [out=south, in=north east] (ten);
			\draw (onee) to [out=south, in=west] (cute);
			\draw (bote) to [out=south, in=east] (cute);
			\draw (one)--(one|- 0,-1);

			\draw[saut] (bot0) to [out=north east, in=west] (par);
			\draw[saut] (bot1) to [out=north, in=south east] (bota);
			\draw[saut] (bota) to [out=east, in=north] (onee);
			\draw[saut] (bote) to [out=north east, in=north west] (one);
	\end{tikzpicture}
	\end{center}
	\caption{The net $p'_a=(\cut{\parr(x,\bar x)}{\otimes(\unit{\lambda_1},\counit{\mu_3}},\cut{\unit{\lambda_2}}{\counit{\mu_4}};\counit{\mu_1},\counit{\mu_2},\unit{\lambda_3})$
		with $\jump_{p'_a}:\mu_1\mapsto \parr(x,\bar x),\mu_2\mapsto \mu_3,\mu_3\mapsto \lambda_2,\mu_4\mapsto\lambda_3$,
		so that $p_0\toa p'_a$.}
	\label{fig:reda}
\end{figure}

\begin{figure}[t]
	\begin{center}
		\begin{tikzpicture}[
			baseline=(cdots.base),
			rounded corners,
			scale = 1,
		]
			\node(bot0) at (0,0) [lien] {$\bot$};
			\node at (bot0.south) [below right, atome] {$\mu_1$};
			\node(par)  at (1,0) [lien] {$\parr$};
			\node(ax-x) at (1,1) [lien] {ax};
			\node(x) at (ax-x.west) [above left,atome] {$x$};
			\node    at (ax-x.east) [above right,atome] {$\bar x$};
			\node(cutm) at (2,-.75) [lien] {cut};
			\node(onet) at (2.5,1) [lien] {$\one$};
			\node at (onet.south) [below left, atome] {$\lambda_1$};
			\node(ten)  at (3,0) [lien] {$\otimes$};
			\node(ax-y) at (4,1) [lien] {ax};
			\node(y) at (ax-y.west) [above left,atome] {$y$};
			\node(by)at (ax-y.east) [above right,atome] {$\bar y$};
			\node(bot1) at (4,0) [lien] {$\bot$};
			\node at (bot1.south) [below right, atome] {$\mu_2$};
			\node(cuta) at (5.25,0) [lien] {cut};
			\node(bota) at (6,1) [lien] {$\bot$};
			\node at (bota.south) [below right, atome] {$\mu_3$};
			\node(one)  at (7.5,0) [lien] {$\one$};
			\node at (one.south) [below right, atome] {$\lambda_3$};

			\draw (bot0)--(bot0|- 0,-1);
			\draw (ax-x) to [out=west,in=north west] (par);
			\draw (ax-x) to [out=east,in=north east] (par);
			\draw (bot1)--(bot1|- 0,-1);
			\draw (par) to [out=south, in=west] (cutm);
			\draw (ten) to [out=south, in=east] (cutm);
			\draw (onet) to [out=south, in=north west] (ten);
			\draw (ax-y) to [out=west, in=north east] (ten);
			\draw (ax-y) to [out=east, in=west] (cuta);
			\draw (bota) to [out=south, in=east] (cuta);
			\draw (one)--(one|- 0,-1);

			\draw[saut] (bot0) to [out=north east, in=west] (par);
			\draw[saut] (bot1) to [out=north, in=south] ($(ax-y.east)+(.1,-.1)$);
			\draw[saut] (bota) to [out=east, in=north]  (one);
	\end{tikzpicture}
	\end{center}
	\caption{The net $p'_e=(\cut{\parr(x,\bar x)}{\otimes(\unit{\lambda_1},y)},\cut{\bar y}{\counit{\mu_3}};\counit{\mu_1},\counit{\mu_2},\unit{\lambda_3})$
	with $\jump_{p'_e}:\mu_1\mapsto \parr(x,\bar x),\mu_2\mapsto \bar y,\mu_3\mapsto \lambda_3$,
	so that $p_0\toe p'_e$.}
	\label{fig:rede}
\end{figure}

\begin{figure}[t]
	\begin{center}
		\begin{tikzpicture}[
			baseline=(cdots.base),
			rounded corners,
			scale = 1,
		]
			\node(bot0) at (0,0) [lien] {$\bot$};
			\node at (bot0.south) [below right, atome] {$\mu_1$};
			\node(ax-x) at (1,1) [lien] {ax};
			\node(x) at (ax-x.west) [above left,atome] {$x$};
			\node    at (ax-x.east) [above right,atome] {$\bar x$};
			\node(cutm1) at (1,0) [lien] {cut};
			\node(cutm2) at (2.5,-.5) [lien] {cut};
			\node(onet) at (2.5,1) [lien] {$\one$};
			\node at (onet.south) [below left, atome] {$\lambda_1$};
			\node(bot1) at (4,0) [lien] {$\bot$};
			\node at (bot1.south) [below right, atome] {$\mu_2$};
			\node(bota) at (3.5,1) [lien] {$\bot$};
			\node at (bota.south) [below left, atome] {$\mu_3$};
			\node(one)  at (5.5,0) [lien] {$\one$};
			\node at (one.south) [below right, atome] {$\lambda_3$};

			\draw (bot0)--(bot0|- 0,-1);
			\draw (ax-x) to [out=west,in=west] (cutm1);
			\draw (ax-x) to [out=east,in=west] (cutm2);
			\draw (bot1)--(bot1|- 0,-1);
			\draw[par dessus] (onet) to [out=south, in=east] (cutm1);
			\draw (bota) to [out=south, in=east] (cutm2);
			\draw (one)--(one|- 0,-1);

			\draw[saut] (bot0) to [out=north, in=south west] (x);
			\draw[saut] (bot1) to [out=north, in=south east] (bota);
			\draw[saut] (bota) to [out=east, in=north] (one);
	\end{tikzpicture}
	\end{center}
	\caption{The net $p'=(\cut{x}{\unit{\lambda_1}},\cut{\bar x}{\counit{\mu_3}};\counit{\mu_1},\counit{\mu_2},\unit{\lambda_3})$
		with $\jump_{p'}:\mu_1\mapsto x,\mu_2\mapsto \mu_3,\mu_3\mapsto\lambda_3$,
		so that $p_0\toto p'$.}
	\label{fig:toto}
\end{figure}

We are in fact interested in the simultaneous elimination of any number of
reducible cuts, that we describe as follows.
We write $p\toto p'$ if
\[p=(c_1,\dotsc,c_k,\cut{x_1}{t_1},\dotsc,\cut{x_n}{t_n},\cut{\w_1}{\cw_1},\dots,\cut{\w_l}{\cw_l},\vec c;\vec t)\]
and \[p'=(\vec c'_1,\dotsc,\vec c'_k,\vec c;\vec t)[t_1/\bar x_1]\cdots[t_n/\bar x_n],\]
assuming that:
\begin{itemize}
	\item $c_i\tom\vec c'_i$ for $1\le i\le k$, 
	\item $\bar x_i\not\in\{x_1,\dotsc,x_n\}$ and $\bar x_i\not\in\fv(t_j)$ for $1\le i\le j\le n$, and
	\item $\jump_p(\w_i)\not\in\{\w_j,\cw_j\}$ for $1\le i\le j\le l$.
\end{itemize}
It should be clear that $p'$ is then obtained from $p$ by successively
eliminating the particular cuts we have selected, thus performing $k$ steps of
$\tom$, $n$ steps of $\toa$, $l$ steps of $\toe$, \emph{in no particular order}:
indeed, one can check that any two elimination steps of distinct cuts commute
on the nose.
The resulting jump function $\jump_{p'}$ can be described directly,
by inspecting the possible cases for $\jump_{p}(\w')$ with $\w'\in\counits(p')$:
\begin{itemize}
	\item if $c_i=\cut{\otimes(u_1,\dotsc,u_r)}{\parr(v_1,\dotsc,v_r)}$
		and, e.g., $\jump_p(\w')=\otimes(u_1,\dotsc,u_r)$
		then $\jump_{p'}(\w')=u_1[t_1/\bar x_1]\cdots[t_n/\bar x_n]$;
	\item if $\jump_p(\w')\in\{x_i,\bar x_i\}$
		then $\jump_{p'}(\w')=t_i[t_{i+1}/\bar x_{i+1}]\cdots[t_n/\bar x_n]$;
	\item if $\jump_p(\w')\in\{\w_i,\cw_i\}$
		then $\jump_{p'}(\w')=\rho(i)[t_1/\bar x_1]\cdots[t_n/\bar x_n]$,
		where $\rho:\{1,\dotsc,l\}\to\ST(p)$ is the redirection function
		inductively defined by $\rho(j)=\rho(i)$ if $\jump_{p}(\w_j)\in\{\w_i,\cw_i\}$
		(in which case $i<j$)
		and $\rho(j)=\jump_{p}(\w_j)$ otherwise;
	\item otherwise $\jump_{p'}(\w')=\jump_p(\w')[t_1/\bar x_1]\cdots[t_n/\bar x_n]$.
\end{itemize}
The result of simultaneously eliminating all the cuts
of the net $p_0$ of Figure~\ref{fig:example-net} is depicted in
Figure~\ref{fig:toto}.

This general description of parallel cut elimination is obviously not very handy.
In order not to get lost in notation, we will restrict our attention to the
particular case in which only cuts of the same nature are simultaneously
eliminated: we write $p\totom p'$ if $n=l=0$ (multiplicative cuts only), $p\totoa p$ if $k=l=0$
(axiom cuts only), and
$p\totoe p'$ if $n=k=0$ (evanescent cuts only).
Then we can decompose any parallel reduction $p\toto p'$ into three separate steps:
e.g., $p\totom \cdot \totoa \cdot \totoe p'$.\footnote{
Of course, the converse does not hold: for instance the reductions
$(\cut{\parr(x,\bar x)}{\otimes(y,z)};\bar y,\bar z)
\totom
(\cut x y,\cut {\bar x} z;\bar y,\bar z)
\totoa
(\cut y z;\bar y,\bar z)
$ cannot be performed in a single step, as the cut $\cut xy$ was newly created.
}

\subsection{Paths}
\label{subsection:paths}

In order to control the effect of parallel reduction on the size of proof nets,
we rely on a side condition involving the length of switching
paths, \emph{i.e.} paths in the sense of Danos--Regnier's correctness criterion \cite{DR89}.

Let us write $\parrs(p)$ (resp. $\tensors(p)$) for the 
set of the subtrees of $p$ of the form $\parr(t_1,\dotsc,t_n)$
(resp. $\otimes(t_1,\dotsc,t_n)$).
In our setting, a \definitive{switching} of a net $p$
is a map $I : \parrs(p)\to \ST(p)$ such that,
for each $t=\parr(t_1,\dots t_n)\in\parrs(p)$, $I(t)\in\{t_1,\dots,t_n\}$.
Given a net $p$ and a switching $I$ of $p$, the associated 
\definitive{switching graph} is the unoriented graph with 
vertices in $\ST(p)$ and edges given as follows:
\begin{itemize}
	\item one \definitive{axiom edge} $\sim_{\{x,\bar x\}}$ for each axiom $\{x,\bar x\}\subseteq\fv(p)$,
		connecting $x$ and $\bar x$;
	\item one \definitive{$\otimes$-edge} $\sim_{t,t_i}$ for each pair $(t,t_i)$ with
		$t=\otimes(t_1,\dotsc,t_n)\in\tensors(p)$,
		connecting $t$ and $t_i$;
	\item one \definitive{$\parr$-edge} $\sim_{t}$ for each $t\in\parrs(p)$,
		connecting $t$ and $I(t)$;
	\item one \definitive{jump edge} $\sim_\w$ for each $\w\in\counits(p)$,
		connecting $\w$ and $\jump_p(\w)$;
	\item one \definitive{cut edge} $\sim_c$ for each cut $c=\cut{t}{s}\in\cuts(p)$,
		connecting $t$ and $s$.
\end{itemize}
Whenever necessary, we may write, e.g., $\sim_e^{p}$ or $\sim_e^{p,I}$
for the edge $\sim_e$ to make the underlying net and switching explicit.
On the other hand, we will often simply write $e$ instead of $\sim_e$
for denoting an edge.
Each edge $e$ induces a symmetric relation, involving at most two subtrees of $p$:
we write $t\sim_e u$, and say $t$ and $u$ are \definitive{adjacent}
whenever $t$ and $u$ are connected by $e$.
\emph{A priori}, it might be the case that distinct edges induce the same relation:
for instance, if $c=\cut{x}{\bar x}\in\cuts(p)$,
we have $x\sim_c \bar x$ as well as $x\sim_{\{x,\bar x\}} \bar x$;
and if $\jump_p(\w)=\w'$ and $\jump_p(\w')=\w$,
we have $\w\sim_{\w} \w'$ as well as $\w\sim_{\w'} \w'$.
Avoiding such cycles is precisely the purpose of the correctness
criterion.

Given a switching $I$ in $p$, an \definitive{$I$-path} is the data of a tree
$t_0\in\ST(p)$ and of a sequence $(e_1,\dotsc,e_n)$ of pairwise distinct
consecutive edges starting from $t_0$:
in other words, we require that there exist $t_1,\dotsc,t_n\in\ST(p)$ such
that, for each $i\in\{1,\dots,n\}$, $t_{i-1}\sim_{e_i} t_i$.%
\footnote{
	In standard terminology of graph theory, an $I$-path in $p$ is a trail
	in the switching graph induced by $p$ and $I$.
}
For instance, if $p=(;\otimes(x,y),\parr(\bar y,\bar x))$ and
$I(\parr(\bar y,\bar x))=\bar x$, then the chain of adjacencies
		$\parr(\bar x,\bar y)
		\sim_{\parr(\bar x,\bar y)} \bar x
		\sim_{\{x,\bar x\}} x
		\sim_{\otimes(x,y),x}\otimes(x,y)
		\sim_{\otimes(x,y),y} y
		\sim_{\{y,\bar y\}}\bar y$
		defines a maximal $I$-path in $p$ (see Figure~\ref{fig:path}).
\begin{figure}[t]
\begin{center}
	\begin{tikzpicture}
		\node[lien] (ten) at (-.8,-.8) {$\otimes$};
		\node[lien] (par) at (.8,-.8) {$\parr$};
		\node[lien] (axl) at (-.8,0) {ax};
		\node[lien] (axr) at (.8,0) {ax};
		\draw (axl.west) to[out=210,in=150]
			(ten.west);
		\node[anchor=base] at (-1.2,0) {$x$};
		\draw (axl.east) to[out=-30,in=150]
			(par.west);
		\node[anchor=base] at (-.4,0) {$\bar x$};
		\draw[par dessus] (axr.west) to[out=210,in=30]
			(ten.east);
		\node[anchor=base] at (.4,0) {$y$};
		\draw (axr.east) to[out=-30,in=30]
			node[sloped,pos=.8]{$\mid$} (par.east);
		\node[anchor=base] at (1.2,0) {$\bar y$};
		\draw (ten.south) -- (-.8,-1.3);
		\draw (par.south) -- ( .8,-1.3);
		\draw [chemin, dashed, red] 
			(.7,-1.3) .. controls +(-.1,.4) and +(.4,0) ..
			(-.8,-.1) .. controls +(-.6,0) and +(-.6,0) ..
			(-.8,-.95) .. controls +(.6,0) and +(-.6,0) ..
			(.8,-.1) .. controls +(.4,0) and +(0,.3) ..
			(1.3,-.6)
			;
	\end{tikzpicture}
\end{center}
\caption{A path in $(;\otimes(x,y),\parr(\bar y,\bar x))$
	for switching $I:\parr(\bar y,\bar x)\mapsto \bar x$
	(we strike out the other premise).
}
\label{fig:path}
\end{figure}

We write $\paths(p,I)$ for the set of all $I$-paths in $p$.
We write $\chi:t_0\leadsto_{p,I} t_n$ whenever
$\chi=t_0\sim_{e_1}\cdots\sim_{e_n}t_n$
is an $I$-path from $t_0$ to $t_n$ in $p$:
with these notations, we say $\chi$ \definitive{visits}
the trees $t_0,\dotsc,t_n$, 
and $\chi$ \definitive{crosses} the edges
$e_1,\dotsc,e_n$;
moreover we write $\ln(\chi)=n$ for the length of $\chi$.
A \definitive{subpath} of $\chi$ is any $I$-path of the form
$t_i\sim_{e_{i+1}}\cdots \sim_{e_{i+k}} t_{i+k}$.
We write $\reverse{\chi}$ for the reverse $I$-path:
$\reverse{\chi}=t_n\sim_{e_n}\cdots\sim_{e_0}t_0$.
The \definitive{empty path} from $t\in\ST(p)$
is the only $\epsilon_t:t\leadsto_{p,I} t$ of length $0$.
We say $I$-paths $\chi$ and $\chi'$ are \definitive{disjoint}
if no edge is crossed by both $\chi$ and $\chi'$.
Observe that disjoint paths can visit common trees:
in particular, if $\chi:t\leadsto_{p,I} s$ and $\chi':s\leadsto_{p,I} u$
are disjoint, we write $\chi\chi':t\leadsto_{p,I} u$ for the concatenation of
$\chi$ and $\chi'$.

We call \definitive{path} in $p$ any $I$-path for $I$ a switching of $p$,
we write $\paths(p)$ for the set of all paths in $p$,
and we denote by $\ln(p)=\max\{\ln(\chi)\tq\chi\in\paths(p)\}$
the maximal length of a path in $p$.
We then write $\chi:t\leadsto_p s$ if $\chi\in\paths(p)$ is a path from $t$ to $s$ in $p$,
and we write $t\leadsto_p s$, or simply $t\leadsto s$, whenever such a path exists.
This relation on $\ST(p)$ is reflexive (\emph{via} the empty path $\epsilon_t:t\leadsto t$)
and symmetric (\emph{via} reversing paths).
Observe that if	$\chi_1,\cdots,\chi_n\in\paths(p)$ are pairwise disjoint,
then there exists $I$ such that $\chi_i\in\paths(p,I)$ for $1\le i\le n$.
This does not make the relation $\leadsto_p$ transitive:
for instance, if $p=(;x,\parr(\bar x,\bar y),y)$,
we have paths $x\leadsto \parr(\bar x,\bar y)$ and $\parr(\bar x,\bar y)\leadsto y$,
but both cross $\parr(\bar x,\bar y)$ for different switchings,
and indeed, there is no path $x\leadsto y$.

We say a net $p$ is \definitive{acyclic} if, for all $\chi\in \paths(p)$
and $t\in\ST(p)$, $\chi$ visits $t$ at most once:
in other words, there is no (non-empty) \definitive{cycle} $\chi:t\leadsto t$.
Notice that, given any family of pairwise distinct cuts in an acyclic net $p$,
it is always possible to satisfy the side conditions on
free variables and on jumps necessary to reduce these cuts in parallel
(provided each cut in the family has the shape of a multiplicative,
axiom or evanescent cut).
Moreover, it is a very standard result that acyclicity is preserved 
by cut elimination:

\begin{lem}
	\label{lemma:acyclicity}
	If $p'$ is obtained from $p$ by cut elimination 
	and $p$ is acyclic then so is $p'$.
\end{lem}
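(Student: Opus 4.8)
The plan is to reduce the statement to the three basic one-cut reductions $\tom$, $\toa$, $\toe$, and within each to the contrapositive: assuming $p'$ has a cycle $\chi'$, exhibit a cycle in $p$. The key tool is a \emph{path-lifting} map sending each $I'$-path of $p'$ to an $I$-path of $p$ for a suitable switching $I$ derived from $I'$. The general $\toto$ case then follows by iterating, since the excerpt already notes that a parallel reduction decomposes as $\totom\cdot\totoa\cdot\totoe$, each of which is a sequence of single steps; so it suffices to treat $p\to p'$.

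For the \textbf{multiplicative case} $p=(\cut{\otimes(t_1,\dots,t_n)}{\parr(s_1,\dots,s_n)},\vec c;\vec t)\tom (\cut{t_1}{s_1},\dots,\cut{t_n}{s_n},\vec c;\vec t)=p'$, given a switching $I'$ of $p'$ I would define a switching $I$ of $p$: on $\parr$-trees other than $\parr(s_1,\dots,s_n)$, set $I=I'$ (note $\ST(p')\subseteq\ST(p)$, since reducing this cut removes only the two connective nodes and redirects the relevant jumps to $t_1$ or $s_1$); on $\parr(s_1,\dots,s_n)$ itself, choose $I\bigl(\parr(s_1,\dots,s_n)\bigr)=s_k$, where $k$ is determined by the way the hypothetical cycle $\chi'$ uses the new cut edges $\sim_{\cut{t_i}{s_i}}$. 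The crucial combinatorial observation is that each cut edge $\sim_{\cut{t_i}{s_i}}$ of $p'$ can be replayed in $p$ by the three-edge detour $t_i\sim_{\otimes(\dots),t_i}\otimes(\dots)\sim_{\cut{\otimes}{\parr}}\parr(\dots)\sim_{s_i}s_i$ — but only for the single index $i=k$ for which $\sim_{s_k}$ is the chosen $\parr$-edge; for all other indices the $\parr$-edge is unavailable. One must therefore argue that a cycle in an acyclic-free... rather, that a minimal cycle $\chi'$ in $p'$ crosses at most one of the new cut edges: indeed $\chi'$ is a trail, so it crosses each edge at most once, and if it crossed two distinct $\sim_{\cut{t_i}{s_i}}$ and $\sim_{\cut{t_j}{s_j}}$ then — since in $p'$ the only vertices adjacent to a $t_i$ or $s_i$ via these edges are each other — cutting $\chi'$ at those points already yields a shorter cycle or a cycle through the connective nodes is forced; a short case analysis pins this down. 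With $k$ so chosen, lift $\chi'$ edge by edge, replacing $\sim_{\cut{t_k}{s_k}}$ by the detour above and keeping every other edge (axiom, $\otimes$, $\parr$, jump, cut) unchanged; the detour edges are fresh and pairwise distinct from the rest, so the result is a genuine $I$-trail in $p$, and it revisits its starting vertex, i.e.\ a cycle. Hence $p$ is not acyclic.

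The \textbf{axiom case} $(\cut{x}{s},\vec c;\vec t)\toa(\vec c;\vec t)[s/\bar x]=p'$ and the \textbf{evanescent case} $(\cut{\cw}{\w},\vec c;\vec t)\toe(\vec c;\vec t)=p'$ are easier and I would treat them by the same lifting idea. For $\toe$, $\ST(p')\subseteq\ST(p)$ and the only change to the switching graph is the deletion of the two vertices $\cw,\w$, the cut edge $\sim_{\cut{\cw}{\w}}$, and the jump edge $\sim_\w$, together with the redirection of jumps $\w'$ with $\jump_p(\w')\in\{\w,\cw\}$ to $\jump_p(\w)$; any cycle of $p'$ is already a cycle of $p$ after re-expanding each such redirected jump edge $\sim_{\w'}$ from $\w'$ to $\jump_p(\w)$ into the two-edge path $\w'\sim_{\w'}\cw\sim_{\cut{\cw}{\w}}\w\sim_{\w}\jump_p(\w)$ — but a cycle uses $\sim_{\w'}$ at most once, so at most one such re-expansion is needed and the resulting edges are fresh; again a short argument shows a minimal cycle touches at most one redirected jump, and the lift goes through. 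For $\toa$, the substitution $[s/\bar x]$ identifies, in the switching graph, the vertex formerly named $\bar x$ with the root of $s$ and grafts $\ST(s)$ there; a cycle of $p'$ that passes through this grafted occurrence of $s$ lifts to a cycle of $p$ by inserting, at the unique point where the old axiom edge $\sim_{\{x,\bar x\}}$ would have sat, the two-edge detour through $x$ and the cut edge $\sim_{\cut{x}{s}}$ — and jumps redirected from $x$ or $\bar x$ to $s$ are handled analogously.

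I expect the \textbf{main obstacle} to be the bookkeeping in the multiplicative case: one must be careful that the chosen switching $I$ on $\parr(s_1,\dots,s_n)$ is consistent with \emph{every} use the cycle makes of that $\parr$-node (a trail may visit $\parr(s_1,\dots,s_n)$ through several of its incident edges, but crosses the $\parr$-edge at most once, which is exactly what makes a single choice of $k$ sufficient), and that the detour edges inserted are genuinely new and do not clash with edges already crossed by $\chi'$, so that the lifted walk is still a trail (pairwise-distinct edges) and not merely a walk. Everything else — the $\toa$ and $\toe$ cases, and the passage from single steps to $\toto$ — is routine once the multiplicative lifting is set up. Since this is a very standard fact about MLL proof nets, I would keep the write-up brief, citing the folklore status and spelling out only the multiplicative lift in detail.
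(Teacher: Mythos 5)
Your overall strategy --- contrapositive plus edge-by-edge lifting of a cycle from $p'$ back to $p$, with the multiplicative case as the only delicate one --- is exactly what the paper intends (its ``proof'' is a one-line deferral, with the real machinery developed for arbitrary paths in Section~\ref{section:vc}). However, the pivot of your multiplicative case is a false claim: a minimal cycle in $p'$ need \emph{not} cross at most one of the new cut edges $\cut{t_i}{s_i}$. Example~\ref{ex:cycle1} (Figure~\ref{fig:cycle1}) is a counterexample: for $p=(\cut{\otimes(x_0,\dotsc,x_n)}{\parr(\bar x_1,\dotsc,\bar x_n,\bar x_0)};)$, the reduct $p'$ is a single alternating cycle of axioms and residual cuts, and its unique (hence minimal) cycle crosses all $n+1$ residuals; no ``shorter cycle'' is available. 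You cannot dismiss this by saying $p$ is cyclic there: inside your argument by contradiction you may assume $p$ acyclic, but then you must still show that \emph{whatever} cycle $p'$ exhibits yields a cycle in $p$, and your only route to that conclusion is the lift, which as written handles only the one-residual case.

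The missing idea is the following. When the cycle crosses two or more residuals of the eliminated cut $c$, do not lift the whole cycle: pick two crossings that are consecutive (the segment $\xi$ between them crosses no residual of $c$), and close the lift of $\xi$ in $p$ by a short detour through the trees of $c$ --- either $t_i\sim_{(\otimes,t_i)}\otimes(\vec t)\sim_{(\otimes,t_j)}t_j$ when $\xi$ joins two $\otimes$-premises (no $\parr$-edge needed), or $t_i\sim\otimes(\vec t)\sim_c\parr(\vec s)\sim s_j$ when it joins a $\otimes$-premise to a $\parr$-premise (only one $\parr$-edge needed, so a single switching choice suffices). A parity argument on how the crossings alternate between the $\otimes$ and $\parr$ sides shows that at least one segment of one of these two closable types must occur. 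This is precisely the bridge/bounce analysis of Lemmas~\ref{lem:bridge}, \ref{lem:bounce} and \ref{lem:loop}, and without it (or an equivalent) your ``short case analysis'' does not go through. A secondary, fixable omission: in the multiplicative lift you keep jump edges ``unchanged'', but jumps that pointed to $\otimes(\vec t)$ or $\parr(\vec s)$ in $p$ are redirected to $t_1$ or $s_1$ in $p'$, so their endpoints must be re-anchored exactly as you do in the $\toe$ and $\toa$ cases.
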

\begin{proof}
	It suffices to check that if $p\to p'$ then 
	any cycle in $p'$ induces a cycle in $p$.\footnote{
		We do not detail the proof as it is quite standard.
		We will moreover generalize this technique to all paths
		(and not only cycles) in the next section.
	}
\end{proof}

From now on, we consider acyclic nets only.

\section{Bounding the size of antireducts: three kinds of cuts}\label{section:size}

In this section, we show that the loss of size during a parallel reduction
$p\totom q$, $p\totoa q$ or $p\totoe q$
is directly controlled by $\ln(p)$, $\jd(p)$ and $\size(q)$:
more precisely, we show that the ratio $\frac{\size(p)}{\size(q)}$ is bounded
by a function of $\ln(p)$ and $\jd(p)$ in each case.

\subsection{Elimination of multiplicative cuts}
The elimination of multiplicative cuts 
cannot decrease the size by more than a half:
\begin{lem}\label{mult}
	If $p\totom q$ then $\size(p)\leq 2\size(q)$.
\end{lem}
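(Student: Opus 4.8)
The plan is to pin down exactly how $\size$ changes under a single multiplicative step, and then argue that the number of cuts that get eliminated in a parallel step is itself controlled by $\size(q)$.

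First I would unfold the definitions and exploit disjointness. Write $p=(c_1,\dots,c_k,\vec c;\vec t)$ where $c_1,\dots,c_k$ are the eliminated multiplicative cuts and $q=(\vec c'_1,\dots,\vec c'_k,\vec c;\vec t)$ with $c_i\tom\vec c'_i$. Since in a bare proof net distinct trees and cuts have pairwise disjoint sets of atoms, hence pairwise disjoint sets of subtrees, $\size$ of a net is simply the sum of the cardinalities of $\ST$ of its constituent trees and cuts; moreover $\vec c$ and $\vec t$ are literally unchanged by the reduction (only the jump function is redirected, which does not affect $\ST$), so the whole computation reduces to comparing $\card\ST(c_i)$ with $\card\ST(\vec c'_i)$ for each $i$.

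Next I would carry out this local computation. For an eliminated cut $c_i=\cut{\otimes(t^i_1,\dots,t^i_{n_i})}{\parr(s^i_1,\dots,s^i_{n_i})}$, note that the multiplicative case forces $n_i\geq 1$ (the nullary case being an \emph{evanescent}, not a multiplicative, cut). Using $\ST(\otimes(u_1,\dots,u_n))=\{\otimes(u_1,\dots,u_n)\}\cup\bigcup_j\ST(u_j)$ with the union disjoint, and likewise for $\parr$, one gets
\[
	\card\ST(c_i)=2+\sum_{j=1}^{n_i}\bigl(\card\ST(t^i_j)+\card\ST(s^i_j)\bigr)=2+\card\ST(\vec c'_i),
\]
the extra $2$ accounting for the $\otimes$- and $\parr$-nodes that vanish. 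Summing over $i$ and using disjointness once more yields $\size(p)=\size(q)+2k$.

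Finally, to deduce $\size(p)\leq 2\size(q)$ it suffices to show $2k\leq\size(q)$. Each residual family $\vec c'_i$ contains at least one cut $\cut{t^i_j}{s^i_j}$ (as $n_i\geq 1$), and each cut has at least two subtrees since each of its two trees is nonempty, so $\card\ST(\vec c'_i)\geq 2$; by disjointness of the residual cuts, $\size(q)\geq\sum_{i=1}^k\card\ST(\vec c'_i)\geq 2k$, and we conclude. There is essentially no obstacle here beyond the bookkeeping with disjoint unions of subtree sets; the only point requiring a little care is the remark that multiplicative cuts have arity at least $1$ on both sides, which guarantees each eliminated cut leaves behind a nonempty residual in $q$.
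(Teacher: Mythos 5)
Your proof is correct and follows essentially the same route as the paper's: the key computation is that eliminating a multiplicative cut $c_i$ loses exactly the two connective nodes, so $\card\ST(c_i)=2+\card\ST(\vec c'_i)$, and the bound then follows because each residual family is nonempty (the nullary case being evanescent, not multiplicative) and thus contributes at least $2$ to $\size(q)$. The paper merely phrases this as the local inequality $\size(c)\le 2\size(\vec c')$ per eliminated cut rather than summing the global difference $2k$, but the content is identical.
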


\begin{proof}
	Since the elimination of a multiplicative cut does not affect the rest of the
	(bare) net, it is sufficient to observe that if $c\tom\vec c$
	then $\size(c)=2+\size(\vec c)\le 2\size(\vec c)$.\footnote{
		This is due to the fact that we distinguish between 
		strict connectives and their nullary versions,
		that are subject to evanescent reductions.
	}
\end{proof}
So in this case, $\ln(p)$ and $\jd(p)$ actually play no rôle.

\subsection{Elimination of axiom cuts}

Observe that:
\begin{itemize}
	\item if $x\in\fv(\gamma)$ then $\size(\gamma[t/x])=\size(\gamma)+\size(t)-1$;
	\item if $x\not\in\fv(\gamma)$ then $\size(\gamma[t/x])=\size(\gamma)$.
\end{itemize}
It follows that, in the elimination of a single axiom cut
$p\toa q$, we have $\size(p)=\size(q)+2$.
But we cannot reproduce the proof of Lemma~\ref{mult} for $\totoa$:
as depicted in Figure~\ref{fig:collapse},
a chain of arbitrarily many axiom cuts may reduce into a single wire.
We can bound the length of those chains by $\ln(p)$, however,
and this allows us to bound the loss of size during reduction.

\begin{lem}
	\label{ax} If $p\totoa q$ then $\size(p)\leq (\ln(p)+1)\size(q)$.
\end{lem}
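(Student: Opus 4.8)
The plan is to analyze how the substitutions $[t_1/\bar x_1]\cdots[t_n/\bar x_n]$ affect size, and to account for the loss of size by tracing chains of axiom cuts back to switching paths in $p$. Since $p\totoa q$ means
\[p=(\cut{x_1}{t_1},\dotsc,\cut{x_n}{t_n},\vec c;\vec t)\]
with $\bar x_i\notin\{x_1,\dotsc,x_n\}$ and $\bar x_i\notin\fv(t_j)$ for $i\le j$, and
\[q=(\vec c;\vec t)[t_1/\bar x_1]\cdots[t_n/\bar x_n],\]
I would first compute $\size(p)$ exactly. Each cut $\cut{x_i}{t_i}$ contributes $1+\size(t_i)$ to $\size(p)$, and $\size(\vec c;\vec t)$ contributes the rest, so $\size(p)=\size(\vec c;\vec t)+\sum_{i=1}^n(1+\size(t_i))$. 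On the other side, applying the substitutions to $(\vec c;\vec t)$ replaces each occurrence of $\bar x_i$ (which occurs at most once, by linearity of trees) by the appropriate substituted form of $t_i$; using the size identities recalled just before the lemma, $\size(q)$ is obtained from $\size(\vec c;\vec t)$ by adding, for each $i$ such that $\bar x_i$ actually occurs in the relevant net, a quantity $\size(t_i')-1$ where $t_i'$ is the substituted tree $t_i[t_{i+1}/\bar x_{i+1}]\cdots[t_n/\bar x_n]$. The subtlety is that $\bar x_i$ may fail to occur in $(\vec c;\vec t)$ because it was itself the premise $x_j$ of another axiom cut, i.e. $\bar x_i = x_j$; then $t_i$ gets "absorbed" into $t_j'$ rather than appearing separately.

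The key combinatorial idea is to organize the axiom cuts into chains. Define a relation on $\{1,\dotsc,n\}$ by linking $i$ to $j$ when $\bar x_i = x_j$ (each $i$ is linked to at most one $j$, and to at most one predecessor, since the $x_i$ are pairwise distinct and occur in distinct cuts); this partitions the cuts into disjoint chains $\cut{x_{i_1}}{t_{i_1}},\dotsc,\cut{x_{i_m}}{t_{i_m}}$ where $\bar x_{i_k}=x_{i_{k+1}}$. For each such chain, only the head survives a reference in $(\vec c;\vec t)$: the dual $\bar x_{i_1}$ of the first premise is the only one that can occur in $\vec c,\vec t$ (otherwise it is consumed by the next cut in the chain), while the other trees $t_{i_2},\dotsc,t_{i_m}$ get nested inside. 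Each chain of length $m$ thus contributes $\sum_k(1+\size(t_{i_k}))$ to $\size(p)$ but, if $\bar x_{i_1}$ occurs somewhere in $q$'s underlying net, contributes only something like $\size(t'_{i_1})-1 = \sum_k\size(t_{i_k}) - m$ extra to $\size(q)$ — and if $\bar x_{i_1}$ occurs nowhere (a dangling chain, as in Figure~\ref{fig:collapse}), it contributes nothing at all. The crucial point is then that such a chain of axiom cuts, together with the axiom edges $\sim_{\{x_{i_k},\bar x_{i_k}\}}$, forms a switching path in $p$ of length roughly $2m$ (alternating cut edges and axiom edges), so $m\le\ln(p)$; and more generally every subtree of $p$ that lies in one of the chains can be "charged" to a subtree surviving in $q$ that sits at distance at most $\ln(p)$ along a switching path. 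Counting: the portion of $p$ that disappears or is reshuffled is bounded, per surviving subtree of $q$, by a factor of $\ln(p)$, and adding back the identity contribution gives $\size(p)\le(\ln(p)+1)\size(q)$.

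The cleanest way to make this precise is probably to exhibit, for each subtree $u\in\ST(p)$, a surviving subtree $\phi(u)\in\ST(q)$ together with a bound $|\phi^{-1}(u')|\le \ln(p)+1$ for every $u'\in\ST(q)$. A subtree of $p$ either already appears in $q$ (after substitution), in which case it maps to itself, or it belongs to some $t_i$ inside a chain, in which case I push it down the chain following the axiom and cut edges until reaching a subtree that survives in $q$ — if the chain is dangling this forces me to route toward the interface or toward $\vec c$, but in any case the length of the switching path traversed is at most $\ln(p)$, bounding the number of preimages. I expect the main obstacle to be the bookkeeping around dangling chains and the degenerate case where a chain's head $\bar x_{i_1}$ is itself dual to a free variable that appears in the interface $\vec t$ rather than inside a tree: one must check that the target $\phi(u)$ can always be chosen in $\ST(q)$ and that the path realizing the charge is genuinely a switching path (it crosses only axiom and cut edges, hence is switching-independent and acyclic by hypothesis, so its length is at most $\ln(p)$). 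Once that charging scheme is set up, summing $\size(p)=\sum_{u'\in\ST(q)}|\phi^{-1}(u')|\le(\ln(p)+1)\,\size(q)$ closes the argument.
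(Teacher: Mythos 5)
Your proof is correct and follows essentially the same route as the paper's: partition the eliminated axiom cuts into maximal chains, observe that a chain of $m$ cuts induces a switching path of length $2m\le\ln(p)$, and charge the $2m$ vanishing subtrees of each chain to a distinct surviving subtree of $q$ --- the paper packages this last step as the inequality $\size(q)\ge k$ (for $k$ the number of chains), proved by exhibiting essentially the same injection from chains into $\ST(q)$ that your map $\phi$ requires. One slip worth correcting: your linking condition $\bar x_i=x_j$ is vacuous, since the side condition of parallel axiom reduction explicitly forbids $\bar x_i\in\{x_1,\dotsc,x_n\}$; the correct link is $t_i=\bar x_j$ (the tree part of one cut is the dual of the variable part of the next), and with that fix your chains coincide with the paper's $\vec c_i$, your worry about ``dangling'' chains disappears (the head variable must occur in $\vec c,\vec t$ or strictly inside another chain's terminal tree, because $\fv(p)$ is closed under duality), and the rest of the argument goes through.
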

\begin{proof}
	Assume
	$p=(\cut{x_1}{t_1},\dotsc,\cut{x_n}{t_n},\vec c;\vec s)$
	and $q=(\vec c;\vec s)[t_1/\bar x_1]\cdots[t_n/\bar x_n]$
	with $\bar x_i\not\in\{x_1,\dotsc,x_n\}$ and
	$\bar x_i\not\in\fv(t_j)$ for $1\le i\le j\le n$.
	To establish the result in this case,
	we make the chains of eliminated axiom cuts explicit.

	Due to the condition on free variables, we can partition
	$\cut{x_1}{t_1},\dotsc,\cut{x_n}{t_n}$
	into tuples $\vec c_1,\dotsc,\vec c_k$ of the shape
	$\vec c_i=(\cut{x^i_0}{\bar x^i_1},\dotsc,\cut{x^i_{n_i-1}}{\bar x^i_{n_i}},\cut{x^i_{n_i}}{t^i})$
	so that:
	\begin{itemize}
		\item $x_j^i\in\{x_1,\dotsc,x_n\}$ for $1\le i\le k$ and $0\le j\le n_i$;
		\item $\bar x_j^i\in\{t_1,\dotsc,t_n\}$ for $1\le i\le k$ and $1\le j\le n_i$;
		\item $t^i\in\{t_1,\dotsc,t_n\}$ for $1\le i\le k$;
		\item each $\vec c_i$ is maximal with this shape, 
			\emph{i.e.} $\bar x^i_0\not\in\{t_1,\dotsc,t_n\}$
			and $t^i\not\in\{\bar x_1,\dotsc,\bar x_n\}$.
	\end{itemize}
	Without loss of generality, we can moreover require that,
	if $i<i'$, then $\cut{x^i_{n_i}}{t^i}$
	occurs before $\cut{x^{i'}_{n_{i'}}}{t^{i'}}$ in the
	tuple $(\cut{x_1}{t_1},\dotsc,\cut{x_n}{t_n})$.
	Moreover observe that, by the condition on free variables,
	the order of the cuts in each $\vec c_i$ 
	is necessarily the same as in $(\cut{x_1}{t_1},\dotsc,\cut{x_n}{t_n})$.

	By a standard result on substitutions, if $x\not=y$, $x\not\in\fv(v)$
	and $y\not\in\fv(u)$ then $\gamma[u/x][v/y]=\gamma[v/y][u/x]$:
	\begin{itemize}
		\item if $i\not =i'$,
			we have $\bar x^i_j\not=\bar x^{i'}_{j'}$
			for $0\le j\le n_i$ and $0\le j'\le n_{i'}$, 
			so the substitutions $[\bar x^i_{j+1}/\bar x^i_j]$
			and $[\bar x^{i'}_{j'+1}/\bar x^{i'}_{j'}]$
			always commute for $1\le i<i'\le k$,
			$0\le j<n_i$ and $0\le j'<n_{j'}$;
		\item if $i<i'$ and $\cut{x^{i'}_{j'}}{\bar x^{i'}_{j'+1}}$ occurs before
			$\cut{x^i_{n_i}}{t^i}$ in $(\cut{x_1}{t_1},\dotsc,\cut{x_n}{t_n})$, 
			the condition on free variables
			imposes that $\bar x^{i'}_{j'}\not\in\fv(t^i)$
			so the substitutions $[t^i/\bar x^i_{n_i}]$
			and $[\bar x^{i'}_{j'+1}/\bar x^{i'}_{j'}]$
			commute in this case.
	\end{itemize}

	By iterating those two observations,
	we can reorder the substitutions in $q$ 
	and obtain:
	\begin{align*}
		q&=(\vec c;\vec s)[t_1/\bar x_1]\cdots[t_n/\bar x_n]
		\\&=
		(\vec c;\vec s)
		[\bar x^1_1/\bar x^1_0]\cdots [\bar x^1_{n_1}/\bar x^1_{n_1-1}][t^1/\bar x^1_{n_1}]
		\cdots
		[\bar x^k_1/\bar x^k_0]\cdots [\bar x^k_{n_k}/\bar x^k_{n_k-1}][t^k/\bar x^k_{n_k}]
		\\&=
		(\vec c;\vec s)[t^1/\bar x^1_0]\cdots[t^k/\bar x^k_0]
		.
	\end{align*}
	It follows that $\size(q)=\size(\vec c)+\size(\vec s)+\sum_{i=1}^k\size(t^i)-k$.
	For $1\le i\le k$, $\vec c_i$ induces a path $\bar x^0_i\leadsto t^i$
	of length $2n_i+2$ ($n_i+1$ cuts and $n_i+1$ axioms). 
	Hence $2n_i\le\ln(p)-2$ and:
	\begin{align*}
		\size(p)
		&=\size(\vec c)+\size(\vec s)+\sum_{i=1}^k(\size(t^i)+2n_i+1)
		\\
		&\le \size(\vec c)+\size(\vec s)+\sum_{i=1}^k \size(t^i)+k(\ln(p)-1)
		\\
		&\le \size(q)+k\ln(p).
	\end{align*}

	To conclude, it will be sufficient to prove that $\size(q)\ge k$.
	For $1\le i\le k$, let $A_i=\{j>i\tq \bar x^j_0\in\fv(t^i)\}$,
	and then let $A_0=\{i\tq \bar x^i_0\in\fv(\vec c,\vec s)\}$.
	It follows from the construction that $\{A_0,\dotsc,A_{k-1}\}$
	is a partition (possibly including empty sets) of $\{1,\dotsc,k\}$.
	By construction, for each $j\in A_i$, $\bar x^j_0$ 
	is a strict subtree of $t^i$: it follows that $\size(t^i)>\card A_i$.
	Now consider $q_i=(\vec c;\vec s)[t^1/\bar x^1_0]\cdots[t^i/\bar x^i_0]$
	for $0\le i\le k$ so that $q=q_k$.
	For $1\le i\le k$, we obtain $\size(q_i)=\size(q_{i-1})+\size(t^i)-1\ge\size(q_{i-1})+\card A_i$.
	Also observe that $\size(q_0)=\size(\vec c;\vec s)\ge\card A_0$.
	Then we can conclude: $\size(q)=\size(q_k)\ge\sum_{i=0}^k\card A_i=k$.
\end{proof}

\subsection{Elimination of evanescent cuts}

We now consider the case of a reduction $p\totoe q$:
we bound the maximal number of evanescent cuts appearing in $p$ 
by a function of $\ln(p)$, $\jd(p)$ and $\size(q)$.

We rely on the basic fact that if $t\in\ST(q)\subseteq\ST(p)$,
then there are at most $\jd(p)$ evanescent cuts of $p$ that jump to $t$.
The main difficulty is that an evanescent cut of $p$ can jump 
to another evanescent cut of $p$, that is also eliminated in the step $p\totoe q$.
See Figure~\ref{fig:jumps} for a graphical representation of the critical case.
To deal with this phenomenon, we observe that a sequence of cuts
$\cut{\w_1}{\cw_1},\dots,\cut{\w_n}{\cw_n}$ with $\jump_p(\w_i)\in\{\cw_{i+1},\w_{i+1}\}$
for all $i\in\{1,\dots,n-1\}$, induces a path of length at least $n$:
hence $n\leq\ln(p)$.
\begin{defi}
	We define for all $n\in\N$ and all $t\in\ST(p)$, $\I^n(t)$ as follows :
	$\I^0(t)=\jump_p^{-1}(t)=\{\w\in\counits(p)\tq \jump_p(\w)=t\}$,
	and $\I^{m+1}(t)=\{\w'\in\counits(p)\tq\jump_p(\w')\in \{\w,\cw\},
	\cut{\w}{\cw}\in\cuts(p), \w\in \I^m(t)\}$.
\end{defi}
We can already observe that $\card\I^0(t)=\jd(t)$.
This definition is parametrized by $\jump_p$, and may we write
$\I^n_p(t)$ to make the underlying net explicit.

\begin{figure}[t]
	\begin{center}
		\begin{tikzpicture}[
				decoration={brace},
				baseline=(cdots.base),
			]

			\node(t)at(9,0){$t$};

			\node[lien](w1-1)at(7,1){$\bot$};
			\node[lien](w1-n)at(7,-1){$\bot$};
			\node[lien](cw1-1)at(6,1){$1$};
			\node[lien](cw1-n)at(6,-1){$1$};
			\draw[reseau](cw1-1)--++(0,-0.5)--++(1,0)--(w1-1);
			\draw[reseau](cw1-n)--++(0,-0.5)--++(1,0)--(w1-n);

			\node at($(w1-1)+(-0.5,-1)$){$\rvdots$};

			\draw[acc]($(cw1-n)+(-0.5,-0.2)$)--($(cw1-1)+(-0.5,0.2)$);
			\node[rotate=-90, scale=0.7]at($(cw1-1)+(-1,-1)$){$\leq
			\jd(p)$};

			\draw[saut](w1-1)to[out=45,in=north](t);
			\draw[saut](w1-n)to[out=45,in=west](t);

			\node[lien](cw2-1-1)at(3,2){$1$};
			\node[lien](cw2-1-n)at(3,1){$1$};
			\node[lien](w2-1-1)at(4,2){$\bot$};
			\node[lien](w2-1-n)at(4,1){$\bot$};
			
			\node at($(w2-1-1)+(-0.5,-0.8)$){$\rvdots$};
			
			\node[lien](cw2-n-n)at(3,-2){$1$};
			\node[lien](cw2-n-1)at(3,-1){$1$};
			\node[lien](w2-n-n)at(4,-2){$\bot$};
			\node[lien](w2-n-1)at(4,-1){$\bot$};
			
			\node at($(w2-n-1)+(-0.5,-0.8)$){$\rvdots$};
			
			\node at($(w2-1-n)+(-0.5,-1.25)$){$\rvdots$};

			\draw[reseau](cw2-1-1)--++(0,-0.5)--++(1,0)--(w2-1-1);
			\draw[reseau](cw2-1-n)--++(0,-0.5)--++(1,0)--(w2-1-n);
			\draw[reseau](cw2-n-1)--++(0,-0.5)--++(1,0)--(w2-n-1);
			\draw[reseau](cw2-n-n)--++(0,-0.5)--++(1,0)--(w2-n-n);

			\draw[saut](w2-1-1) to[out=15,in=90] ($(w1-1)+(-0.3,0.3)$);
			\draw[saut](w2-1-n) to[out=45,in=90] ($(cw1-1)+(0.3,0.3)$);
			
			\draw[saut](w2-n-1) to[out=15,in=90]($(w1-n)+(-0.3,0.3)$);
			\draw[saut](w2-n-n) to[out=45,in=90]($(cw1-n)+(0.3,0.3)$);

			\draw[acc]($(cw2-1-n)+(-0.5,-0.2)$)--($(cw2-1-1)+(-0.5,0.2)$);
			\node[rotate=-90,scale=0.7]at($(cw2-1-1)+(-1,-1)$){$\leq
			2\jd(p)$};

			\draw[acc]($(cw2-n-n)+(-0.5,-0.2)$)--($(cw2-n-1)+(-0.5,0.2)$);
			\node[rotate=-90,scale=0.7]at($(cw2-n-1)+(-1,-1)$){$\leq
			2\jd(p)$};

			\node[lien](cwn-1)at(-1,2.5){$1$};
			\node[lien](wn-1)at(0,2.5){$\bot$};
			\node[lien](cwn-n)at(-1,-2.5){$1$};
			\node[lien](wn-n)at(0,-2.5){$\bot$};
			
			\draw[reseau](cwn-1)--++(0,-0.5)--++(1,0)--(wn-1);
			\draw[reseau](cwn-n)--++(0,-0.5)--++(1,0)--(wn-n);

			\draw[saut](wn-1)to[out=45,in=125]($(wn-1)+(0.6,-0.25)$);
			\draw[saut](wn-n)to[out=45,in=125]($(wn-n)+(0.6,0.25)$);

			\node at($(wn-1)+(1,-0.35)$){$\dots$};
			\node at($(wn-n)+(1,0)$){$\dots$};

			\node at(-0.5,0){$\vdots$};

			\node[lien](wwn-1)at(-2.5,3.5){$\bot$};
			\node[lien](wwn-1n)at(-2.5,2){$\bot$};
			\draw[reseau](wwn-1)--++(0,-0.5);
			\draw[reseau](wwn-1n)--++(0,-0.5);

			\draw[saut](wwn-1)to[out=45,in=105]($(cwn-1)+(0.7,0.3)$);
			\draw[saut](wwn-1n)to[out=45,in=105]($(cwn-1)+(0.3,0.3)$);

			\node at($(wwn-1n)+(0,0.7)$){$\vdots$};
			
			\node[lien](wwn-n1)at(-2.5,-1.7){$\bot$};
			\node[lien](wwn-nn)at(-2.5,-3.2){$\bot$};
			\draw[reseau](wwn-n1)--++(0,-0.5);
			\draw[reseau](wwn-nn)--++(0,-0.5);

			\draw[saut](wwn-n1)to[out=45,in=105]($(cwn-n)+(0.7,0.3)$);
			\draw[saut](wwn-nn)to[out=45,in=105]($(cwn-n)+(0.3,0.3)$);

			\node at($(wwn-nn)+(0,0.7)$){$\vdots$};

			\node at(-2.5,0){$\vdots$};

			\draw[acc]($(wwn-nn)+(-0.5,-0.2)$)--($(wwn-1)+(-0.5,0.2)$);
			\node[rotate=-90,scale=0.7]at($(wwn-1)+(-1,-3)$){$\leq
			\left(2\jd(p)\right)^{n+1}$};

			\node(J1)at($(wwn-1)+(9.5,1)$){$\I^0(t)$};
			\node(J2)at($(wwn-1)+(6.5,1)$){$\I^1(t)$};
			\node(Jn-1)at($(wwn-1)+(2.8,1)$){$\I^{n-1}(t)$};
			\node(Jn)at($(wwn-1)+(0,1)$){$\I^{n}(t)$};

			\draw[->]($(J1)+(0,-8.3)$)--($(Jn)+(1,-8.3)$);
			\node[scale=0.7]at($(J2)+(-1,-8.1)$){$n\leq
			\ln(p)$};

		\end{tikzpicture}
	\end{center}
	\caption{Evanescent reductions : critical case}
	\label{fig:jumps}
\end{figure}

\begin{lem}\label{borne_qi} 
	Let $p,q$ be two nets such that $p\totoe q$. Then:
	\begin{enumerate}
		\item for all $t\in\ST(p)$, $\card\left(\bigcup_{i\in\N}\I_p^i(t)\right)\leq
			(2\jd(p))^{\ln(p)+1}$;
		\item there are at most $\size(q)\times
			(2\jd(p))^{\ln(p)+1}$
			evanescent cuts in $\cuts(p)$.
	\end{enumerate}
\end{lem}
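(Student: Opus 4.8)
The plan is to obtain (1) as a combinatorial bound on the hierarchy $\bigl(\I^m_p(t)\bigr)_{m\in\N}$, controlling separately its branching and its depth, and then to deduce (2) from (1) by relating each evanescent cut eliminated in $p\totoe q$ to a subtree of $q$ through the redirection function that accompanies parallel cut elimination.

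For (1), I would fix $t\in\ST(p)$ and argue in two stages. First, the branching is bounded: any counit $\w$ occurs in at most one cut $\cut{\w}{\cw}\in\cuts(p)$, and there are at most $\jd_p(\w)+\jd_p(\cw)\le 2\jd(p)$ counits jumping to $\w$ or to $\cw$; summing this over $\w\in\I^m_p(t)$ gives $\card\I^{m+1}_p(t)\le 2\jd(p)\cdot\card\I^m_p(t)$, so with $\card\I^0_p(t)=\jd_p(t)\le\jd(p)$ one gets $\card\I^m_p(t)\le\jd(p)\,(2\jd(p))^m$ by induction on $m$. Second, the depth is bounded: as noted just before the statement, an element $\w_m\in\I^m_p(t)$ comes with a chain $\w_m,\dotsc,\w_0$ of counits, with $\w_k$ paired in a cut $\cut{\w_k}{\cw_k}\in\cuts(p)$ for $k<m$, with $\jump_p(\w_k)\in\{\w_{k-1},\cw_{k-1}\}$ for $1\le k\le m$ and $\jump_p(\w_0)=t$; splicing the jump edges $\sim_{\w_m},\dotsc,\sim_{\w_0}$ with the cut edges $\sim_{\cut{\w_k}{\cw_k}}$ that are needed produces a switching path from $\w_m$ to $t$ crossing the $m+1$ jump edges, hence of length $\ge m+1$, so that $\I^m_p(t)=\emptyset$ once $m\ge\ln(p)$. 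Combining the two bounds, $\card\bigl(\bigcup_{i\in\N}\I^i_p(t)\bigr)\le\sum_{i=0}^{\ln(p)-1}\jd(p)\,(2\jd(p))^i$, and this geometric sum is $\le(2\jd(p))^{\ln(p)+1}$ — a routine estimate when $\jd(p)\ge1$ (which already forces $\ln(p)\ge1$ in an acyclic net), and trivial when $\jd(p)=0$, as then $\counits(p)=\emptyset$.

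For (2), I would write the reduction as eliminating the evanescent cuts $c_1=\cut{\w_1}{\cw_1},\dotsc,c_l=\cut{\w_l}{\cw_l}$, so that $\ST(q)=\ST(p)\setminus\bigcup_j\{\w_j,\cw_j\}$, and it suffices to bound $l$. Let $\rho\colon\{1,\dotsc,l\}\to\ST(p)$ be the redirection function from the definition of parallel cut elimination. Its recursive clause $\rho(j)=\rho(i)$ only calls indices $i<j$, so a straightforward induction shows, for every $j$, that $\rho(j)\in\ST(q)$ and $\w_j\in\bigcup_{m\in\N}\I^m_p(\rho(j))$: in the base case $\jump_p(\w_j)=\rho(j)$, whence $\w_j\in\I^0_p(\rho(j))$; in the recursive case $\jump_p(\w_j)\in\{\w_i,\cw_i\}$ with $\cut{\w_i}{\cw_i}\in\cuts(p)$ and, by induction, $\w_i\in\I^m_p(\rho(i))=\I^m_p(\rho(j))$, whence $\w_j\in\I^{m+1}_p(\rho(j))$. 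Since distinct cuts of $p$ have disjoint atoms, $j\mapsto\w_j$ is injective, so $j\mapsto(\rho(j),\w_j)$ injects $\{1,\dotsc,l\}$ into $\{(s,\w)\tq s\in\ST(q),\ \w\in\bigcup_m\I^m_p(s)\}$, which by (1) has at most $\card\ST(q)\cdot(2\jd(p))^{\ln(p)+1}=\size(q)\,(2\jd(p))^{\ln(p)+1}$ elements; this bounds $l$. Since in an acyclic net every evanescent cut is reducible, applying this with $q$ the result of eliminating all evanescent cuts of $p$ then bounds their total number.

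The only delicate point is the depth bound in (1): one has to check that the $m+1$ jump edges and the interspersed cut edges genuinely form one switching path — that they are pairwise distinct and that no coincidence among them shortens it below length $m+1$. Each such coincidence (a jump edge and a cut edge sharing both endpoints, or a repeated counit in the chain) would close up into a cycle of $p$, so this is exactly where acyclicity enters, in the spirit of Lemma~\ref{lemma:acyclicity}. Once it is settled, (1) is a geometric-series estimate and (2) is bookkeeping around $\rho$.
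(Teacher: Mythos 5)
Your proof is correct, and part (1) follows the paper's argument exactly: the same branching bound $\card\I_p^{m+1}(t)\le 2\jd(p)\,\card\I_p^m(t)$, and the same depth bound obtained by reading a chain of jumps through evanescent cuts as a switching path of length $\ge m+1$ (your geometric sum starts from $\jd(p)$ rather than $(2\jd(p))^1$, which is marginally tighter but lands in the same place). For part (2) you take a slightly different route. The paper associates to \emph{every} $\w\in\counits(p)$ a maximal chain of jumps through evanescent cuts of $p$; by maximality its endpoint is not part of an evanescent cut, hence lies in $\ST(q)$, so $\counits(p)\subseteq\bigcup_{t\in\ST(q)}\bigcup_{k}\I_p^k(t)$ and the count of evanescent cuts is bounded by $\card\counits(p)$ directly against the given $q$. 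You instead use the redirection function $\rho$ to tag each \emph{eliminated} cut with a tree of $q$, and then recover the count of \emph{all} evanescent cuts by passing to the full evanescent reduction $p\totoe q'$; that detour is sound but needs the (immediate) remark that $\size(q')\le\size(q)$, since eliminating more evanescent cuts only shrinks the reduct. Both arguments rest on the same combinatorial facts, and your explicit attention to why the jump and cut edges assemble into a genuine trail (any coincidence closing up into a cycle, excluded by acyclicity) fills in a detail the paper leaves implicit.
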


\begin{proof}
	We first establish that the set $\{n\in\N\tq\I^n(t)\neq\emptyset\}$ is
	finite for all $t$. Indeed, for each $\w_{n}\in\I^{n}(t)$, there is a 
	sequence of cuts $c_0,\dotsc,c_{n-1}$ such that,
	writing $c_i=\cut{\cw_i}{\w_i}$,
	the unique path from $\w_{n}$ to $t$ is
	$\chi=\chi_n\cdots\chi_1(\w_0\sim_{\w_0}t)$ where,
	for $1\le i\le n$:
	\begin{itemize}
		\item 
			either $\jump_p(\w_{i})=\cw_{i-1}$ and $\chi_i=\w_{i}\sim_{\w_{i}}\cw_{i-1}\sim_{c_{i-1}}\w_{i-1}$;
		\item 
			or $\jump_p(\w_{i})=\w_{i-1}$ and $\chi_i=\w_{i}\sim_{\w_{i}}\w_{i-1}$.
	\end{itemize}
	We observe that $\ln(\chi)\ge n+1$, and then we
	deduce $n<\ln(p)$ as soon as $\I^n(t)\neq\emptyset$.

	Now we bound the size of each $\I_p^n(t)$: we show that 
	$\card\I_p^n(t)\le (2\jd(p))^{n+1}$, by induction on $n$.
	We already have $\card\I_p^0(t)=\jd(t)\leq\jd(p)$.
	Now assume the result holds for $n\ge 0$.
	Then, for each $c=\cut{\w}{\cw}\in\cuts(p)$ such that $\w\in\I_p^n(t)$, the
	number of $\w'$ such that $\jump_p(\w')\in c$ is 
	at most $\jd(\w)+\jd(\cw)\le 2\jd(p)$.
	We obtain:
	$\card\I_p^{n+1}(t)\leq 2\jd(p)\card\I_p^n$, which enables the induction.

	We thus get
	$\card\left(\bigcup_{i\in\N}\I_p^i(t)\right)\leq
	\sum_{i=0}^{\ln(p)-1}(2\jd(p))^{i+1}\leq (2\jd(p))^{\ln(p)+1}$
	which entails (1).
	To deduce (2) from (1), it will be sufficient to show that,
	for each $\w\in\counits(p)$, there exists $t\in\ST(q)$
	such that $\w\in\I_p^k(t)$ for some $k\in\N$:
	indeed the number of evanescent cuts in $p$ is obviously bounded 
	by $\card\counits(p)$.

	For that purpose, write $\w_0=\w$ and let
	$(c_i)_{i\in\{1,\dots,k\}}$ be the longest sequence 
	of cuts $c_i=\cut{\cw_i}{\w_i}\in\cuts(p)$
	such that, for all $i\in\{0,\dots,k-1\}$,
	$\jump_p(\w_i)\in\{\w_{i+1},\cw_{i+1}\}$:
	such a maximal sequence exists by acyclicity.
	Necessarily,
	$\jump_p(\w_k)$ is not part of an evanescent cut in $\cuts(p)$,
	so $\jump_p(\w_k)\in\ST(q)$: we conclude since $\w_0\in\I_p^k(\jump_p(\w_k))$.
\end{proof}

Writing $\psi(i,j,k)= i(1+2(2j)^{k+1})$, we obtain:
\begin{lem}\label{affs_size} 
	If $p\totoe q$, then $\size(p)\leq \psi(\size(q),\jd(p),\ln(p))$.
\end{lem}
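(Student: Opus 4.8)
The plan is to reduce the statement directly to Lemma~\ref{borne_qi}(2): the elimination of evanescent cuts affects the size of a net only through the cuts that are actually deleted, and all the genuine combinatorics has already been packed into that lemma.

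First I would observe that if $p\totoe q$, then $q$ is obtained from $p$ by deleting some family $E$ of evanescent cuts of $p$ (and redefining the jump function accordingly). Each evanescent cut $c=\cut{\cw}{\w}$ with $\cw\in\units$ and $\w\in\counits$ satisfies $\ST(c)=\{\cw,\w\}$, hence $\size(c)=2$; and the remaining trees and cuts of $p$ are left untouched as bare trees, since only $\jump_p$ is modified and that does not alter $\ST$. Because the atoms of distinct trees and cuts of a net are pairwise disjoint, deleting the cuts in $E$ removes exactly $\sum_{c\in E}\size(c)=2\card E$ subtrees, so $\size(p)=\size(q)+2\card E$.

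Next I would bound $\card E$. Trivially $\card E$ is at most the total number of evanescent cuts occurring in $\cuts(p)$, so Lemma~\ref{borne_qi}(2) gives $\card E\le\size(q)\times(2\jd(p))^{\ln(p)+1}$. Combining the two displays yields
\[
	\size(p)\le\size(q)+2\,\size(q)\,(2\jd(p))^{\ln(p)+1}
	=\size(q)\bigl(1+2(2\jd(p))^{\ln(p)+1}\bigr)
	=\psi(\size(q),\jd(p),\ln(p)),
\]
as claimed.

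There is really no obstacle in this step taken in isolation: the only subtlety worth stating carefully is that jump redirection does not change $\size$, so that the size gap between $p$ and $q$ is exactly twice the number of eliminated evanescent cuts. The hard part — the analysis of chains of jumps between evanescent cuts, controlled through the sets $\I^n_p(t)$ and the observation that such a chain of length $n$ forces a switching path of length at least $n+1$, whence $n<\ln(p)$ — has already been carried out in Lemma~\ref{borne_qi}, and here we merely instantiate it.
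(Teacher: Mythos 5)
Your proof is correct and follows essentially the same route as the paper's: both compute $\size(p)=\size(q)+2n$ for $n$ the number of eliminated evanescent cuts (each contributing exactly two subtrees, with jump redirection leaving $\ST$ unchanged), and then invoke Lemma~\ref{borne_qi}(2) to bound $n$ by $\size(q)\,(2\jd(p))^{\ln(p)+1}$, yielding $\psi(\size(q),\jd(p),\ln(p))$. The only difference is that you spell out the bookkeeping in more detail than the paper does.
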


\begin{proof}
	Writing 
	$p=(\cut{\w_1}{\cw_1},\dots,\cut{\w_n}{\cw_n},\vec c;\vec t)$ and 
	$q=(\vec c;\vec t)$, 
	we obtain
	\[\size(p)= \size(q) + 2n \leq \size(q)(1+2(2\jd(p))^{\ln(p)+1})\]
	by Lemma~\ref{borne_qi}.
\end{proof}

\subsection{Towards the general case}

Recall that any parallel cut elimination step $p\toto q$
can be decomposed into, e.g.: $p\totoe p'\totom p''\totoa q$.
We would like to apply the previous results to this sequence of 
reductions, in order to bound the size of $p$ by a function 
of $\size(q)$, $\ln(p)$ and $\jd(p)$.
Observe however that this would require us to infer a bound on $\ln(p'')$
from the bounds on $p$, in order to apply Lemma~\ref{ax}.

More generally, to be able to apply our results to a sequence 
of reductions $p\toto \cdots\toto q$, we need to ensure 
that for any reduction $p\toto p'$, we can bound $\ln(p')$ and $\jd(p')$
by functions of $\ln(p)$ and $\jd(p)$. This is the subject of 
the following two sections.

\section{Variations of $\ln(p)$ under reduction}\label{section:vc}

Here we establish that the possible increase of $\ln(p)$ under reduction is bounded.
It should be clear that:
\begin{lem}\label{cc_ax_e}
	If $p\totoa q$ or $p\totoe q$, then $\ln(q)\leq\ln(p)$.
\end{lem}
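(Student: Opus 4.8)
The plan is to show that any $I$-path $\chi$ in $q$ can be transformed into an $I'$-path $\chi'$ in $p$ of length at least $\ln(\chi)$, for a suitable switching $I'$ of $p$ derived from $I$; taking $\chi$ of maximal length then gives $\ln(q) \le \ln(p)$. I would treat the two reductions separately, since the correspondence of paths is different in each case, and in both cases proceed by reducing to a single elimination step and then iterating (each of $\totoa$ and $\totoe$ being a finite composition of single steps $\toa$, resp. $\toe$, by the discussion before the lemma).

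For the evanescent case $p \toe q$, with $p = (\cut{\cw}{\w}, \vec c; \vec t)$ and $q = (\vec c;\vec t)$: here $\ST(q) \subseteq \ST(p)$, the switchings of $q$ and $p$ coincide (since $\parrs(q)=\parrs(p)$ and no $\parr$ was touched), and every edge of $q$ is still an edge of $p$ — the only edges of $p$ not present in $q$ are the cut edge $\sim_c$ and the jump edge $\sim_\w$, both incident only to $\w$ and $\cw$, which are not vertices of $q$. Hence any $I$-path of $q$ is literally an $I$-path of $p$, so $\ln(q) \le \ln(p)$ immediately, with no length increase to account for.

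For the axiom case $p \toa q$, with $p = (\cut{x}{s}, \vec c; \vec t)$ and $q = (\vec c;\vec t)[s/\bar x]$: the substitution grafts the tree $s$ (which already occurs in $p$, attached to the cut) in place of the leaf $\bar x$. Given a switching $I$ of $q$, define $I'$ on $p$ by copying the choices $I$ makes on the $\parr$-subtrees that were already in $p$ outside $\bar x$, and making an arbitrary choice on any $\parr$-subtree of $s$ (these are choices $I$ had to make too, since $s \subseteq q$ via the grafting, so really $I'$ is just $I$ read on $\ST(p)$). The key observation is that each edge of $q$ corresponds to an edge of $p$, except that an edge of $q$ incident to a subtree $u$ of the grafted copy of $s$ that was "new" at the root — namely the edge that in $q$ connects the image of $\bar x$'s former neighbour to the root of $s$ — gets replaced in $p$ by the length-two detour through the axiom edge $\sim_{\{x,\bar x\}}$ and the cut edge $\sim_c$: concretely, wherever $\chi$ in $q$ crosses the edge joining (the occurrence replacing) $\bar x$'s old partner to $\operatorname{root}(s)$, the corresponding path in $p$ instead goes $\cdots \sim \bar x \sim_{\{x,\bar x\}} x \sim_c s \sim \cdots$. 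All other adjacencies of $\chi$ are preserved verbatim. So the translated path $\chi'$ in $p$ has $\ln(\chi') \ge \ln(\chi)$, giving $\ln(q) \le \ln(p)$; iterating over the finitely many single axiom-cut steps composing $p \totoa q$ yields the claim. The main thing to be careful about — the part I expect to require the most attention — is bookkeeping the edges at the substitution site: one must check that distinct edges of $\chi$ map to distinct edges of $\chi'$ (so $\chi'$ is genuinely a trail, not merely a walk), which holds because the only edges introduced are $\sim_{\{x,\bar x\}}$ and $\sim_c$, each used at most once since $\chi$ crossed the single replaced edge at most once, and because the substitution is linear so $\bar x$ occurred exactly once.
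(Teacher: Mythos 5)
Your overall strategy --- lift each path of $q$ to a path of $p$ that is at least as long, one elimination step at a time --- is the right one (the paper itself dispatches this lemma with a one-line remark), and your treatment of the axiom-cut detour through $\sim_{\{x,\bar x\}}$ and $\sim_c$ is correct as far as it goes. But both of your cases ignore \emph{jump redirection}, and in the evanescent case this makes a stated claim false. When $p=(\cut{\cw}{\w},\vec c;\vec t)\toe q=(\vec c;\vec t)$, any $\w'\in\counits(q)$ with $\jump_p(\w')\in\{\w,\cw\}$ gets $\jump_q(\w')=\jump_p(\w)$; the jump edge $\sim_{\w'}$ therefore has \emph{different endpoints} in $p$ and in $q$, so it is not true that every edge of $q$ is still an edge of $p$, nor that the deleted edges are incident only to $\w$ and $\cw$. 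An $I$-path of $q$ crossing such a redirected jump is not literally an $I$-path of $p$: it must be expanded into a detour, namely $\w'\sim_{\w'}\w\sim_{\w}\jump_p(\w)$ if $\jump_p(\w')=\w$, or $\w'\sim_{\w'}\cw\sim_c\w\sim_{\w}\jump_p(\w)$ if $\jump_p(\w')=\cw$, of length $2$ or $3$ instead of $1$. The same phenomenon occurs in the axiom case: any $\w$ with $\jump_p(\w)\in\{x,\bar x\}$ gets $\jump_q(\w)=s$, so the assertion that all adjacencies other than the one at the substitution site are preserved verbatim is also inaccurate there.

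The inequality still holds, since these detours only lengthen the lifted path, but the repair requires exactly the kind of bookkeeping you flagged only for the substitution site. In particular, if a path of $q$ crosses \emph{two} redirected jumps meeting at $\jump_p(\w)$ (it cannot cross more: by acyclicity it visits $\jump_p(\w)$ at most once), the naive edge-by-edge expansion would cross the edge $\sim_\w$ twice and fail to be a trail; one must instead route the lifted path directly from one weakening to the other through $\w$ (and possibly $\cw$ via $\sim_c$), bypassing $\jump_p(\w)$ altogether. This is the same care the paper must take for redirected jumps in the multiplicative case in Section~\ref{section:vc}, and it is the one genuinely non-trivial point of this lemma; your proposal as written omits it.
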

Indeed axiom and evanescent reductions only shorten paths, without really changing the
topology of the net.

In the case of multiplicative cuts however, cuts are
duplicated and new paths are created.
Consider for instance a net $r$, as in Figure~\ref{fig:slipknot},
obtained from three nets $p_1$, $p_2$ and $q$,
by forming the cut $\cut{\otimes(t_1,t_2)}{\parr(s_1,s_2)}$
where $t_1\in \ST(p_1)$,  $t_2\in \ST(p_2)$  and $s_1,s_2\in \ST(q)$.
Observe that, in the reduct $r'$ obtained by forming two cuts 
$\cut{t_1}{s_1}$ and $\cut{t_2}{s_2}$, we may very well
form a path that travels from $p_1$ to $q$ then $p_2$;
while in $p$, this is forbidden by any switching of $\parr(s_1,s_2)$.
For instance, if we consider $I(\parr(s_1,s_2))=s_1$,
we may only form a path between $p_1$ and $p_2$ through $\otimes(t_1,t_2)$, or
a path between $q$ and one of the $p_i$'s, through $s_1$ and the cut.

\begin{figure}[t]
	\begin{center}
	\begin{tikzpicture}[baseline=(cdots.base)]
		\draw[reseau] (-2.5,1)--(-1.25,1)--(-1.25,2)--(-2.5,2)--cycle;
		\node at (-1.875,1.5) {$p_1$};
		\draw[reseau] (-0.75,1)--(0.5,1)--(0.5,2)--(-0.75,2)--cycle;
		\node at (-0.125,1.5) {$p_2$};
		\node[lien] (ten) at (-1,0.35) {$\otimes$};
		\draw (-1.875,1) to[out=-90,in=125] (ten);
		\draw (-0.125,1) to[out=-90,in=55] (ten);
		\draw[reseau] (1,1)--(3,1)--(3,2)--(1,2)--cycle;
		\node at (2,1.5) {$q$};
		\node[lien] (par) at (2,0.35) {$\parr$};
		\draw (1.5,1) to[out=-90,in=125] (par);
		\draw (2.5,1) to[out=-90,in=55] node[sloped]{$\mid$} (par);
		\node[lien] (cut) at (0.5,-0.5) {cut};
		\draw (ten) to[out=-90,in=180] (cut);
		\draw (cut) to[out=0,in=-90] (par);
		\draw[chemin, dotted, blue] (-2.4,1.2) .. controls (-2.4,2) and (-2.1,1.95) 
			.. (-1.8,1.95) .. controls (-1.2,2) and (-1.3,1.5)
			.. (-1.5,1.1) .. controls (-1.5,0.5) and (-0.4,0.8)
			.. (-0.4,1.1) .. controls (-0.4,1.5) and (-0.2,1.8)
			.. (0,1.8) .. controls (0.2,1.8) and (0.3,1)
			.. (0.3,1.2) ;
		\draw[chemin, dashed, red] (-2.3,1.1) .. controls (-2.3,2) and (-1.55,2) 
			.. (-1.45,1.75) .. controls (-1.35,1.25) and (-1.7,1.5)
			.. (-1.7,1) .. controls (-1.7,-0.2) and  (0,-0.8)
			.. (0.5,-0.8).. controls (2,-0.8) and (1.8,0)
			.. (1.3,1.3) .. controls (1.3,1.8)
			.. (2,1.8) .. controls (2.6,1.8)
			.. (2.6,1.1) ;
	\end{tikzpicture}
	\qquad
	\begin{tikzpicture}[baseline=(cdots.base)]
		\draw[reseau] (-2.5,1)--(-1.25,1)--(-1.25,2)--(-2.5,2)--cycle;
		\node at (-1.875,1.5) {$p_1$};
		\draw[reseau] (-0.75,1)--(0.5,1)--(0.5,2)--(-0.75,2)--cycle;
		\node at (-0.125,1.5) {$p_2$};
		\node[lien] (cutPR) at (-0.5,-0.25) {cut};
		\node[lien] (cutQR) at (1.25,-0.5) {cut};
		\draw (-1.875,1) to[out=-90,in=180] (cutPR);
		\draw (-0.125,1) to[out=-90,in=180] (cutQR);
		\draw[reseau] (1,1)--(3,1)--(3,2)--(1,2)--cycle;
		\node at (2,1.5) {$q$};
		\draw[par dessus] (1.5,1) to[out=-90,in=0] (cutPR);
		\draw (2.5,1) to[out=-90,in=0] (cutQR);
		\draw[chemin,dash dot,purple] (-2.3,1.1) .. controls (-2.3,2) and (-1.55,2) 
			.. (-1.45,1.75) .. controls (-1.35,1.25) and (-1.7,1.5)
			.. (-1.7,1) .. controls (-1.7,-0.2) and (-1,-0.6)
			.. (-0.5,-0.6) .. controls (0.5,-0.6) and (1.3,0)
			.. (1.3,1.3) .. controls (1.3,1.8)
			.. (2,1.8) .. controls (2.6,1.8)
			.. (2.6,1) .. controls (2.6,-0.5) and (2,-0.85)
			.. (1.25,-0.85) .. controls (0.5,-0.85) and (-0.4,0)
			.. (-0.4,1.1) .. controls (-0.4,1.5) and (-0.2,1.8)
			.. (0,1.8) .. controls (0.2,1.8) and (0.3,1)
			.. (0.3,1.2) ;
	\end{tikzpicture}
	\end{center}
	\caption{A cut, the resulting slipknot, and examples of paths before and after reduction}
	\label{fig:slipknot}
\end{figure}

In the remainder of this section, we fix a reduction step $p\totom q$,
and we show that the previous example describes a general mechanism:
a path $\chi$ in $q$ that is not already in $p$ must involve
a subpath $\chi'$ between two residuals of a cut of $p$ that 
was eliminated in $p\totom q$.
We refer to this situation as a \definitive{slipknot} in $\chi$.

More formally,
consider $c=\cut{t_0}{s_0}\in \cuts(p)$
with $t_0=\otimes(t_1,\dotsc,t_n)$ and $s_0=\parr(s_1,\dotsc,s_n)$
and assume $c$ is eliminated in the reduction $p\totom q$:
then the \definitive{residuals} of $c$ in $q$ are the cuts
$\cut{t_i}{s_i}\in\cuts(q)$ for $1\le i\le n$.
For any edge $e$, we write $(e)$ for any length 1 path $t\sim_e s$.
If $\chi\in \paths(q)$, a \definitive{slipknot} of $\chi$ is 
any subpath $(d)\xi(d')$ where $d$ and $d'$ are (necessarily distinct)
residuals of the same cut in $p$.
In the remaining of this section,
we show that a path in $q$ is necessarily obtained by alternating paths
(essentially) in $p$ and slipknots in $q$, that recursively consist of such
alternations.
This will allow us to bound $\ln(q)$ depending on $\ln(p)$,
by reasoning inductively on these paths.

\subsection{Preserved paths}
Notice that $\ST(q)\subseteq\ST(p)$
and, given a switching $J$ of $q$, it is always possible to extend 
$J$ into a switching $I$ of $p$:
to determine $I$ uniquely amounts to select a premise for each
$\parr$-tree in an eliminated cut.

Let $J$ be a switching of $q$ and $I$ an extension of $J$ on $p$.
Observe that if $t\sim^{p,I}_e t'$ and neither $t$ nor $t'$
is an element of an eliminated cut,
then $e$ is also an edge of $q$ that is not a residual cut;
conversely, if $t\sim^{q,J}_e t'$ and $e$ is not a residual cut,
then $e$ is also an edge of $p$.
We then say the edge $e$ is \definitive{preserved} by the reduction $p\totom q$.
If a preserved edge $e$ is a cut, an axiom, a $\otimes$-edge or a $\parr$-edge,
then $e$ has the same endpoints in $p$ and in $q$: $t\sim^{p,I}_e t'$ iff $t\sim^{q,J}_e t'$.
If $e=\w$ is a jump, one endpoint might be changed:
indeed, $\w\sim^p_e \jump_p(\w)$ and $\w\sim^q_e \jump_q(\w)$,
and we might have $\jump_p(\w)\not=\jump_q(\w)$ when $\jump_p(\w)$
is part of an eliminated cut.
In this case, we say $e$ is a \definitive{redirected jump}.
We say $u\in\ST(p)$ is an \definitive{anchor} of $v\in\ST(q)$, 
if either $u=v$ or $u$ is involved in an eliminated cut $c=\cut{u}{u'}$,
and either $u=\otimes(\vec t)$ and $v\in\vec t$
or $u=\parr(\vec s)$ and $v\in\vec s$.

\begin{lem}
	\label{lem:nearcut}
	Assume $(d)\chi(d')\in\paths(q)$
	and $d$ and $d'$ are residuals.
	Then $\chi$ is non empty, 
	and its first and last edges are preserved.
\end{lem}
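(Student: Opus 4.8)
The plan is to leverage the fact, recorded just before the statement, that the edges of $q$ split into those it shares with $p$ (the \emph{preserved} ones, possibly with a redirected jump) and the \emph{residual cuts}, which are the only genuinely new edges; so proving that the extreme edges of $\chi$ are preserved reduces to proving that they are not residual cuts. First I would fix notation: write $d$ as a residual $\cut{t_i}{s_i}$ of an eliminated cut $c=\cut{\otimes(t_1,\dots,t_n)}{\parr(s_1,\dots,s_n)}$, and $d'$ as a residual $\cut{t'_j}{s'_j}$ of an eliminated cut $c'$ (allowing $c=c'$, which is the case occurring in slipknots). Parsing the concatenation $(d)\chi(d')$, the path $\chi$ runs from an endpoint $x\in\{t_i,s_i\}$ of $d$ to an endpoint $y\in\{t'_j,s'_j\}$ of $d'$, and since $(d)\chi(d')$ is a trail, the edges $d$, $d'$ and those of $\chi$ are pairwise distinct.

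The technical heart is an auxiliary claim: \emph{if $b$ is an endpoint of a residual $d$ of an eliminated cut $c$, then $d$ is the only residual cut of the whole reduction having $b$ as an endpoint.} I would prove it from atom-disjointness in bare proof nets. Every tree has a non-empty atom set, so $\emptyset\neq\atoms(b)\subseteq\atoms(c)$; hence $b$ cannot be a subtree of --- and thus cannot be an endpoint of any residual of --- an eliminated cut $c''\neq c$, since $\atoms(c)\cap\atoms(c'')=\emptyset$; and among the residuals $\cut{t_k}{s_k}$ of $c$ itself, the endpoint pairs $\{t_k,s_k\}$ are pairwise disjoint --- because the immediate subtrees of $\otimes(\vec t)$ have pairwise disjoint atoms, likewise those of $\parr(\vec s)$, and $\otimes(\vec t)$ is cut against $\parr(\vec s)$ --- so $b\in\{t_i,s_i\}$ forces $\cut{t_k}{s_k}=d$.

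Granting the claim, the three conclusions follow immediately. If $\chi$ were empty, then $x=y$ would be an endpoint of both $d$ and $d'$, and the claim applied to $d$ would force $d'=d$, contradicting distinctness; hence $\chi$ is non-empty. The first edge $e_1$ of $\chi$ has $x$ as an endpoint; if $e_1$ were a residual cut, the claim would give $e_1=d$, again contradicting distinctness; so $e_1$ is not a residual cut, hence preserved. Finally, the reversed path $\reverse{(d)\chi(d')}=(d')\,\reverse{\chi}\,(d)$ has the same shape, with $d'$ and $d$ residuals, so applying the previous step to $\reverse{\chi}$ shows its first edge --- the last edge of $\chi$ --- is preserved as well.

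I do not anticipate a real obstacle here; the one point requiring care is precisely the auxiliary claim, and in particular ruling out that some residual of a \emph{different} eliminated cut could be an edge incident to $x$: this is exactly where the atom-disjointness of distinct cuts in a bare proof net is used, together with the elementary remark that no tree has an empty atom set.
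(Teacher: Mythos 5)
Your proof is correct and takes essentially the same route as the paper's one-line argument: the paper simply invokes the fact that two consecutive edges of a path cannot both be cuts, which is exactly your auxiliary claim (specialized to residual cuts, which suffices since an edge of $q$ is preserved precisely when it is not a residual cut). Your atom-disjointness justification is a sound, more detailed elaboration of why no tree can be the endpoint of two distinct cut edges.
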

\begin{proof}
	This is a direct consequence of the fact 
	that if $(c)(e)$ is a path and $c$ is a cut
	then $e$ is not a cut.
\end{proof}

Observe that even if a path $\chi\in\paths(q)$ crosses preserved edges only, it
is not sufficient to have $\chi\in\paths(p)$, because the endpoints of
redirected jumps might change.
We say $\chi$ is a \definitive{preserved path} if $\chi$ crosses preserved edges only,
and we can write either 
$\chi=\chi'$ 
or
$\chi=(t\sim_{\w}\counit{\w})\chi'$ 
or
$\chi'(\counit{\w'}\sim_{\w'} t')$ 
or
$\chi=(t\sim_{\w}\counit{\w})\chi'(\counit{\w'}\sim_{\w'} t')$ 
where $\chi'$ crosses no redirected jump.

\begin{lem}
	\label{lem:preserved}
	Any non empty preserved path $\chi:t\leadsto_q t'$
	induces a unique preserved path $\chi^-:s\leadsto_p s'$
	with the same sequence of edges: 
	in particular, $\chi^-\in\paths(p,I)$ as soon 
	as $\chi\in\paths(q,J)$ and $I$ is an extension of $J$;
	and $s$ (resp. $s'$) is an anchor of $t$ (resp. $t'$').
	Moreover, if $\chi_1\chi_2$ is a preserved path
	then $(\chi_1\chi_2)^-=\chi^-_1\chi^-_2\in\paths(p)$.
\end{lem}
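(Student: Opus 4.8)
The plan is to build $\chi^-$ edge by edge out of $\chi$, relying on two facts: every preserved edge is common to $p$ and $q$, and the only preserved edge whose pair of endpoints may differ between the two nets is a redirected jump $\sim_\w$, for which only the endpoint $\jump_p(\w)$ versus $\jump_q(\w)$ changes --- while the definition of a preserved path ensures that redirected jumps are crossed only at the two extremities of $\chi$.

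First I would invoke the decomposition $\chi=\chi_0\,\chi'\,\chi_1$ supplied by the definition of a preserved path, where $\chi'$ crosses no redirected jump and $\chi_0$ (resp.\ $\chi_1$) is either empty or a single jump edge $t\sim_\w\counit{\w}$ at the start of $\chi$ (resp.\ $\counit{\w'}\sim_{\w'}t'$ at its end). Every edge crossed by $\chi$ is a preserved edge, hence an edge of $p$, and these edges are pairwise distinct since $\chi\in\paths(q)$. Each edge of $\chi'$, being a preserved edge which is not a redirected jump, has exactly the same two endpoints in $p$ as in $q$; viewed in $p$, the subpath $\chi'$ is therefore a path ${\chi'}^-$ of $p$ with the same sequence of edges and the same endpoints, and its $\parr$-edges are selected by any extension $I$ of $J$ exactly as they are by $J$, so ${\chi'}^-\in\paths(p,I)$ whenever $\chi\in\paths(q,J)$. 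For $\chi_0$ and $\chi_1$ I would use that the jump edge $\sim_\w$ remains incident to $\counit{\w}$ in both nets while only its other endpoint may move; I then set $\chi^-$ to be ${\chi'}^-$ prefixed, when $\chi_0$ is present, by $(\jump_p(\w)\sim_\w\counit{\w})$ and suffixed, when $\chi_1$ is present, by $(\counit{\w'}\sim_{\w'}\jump_p(\w'))$. Since $\counit{\w}$ is unchanged, this is a path of $p$ crossing exactly the edges of $\chi$ in the same order, hence an $I$-path; its starting tree $s$ is $\jump_p(\w)$ when $\chi_0$ is non-empty and is the starting tree of $\chi$ otherwise. The anchor claim then follows: if $\sim_\w$ is a redirected jump, $\jump_p(\w)$ is an eliminated cut tree $\otimes(\vec u)$ or $\parr(\vec u)$ one of whose immediate subtrees is $\jump_q(\w)$, the starting tree of $\chi$, so $\jump_p(\w)$ is an anchor of it; otherwise $\jump_p(\w)=\jump_q(\w)$, or $\chi_0$ is empty, and $s$ is literally the starting tree of $\chi$. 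The argument for $s'$ is symmetric.

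Next I would treat uniqueness. In an acyclic net a non-empty path is determined by its sequence of edges together with the orientation of its first edge; if that first edge is not a jump its endpoints are fixed and the orientation is forced by $\chi$, and if it is the jump edge of $\chi_0$ then either $\chi$ is a single jump edge (in which case $\chi^-$ must start at $\jump_p(\w)$ for $s$ to be an anchor of the start of $\chi$) or the first edge of $\chi'$ --- a non-redirected preserved edge incident to $\counit{\w}$ in $q$ and hence also in $p$ --- forces $\chi^-$ to reach $\counit{\w}$ along $\sim_\w$, that is, to start at $\jump_p(\w)$.

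Finally, for the compositionality statement: if $\chi_1\chi_2$ is a preserved path, the case of an empty factor being immediate, then any redirected jump crossed by $\chi_1$ or by $\chi_2$ is crossed by $\chi_1\chi_2$ and therefore lies at one of its extremities, so the last edge of $\chi_1$ and the first edge of $\chi_2$ are non-redirected preserved edges; in particular $\chi_1$ and $\chi_2$ are themselves preserved paths, and $\chi_1^-$ ends exactly at the tree $u$ where $\chi_1$ ends --- not merely at some anchor of $u$ --- which is also where $\chi_2^-$ starts. The concatenation $\chi_1^-\chi_2^-$ crosses the edges of $\chi_1\chi_2$ in the same order, lies in $\paths(p,I)$ for an $I$ extending a switching $J$ with $\chi_1\chi_2\in\paths(q,J)$, and coincides with $(\chi_1\chi_2)^-$ by uniqueness; in particular it belongs to $\paths(p)$. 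I expect the main obstacle to be a matter of care rather than of depth: one must correctly identify which endpoint of a redirected jump is in play and match it with the notion of anchor, both in the construction of $\chi^-$ and in the uniqueness argument. Away from the eliminated cuts the topology of the net is untouched, so there is nothing deeper to overcome than this bookkeeping.
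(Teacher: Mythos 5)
Your proposal is correct and follows essentially the same route as the paper: the paper dispatches the first part as ``a direct consequence of the definition'' and, for concatenation, observes exactly what you prove, namely that when both factors are non-empty neither the last edge of $\chi_1$ nor the first edge of $\chi_2$ can be a redirected jump, so the endpoints match up. Your write-up merely makes explicit the bookkeeping (which endpoint of a redirected jump moves, and why $\jump_p(\w)$ is an anchor of $\jump_q(\w)$) that the paper leaves implicit.
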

\begin{proof}
	The first part is a direct consequence of the definition.
	If moreover $\chi_1\chi_2$ is a preserved path,
	and neither $\chi_1$ nor $\chi_2$ is an empty path,
	then neither the last edge of $\chi_1$ 
	nor the first edge of $\chi_2$ can be a 
	redirected jump.
\end{proof}

By convention, if $\chi$ is empty, we set $\chi^-=\chi$.
We say $\chi^-$ is the path of $p$ \definitive{generated by} $\chi$.
In the next two subsections, we extend the generation of paths in $p$
from paths in $q$, first to paths without slipknots, then to arbitrary paths.


\subsection{Bridges and straight paths}
Let $c=\cut{t_0}{s_0}\in \cuts(p)$ with $t_0=\otimes(\vec t)$ and $s_0=\parr(\vec s)$.
We say an edge $e$ is \definitive{bound to $c$} if either $e=c$,
or $e=(t_0,t)$ with $t\in\vec t$,
or $e=s_0$.
And we say $\chi\in\paths(p)$ is bound to $c$ if all 
the edges crossed by $\chi$ are bound to $c$.
Observe that $e\in\paths(p)$ is a preserved edge
iff it is not bound to an eliminated cut.

\begin{lem}
	\label{lem:bound}
	Let $c=\cut{t_0}{s_0}$:
	\begin{itemize}
		\item if $\chi\in\paths(p)$ is bound to $c=\cut{t_0}{s_0}$ and
			$s_0=\parr(\vec s)$ then there exists $s\in\vec s$
			such that $\chi\in\paths(p,I)$ whenever $I(s_0)=s$;
		\item if $\chi\in\paths(p,I)$ does not cross any
			edge bound to $c$ then $\chi\in\paths(p,I')$
			whenever $I$ and $I'$ differ only on $s_0$.
	\end{itemize}
\end{lem}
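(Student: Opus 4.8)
The plan is to exploit the fact that a switching of $p$ affects only the $\parr$-edges, and that the $\parr$-edge denoted $\sim_t$ depends solely on the value $I(t)$: changing $I$ at another $\parr$-tree $t'\neq t$ does not move $\sim_t$, and no non-$\parr$ edge depends on the switching at all. Among the edges bound to $c=\cut{t_0}{s_0}$ — namely $\sim_c$, the $\otimes$-edges $\sim_{t_0,t}$ for $t\in\vec t$, and the single $\parr$-edge $\sim_{s_0}$ — only $\sim_{s_0}$ is switching-dependent, and its endpoints are $s_0$ and $I(s_0)$. Everything then reduces to bookkeeping on the sequence of (pairwise distinct) edges crossed by a path.

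For the second item, suppose $\chi\in\paths(p,I)$ crosses no edge bound to $c$; in particular it does not cross $\sim_{s_0}$. If $I'$ agrees with $I$ except possibly on $s_0$, then $\sim_t^{p,I}=\sim_t^{p,I'}$ for every $\parr$-tree $t\neq s_0$ (both connect $t$ and $I(t)=I'(t)$), and every non-$\parr$ edge is switching-independent. Hence each edge crossed by $\chi$ has the same pair of endpoints under $I'$ as under $I$, so the very same chain of adjacencies witnesses $\chi\in\paths(p,I')$.

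For the first item, pick $I_0$ with $\chi\in\paths(p,I_0)$. Since $\chi$ is bound to $c$, each of its edges is $\sim_c$, some $\sim_{t_0,t}$, or $\sim_{s_0}$. If $\chi$ crosses $\sim_{s_0}$ — then just once, the edges of a path being pairwise distinct — that crossing goes between $s_0$ and $I_0(s_0)$, and we set $s:=I_0(s_0)\in\vec s$; otherwise we let $s$ be any element of $\vec s$, which is nonempty since $s_0\in\parrs(p)$ is a non-nullary $\parr$-tree. Then for any switching $I$ with $I(s_0)=s$: the edges of $\chi$ distinct from $\sim_{s_0}$ are cut- or $\otimes$-edges, hence have switching-independent endpoints; and if $\sim_{s_0}$ occurs in $\chi$, under $I$ it still connects $s_0$ and $I(s_0)=s=I_0(s_0)$. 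So the sequence of trees visited by $\chi$ under $I_0$ still witnesses $\chi\in\paths(p,I)$.

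There is no genuine obstacle here: this is a bookkeeping lemma recording how a path confined to the immediate neighbourhood of a single multiplicative cut both constrains and is constrained by the switching. The only point needing a moment's care is to register that $\sim_{s_0}$ is the unique switching-sensitive edge bound to $c$, so that latitude in the switching away from $s_0$ is immaterial for exactly the paths considered in each item.
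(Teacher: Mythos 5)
Your proof is correct and takes essentially the same approach as the paper, whose own argument is just the one-line observation that the only switching-dependent edge bound to $c$ is the $\parr$-edge $\sim_{s_0}$; you simply spell out the resulting bookkeeping in full.
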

\begin{proof}
	It is sufficient to observe that the edge $s_0$ is bound to $c$;
	and the only edge $e\in\parrs(p)$ that may be visited 
	by a path bound to $c$ is $s_0$.
\end{proof}

A \definitive{$c$-bridge} is a path $\chi$ that 
is bound to $c$ and that crosses $c$.
Observe that $\chi$ is a $c$-bridge iff either $\chi$ or the reverse path $\reverse \chi$
is a subpath of some $t\sim_{(t_0,t)}t_0\sim_c s_0\sim_{s_0} s$ with $t\in\vec
t$ and $s\in\vec s$.
Moreover, given $t\in \{ t_0\}\cup \vec t$ and $s\in \{s_0\}\cup \vec s$ there is a unique
$c$-bridge $t\leadsto s$.

\begin{lem}
	\label{lem:bridge}
	Assume $\chi_1\xi\chi_2\in\paths(q)$ and $\xi=t\sim_{\cut ts} s$,
	where $\chi_1$ and $\chi_2$ are preserved paths,
	$t\in\vec t$ and $s\in\vec s$.
	Then there exists a $c$-bridge $\bridge\xi$ such that
	$\chi_1^-\bridge\xi\chi_2^-\in\paths(p)$.
\end{lem}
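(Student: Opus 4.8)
The plan is to transport the segment $\chi_1\xi\chi_2$ from $q$ back to $p$, replacing the single residual edge $\xi$ by a $c$-bridge, and to recover the needed switching from the fact that the eliminated $\parr$-tree $s_0$ has disappeared in $q$. Write $c=\cut{t_0}{s_0}$ with $t_0=\otimes(\vec t)$ and $s_0=\parr(\vec s)$; since $t\in\vec t$ and $s\in\vec s$, the edge $\xi=t\sim_{\cut ts}s$ is a residual cut, hence not preserved, whereas $\chi_1$ and $\chi_2$ are preserved by hypothesis. Fix a switching $J$ of $q$ with $\chi_1\xi\chi_2\in\paths(q,J)$.

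First I would apply Lemma~\ref{lem:preserved} (extended to empty paths by the convention $\chi^-=\chi$), obtaining preserved paths $\chi_1^-$ and $\chi_2^-$ in $p$ that cross the same sequences of edges as $\chi_1$ and $\chi_2$, and such that $\chi_1^-,\chi_2^-\in\paths(p,I)$ for any switching $I$ of $p$ extending $J$. The endpoint $a$ of $\chi_1^-$ is then an anchor of $t$ and the startpoint $b$ of $\chi_2^-$ is an anchor of $s$. Since subtrees occur at most once, the only tree having $t$ as an immediate subtree is $t_0$, so the anchors of $t$ are exactly $t$ and $t_0$; symmetrically the anchors of $s$ are exactly $s$ and $s_0$. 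Hence $a\in\{t_0\}\cup\vec t$ and $b\in\{s_0\}\cup\vec s$, so there is a unique $c$-bridge $\bridge\xi:a\leadsto b$.

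It remains to check that $\chi_1^-\bridge\xi\chi_2^-$ is a path of $p$. Consecutiveness is built in by the choice of $a$ and $b$. For edge-disjointness, $\chi_1^-$ and $\chi_2^-$ cross only preserved edges --- precisely those crossed by $\chi_1$ and $\chi_2$, which are pairwise distinct since $\chi_1\xi\chi_2$ is a path --- whereas $\bridge\xi$ crosses only edges bound to $c$ (the cut $c$, at most one $\otimes$-edge $(t_0,t')$, and at most the $\parr$-edge $s_0$), none of which is preserved. For the switching, Lemma~\ref{lem:bound} yields $s'\in\vec s$ with $\bridge\xi\in\paths(p,I)$ whenever $I(s_0)=s'$; since $s_0\notin\parrs(q)$, the switching $J$ constrains nothing on $s_0$, so I would take the extension $I$ of $J$ with $I(s_0)=s'$. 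Then $\chi_1^-$, $\bridge\xi$ and $\chi_2^-$ all belong to $\paths(p,I)$, and their concatenation is a trail, so $\chi_1^-\bridge\xi\chi_2^-\in\paths(p,I)\subseteq\paths(p)$.

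The delicate point is the second paragraph --- locating $a$ and $b$. One has to observe that a preserved final edge of $\chi_1$ either leaves its endpoint unchanged (an axiom, $\otimes$- or $\parr$-edge, giving $a=t$) or, being a redirected jump, can only move its $q$-endpoint $t$ to $t_0$ in $p$, never to an unrelated subtree; uniqueness of subtree occurrences is precisely what rules out the bad case. The remaining switching bookkeeping is then routine, hinging only on the fact that $s_0$ is absent from $q$.
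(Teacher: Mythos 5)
Your proof is correct and follows essentially the same route as the paper's: apply Lemma~\ref{lem:preserved} to locate the endpoints of $\chi_1^-$ and $\chi_2^-$ among the anchors $\{t,t_0\}$ and $\{s,s_0\}$, take the unique $c$-bridge between them, and use Lemma~\ref{lem:bound} together with the freedom to choose $I(s_0)$ when extending $J$ to make the concatenation a valid $I$-path. Your extra remarks on edge-disjointness and on why the anchors are exactly $t,t_0$ and $s,s_0$ merely make explicit what the paper leaves to the cited lemmas.
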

\begin{proof}
	Write $\chi_1:v_1\leadsto_{q,J} t$ and $\chi_2:s\leadsto_{q,J} v_2$:
	by Lemma~\ref{lem:preserved},
	we obtain
	$\chi_1^-:u_1\leadsto_{p,I} t'$ 
	and
	$\chi_2^-:s'\leadsto_{p,I} u_2$
	where $u_1$, $t'$, $s'$ and $u_2$ are anchors of 
	$v_1$, $t$, $s$ and $v_2$ respectively,
	for any extension $I$ of $J$.
	In particular, $t'\in\{t,t_0\}$ and $s'\in\{s,s_0\}$
	and we can fix $\bridge\xi:t'\leadsto s'$ to be
	the only $c$-bridge with those endpoints.
	Observe indeed that $\bridge\xi\in\paths(p,I)$ as soon as $I(s_0)=s$.
	Then by Lemmas \ref{lem:preserved} and \ref{lem:bound},
	we can concatenate $\chi_1^-\bridge\xi\chi_2^-:u_1\leadsto_{p,I} u_2$.
\end{proof}
Despite the notation, the definition of $\bridge\xi$ does depend
on $\chi_1$ and $\chi_2$: whenever we use Lemma~\ref{lem:bridge}, however,
the values of $\chi_1$ and $\chi_2$ should be clear from the context.

We say a path $\chi\in\paths(q)$ is a \definitive{straight path} 
if it has no slipknot. Such a path is essentially a path of $p$,
up to replacing residuals with bridges:
\begin{lem}
	\label{lem:straight}
	If $\chi$ is a straight path,
	there exists a unique sequence of pairwise distinct
	eliminated cuts
	$c_1,\dotsc,c_n\in\cuts(p)\setminus\cuts(q)$,
	such that we can write
	$\chi=\chi_1(d_1)\cdots\chi_n(d_n)\chi_{n+1}$
	where each $d_i$ is a residual of $c_i$ 
	and $\chi$ crosses no other residual.
	Moreover we can form
	$\chi^-=\chi^-_1\bridge{(d_1)}\cdots\chi^-_n\bridge{(d_n)}\chi_{n+1}^-\in\paths(p)$.
\end{lem}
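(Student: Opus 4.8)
The plan is to proceed by induction on the length of the straight path $\chi$, peeling off residual cuts one at a time from the left.

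\medskip

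First I would set up the induction. If $\chi$ crosses no residual cut at all, then $\chi$ is a preserved path, so by Lemma~\ref{lem:preserved} we may take $n=0$ and $\chi^-=\chi^-_1$ with $\chi_1=\chi$; uniqueness is trivial here. Otherwise, let $d_1$ be the \emph{first} residual cut crossed by $\chi$, say $d_1$ is a residual of the eliminated cut $c_1$. Write $\chi=\chi_1(d_1)\chi'$ where $\chi_1$ crosses no residual cut. Since $\chi_1$ crosses only preserved edges, it is a preserved path. The key structural observation, which I would spell out carefully, is that $d_1 = \cut{t_i}{s_i}$ with $t_i\in\vec t$ and $s_i\in\vec s$ where $c_1 = \cut{\otimes(\vec t)}{\parr(\vec s)}$; so the edge $(d_1)$ crossed by $\chi$ has the form $t_i\sim_{d_1} s_i$ (in one direction or the other).

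\medskip

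Now I would analyse $\chi'$. The crucial point is that $\chi$, being straight, has no slipknot, so $\chi'$ crosses \emph{no} residual of $c_1$ (any such residual together with $d_1$ would form a slipknot $(d_1)\xi(d'')$). So the residuals crossed by $\chi'$ are residuals of cuts distinct from $c_1$, and $\chi'$ is itself a straight path with one fewer residual cut crossed. By the induction hypothesis, $\chi'$ decomposes uniquely as $\chi'=\chi_2(d_2)\cdots\chi_n(d_n)\chi_{n+1}$ with $c_2,\dotsc,c_n$ pairwise distinct eliminated cuts, all distinct from $c_1$, and we can form $(\chi')^-=\chi_2^-\bridge{(d_2)}\cdots\chi_{n+1}^-\in\paths(p)$. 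Then $\chi=\chi_1(d_1)\chi_2\cdots(d_n)\chi_{n+1}$ is the desired decomposition, and the $c_i$ are pairwise distinct as required. Uniqueness follows from the uniqueness of the choice of $d_1$ as the first residual crossed, combined with the inductive uniqueness for $\chi'$.

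\medskip

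Finally, to assemble $\chi^-$, I would apply Lemma~\ref{lem:bridge} to the occurrence $\chi_1(d_1)\chi_2$: since $\chi_1$ and $\chi_2$ are preserved paths (here using that $\chi_2$ is the preserved prefix of $\chi'$, up to the empty-path convention) and $d_1 = \cut{t_i}{s_i}$ with $t_i\in\vec t$, $s_i\in\vec s$, the lemma provides a $c_1$-bridge $\bridge{(d_1)}$ such that $\chi_1^-\bridge{(d_1)}\chi_2^-\in\paths(p)$; then I splice this together with $(\chi')^-$ using Lemma~\ref{lem:preserved} (for the concatenation of preserved subpaths) and Lemma~\ref{lem:bound} (to check the switchings on the various $\parr$-trees $s_0^{(i)}$ of the eliminated cuts are mutually compatible, which holds because the bridges involve pairwise distinct cuts and hence pairwise distinct $\parr$-premises). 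The main obstacle I anticipate is precisely this last bookkeeping: checking that a single switching $I$ of $p$ simultaneously validates all the bridge subpaths $\bridge{(d_i)}$ and all the preserved subpaths $\chi_j^-$; Lemma~\ref{lem:bound} is tailored for exactly this, since each bridge constrains only its own $\parr$-tree and the preserved pieces constrain none of these, but one must be careful that the same $\parr$-tree does not occur in two different eliminated cuts — which is guaranteed because the $c_i$ are pairwise distinct cuts of $p$ and share no atoms.
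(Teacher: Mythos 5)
Your proposal is correct and follows essentially the same route as the paper: the decomposition is forced by the absence of slipknots (which makes the crossed residuals belong to pairwise distinct eliminated cuts and the intermediate segments preserved), each residual is replaced by a bridge via Lemma~\ref{lem:bridge}, and the pieces are concatenated using Lemmas~\ref{lem:preserved} and~\ref{lem:bound}. The paper states this directly rather than as an explicit left-to-right induction, but the content — including your switching-compatibility bookkeeping, which works because distinct cuts have disjoint atoms and hence distinct $\parr$-trees — is the same.
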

\begin{proof}
	The first part is straightforward reformulation of the absence of slipknots.
	The second part follows by applying Lemma~\ref{lem:bridge}
	to each $\chi_i(d_i)\chi_{i+1}$ (or the reverse path):
	the concatenation of preserved paths and bridges 
	in the definition of $\chi^-$  is allowed by Lemmas~\ref{lem:preserved}
	and \ref{lem:bound}.
\end{proof}

We say two paths $\chi_1,\chi_2\in\paths(q)$
are \definitive{independent} if they are disjoint
and there is no eliminated cut $c$
such that both $\chi_1$ and $\chi_2$ cross 
a residual of $c$.

\begin{lem}
	\label{lem:compatstraight}
	Assume $\chi_1,\dotsc,\chi_n\in\paths(q,J)$ are
	pairwise independent straight paths
	and $c_1,\dotsc,c_k\in\cuts(p)\setminus\cuts(q)$
	are such that no $\chi_i$ crosses a residual of $c_j$,
	for $1\le i\le n$ and $1\le j\le k$.
	Then for any $\xi_1,\dotsc,\xi_k\in\paths(p)$
	such that $\xi_i$ is bound to $c$ for $1\le i\le k$,
	there exists an extension $I$ of $J$ such 
	that $\chi_i^-\in\paths(p,I)$ for $1\le i \le n$
	and $\xi_j\in\paths(p,I)$ for $1\le j \le k$.
\end{lem}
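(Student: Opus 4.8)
The plan is to build a single switching $I$ of $p$ extending $J$ by collecting, path by path, the local switching constraints that each of the given paths imposes, and then checking that these constraints are pairwise compatible --- which is precisely what the independence and disjointness hypotheses are there to guarantee. The key structural observation is that multiplicative reduction creates no $\parr$-tree, so $\parrs(q)=\parrs(p)\setminus\{\parr(\vec s)\tq \cut{\otimes(\vec t)}{\parr(\vec s)}\in\cuts(p)\setminus\cuts(q)\}$, and the $\parr$-conclusions of distinct eliminated cuts are themselves distinct (their atom sets are disjoint, and every tree carries at least one atom). Hence an extension $I$ of $J$ is free on these finitely many ``eliminated'' $\parr$-trees, for which I write $s_0^c$ for the one coming from the eliminated cut $c$, and must agree with $J$ everywhere else. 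By Lemma~\ref{lem:bound}, whether a path bound to $c$ belongs to $\paths(p,I)$ depends only on $I(s_0^c)$, and a path crossing no edge bound to $c$ is insensitive to that value; in particular a preserved path, or a bridge, touches at most one such $\parr$-tree.

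First I would record, for each $i$, the decomposition $\chi_i=\chi_{i,1}(d_{i,1})\cdots\chi_{i,m_i}(d_{i,m_i})\chi_{i,m_i+1}$ provided by Lemma~\ref{lem:straight}, where the $d_{i,l}$ are residuals of pairwise distinct eliminated cuts $c_{i,l}$: forming $\chi_i^-$ through Lemma~\ref{lem:bridge} forces $I(s_0^{c_{i,l}})$ to be a specific premise of $s_0^{c_{i,l}}$, for each $l$, whereas the preserved portions impose nothing (by Lemma~\ref{lem:preserved}). For each $j$, Lemma~\ref{lem:bound} provides a premise $s^{(j)}$ of the $\parr$-conclusion of $c_j$ such that $\xi_j\in\paths(p,I)$ whenever $I$ sends that $\parr$-tree to $s^{(j)}$. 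I would then define $I$ by: $I=J$ on $\parrs(q)$; $I(s_0^{c_{i,l}})$ the premise prescribed by $\chi_i$; $I(s_0^{c_j})=s^{(j)}$; and any premise on the remaining eliminated $\parr$-trees.

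The hard part is the bookkeeping showing that $I$ is well defined and correct, but it is entirely forced by the hypotheses. Well-definedness: within one $\chi_i$ the cuts $c_{i,l}$ are pairwise distinct; for $i\neq i'$, independence of $\chi_i$ and $\chi_{i'}$ forbids them from crossing residuals of a common eliminated cut, so the $\parr$-trees they constrain are disjoint; and the assumption that no $\chi_i$ crosses a residual of any $c_j$, together with the $c_j$ being pairwise distinct, ensures that the constraints coming from the $\chi_i$'s and those coming from the $\xi_j$'s never bear on the same $\parr$-tree. Correctness: each $\chi_i^-$ is a genuine path of $p$ by Lemma~\ref{lem:straight}, all of its bridge conditions hold by construction, and its preserved pieces $\chi_{i,l}^-$ lie in $\paths(p,I)$ by Lemma~\ref{lem:preserved} since $\chi_{i,l}\in\paths(q,J)$ and $I$ extends $J$; each $\xi_j$ lies in $\paths(p,I)$ directly by Lemma~\ref{lem:bound}, as $I(s_0^{c_j})=s^{(j)}$ and the other choices of $I$ leave $\xi_j$ untouched. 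I expect no difficulty beyond carefully tracking which $\parr$-tree each constraint concerns, and here ``pairwise independent'' and ``no $\chi_i$ crosses a residual of $c_j$'' are exactly tailored to make all the constraints land on pairwise distinct $\parr$-trees.
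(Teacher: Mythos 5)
Your proposal is correct and follows essentially the same route as the paper: the paper's (much terser) proof rests on exactly the observation you make explicit, namely that $\chi^-$ can only cross an edge bound to an eliminated cut $c$ if $\chi$ crosses a residual of $c$, so that by Lemmas~\ref{lem:preserved} and \ref{lem:bound} the switching constraints imposed by the $\chi_i^-$'s (through their bridges) and by the $\xi_j$'s bear on pairwise distinct eliminated $\parr$-trees and can be satisfied simultaneously. You have merely spelled out the bookkeeping that the paper leaves implicit.
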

\begin{proof}
	It is sufficient to observe that if $\chi^-$
	crosses an edge bound to an eliminated cut $c$
	then $\chi$ crosses a residual of $c$.
	Then the result is a direct consequence of the definition of $\chi^-_i$,
	together with Lemmas~\ref{lem:preserved} and \ref{lem:bound}.
\end{proof}

The generation of a path is thus compatible with the concatenation
of independent straight paths:
\begin{lem}
	\label{lem:concatstraight}
	If $\chi_1:v\leadsto_q u$ and $\chi_2:u\leadsto_q t$ are independent straight paths 
	then $\chi_1\chi_2$ is a straight path and 
	$(\chi_1\chi_2)^-=\chi_1^-\chi_2^-$.
\end{lem}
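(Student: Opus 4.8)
The plan is to prove the two assertions in turn: first that $\chi_1\chi_2$ is again a straight path, then that $(\chi_1\chi_2)^-=\chi_1^-\chi_2^-$, the second being obtained by gluing together the decompositions provided by Lemma~\ref{lem:straight}. To begin, since independence entails that $\chi_1$ and $\chi_2$ are disjoint and $\chi_1$ ends where $\chi_2$ starts, the observation recalled above (pairwise disjoint paths admit a common switching) yields a switching $J$ with $\chi_1,\chi_2\in\paths(q,J)$, so $\chi_1\chi_2\in\paths(q,J)$. For straightness I would argue by contradiction: a slipknot of $\chi_1\chi_2$ is a subpath $(d)\xi(d')$ with $d,d'$ residuals of a single eliminated cut $c$; being a contiguous subpath of $\chi_1\chi_2$, it is either entirely inside $\chi_1$, entirely inside $\chi_2$, or it straddles the junction vertex $u$. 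The first two cases contradict the straightness of $\chi_1$, resp.\ $\chi_2$; in the third, $d$ is an edge of $\chi_1$ and $d'$ an edge of $\chi_2$ (up to reversing), so both $\chi_1$ and $\chi_2$ cross a residual of $c$, contradicting independence. Hence $\chi_1\chi_2$ has no slipknot, so Lemma~\ref{lem:straight} applies to it.

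For the factorization I would invoke Lemma~\ref{lem:straight} three times. Write $\chi_1=\alpha_1(d_1)\cdots(d_n)\alpha_{n+1}$ and $\chi_2=\beta_1(d'_1)\cdots(d'_m)\beta_{m+1}$, with $d_i$ a residual of the eliminated cut $c_i$ and $d'_j$ a residual of the eliminated cut $c'_j$. Disjointness and independence force the edges $d_1,\dots,d_n,d'_1,\dots,d'_m$ to be pairwise distinct and the cuts $c_1,\dots,c_n,c'_1,\dots,c'_m$ to be pairwise distinct, and moreover no $c_i$ (resp.\ $c'_j$) has a further residual crossed by $\chi_2$ (resp.\ $\chi_1$). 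Since $\alpha_{n+1}$ and $\beta_1$ cross no residual, $\alpha_{n+1}\beta_1$ crosses none either; therefore the unique decomposition of $\chi_1\chi_2$ granted by Lemma~\ref{lem:straight} has exactly the eliminated cuts $c_1,\dots,c_n,c'_1,\dots,c'_m$ with the same residual edges, and preserved segments $\alpha_1,\dots,\alpha_n$, then $\alpha_{n+1}\beta_1$, then $\beta_2,\dots,\beta_{m+1}$. Substituting into the formula of Lemma~\ref{lem:straight}, the expression for $(\chi_1\chi_2)^-$ coincides factor by factor with $\chi_1^-\chi_2^-$, except that $(\alpha_{n+1}\beta_1)^-$ appears where $\chi_1^-\chi_2^-$ has $\alpha_{n+1}^-\beta_1^-$; these agree by Lemma~\ref{lem:preserved}. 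The bridges $\bridge{(d_i)}$ and $\bridge{(d'_j)}$ are unchanged, since each depends only on the preserved segments adjacent to it and on the anchors they select, which are the same whether computed inside $\chi_1$, inside $\chi_2$, or inside $\chi_1\chi_2$.

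The step I expect to be delicate is the bookkeeping at the junction $u$ needed to justify the previous paragraph: one must check that $\alpha_{n+1}\beta_1$ is genuinely a preserved path — i.e.\ it carries a redirected jump only at its extremities, so that Lemma~\ref{lem:preserved} and Lemma~\ref{lem:bound} legitimately apply — and that the anchor of $u$ produced on the $\chi_1$-side matches the one used on the $\chi_2$-side, so that $\chi_1^-$ and $\chi_2^-$ are actually composable. The configurations to watch are the degenerate ones where $\alpha_{n+1}$ or $\beta_1$ is empty, or where $\chi_1$ reaches $u$ (equivalently, $\chi_2$ leaves $u$) through a redirected jump associated with an eliminated cut having $u$ among the premises of its connective. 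Each of these is handled using that every subtree of $p$ occurs in a unique tree — so $u$ is the conclusion of at most one residual cut and a premise of at most one eliminated connective — together with the acyclicity of $q$, which forbids a path from revisiting $u$; this is also the only place beyond the straightness argument where independence is genuinely used.
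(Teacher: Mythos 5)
Your proof is correct and follows essentially the same route as the paper's: straightness of $\chi_1\chi_2$ is deduced from independence (a slipknot straddling the junction would force both paths to cross residuals of the same cut), and the factorization is obtained by concatenating the decompositions of Lemma~\ref{lem:straight} and applying Lemma~\ref{lem:preserved} to the merged middle segment $\alpha_{n+1}\beta_1$. The paper's proof is just a two-line compression of exactly this argument, so your version only differs in spelling out the bookkeeping at the junction that the paper leaves implicit.
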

\begin{proof}
	That $\chi_1\chi_2$ is a straight path follows directly from the hypotheses.
	Write $\chi_i=\chi^i_1(d^i_1)\cdots\chi^i_{n_i}(d^i_{n_i})\chi^i_{n_i+1}$:
	it is then sufficient to apply the definition of $\chi^-_i$ and observe that
	$(\chi^1_{n_1+1}\chi^2_1)^- = (\chi^1_{n_1+1})^-(\chi^2_1)^-$ by Lemma~\ref{lem:preserved}.
\end{proof}

\subsection{Bounces and slipknots}
Let $c=\cut{t_0}{s_0}\in \cuts(p)$ with $t_0=\otimes(\vec t)$.
A \definitive{$c$-bounce} is a path $\chi$ that 
is bound to $c$, that does not cross $c$ and that visits $t_0$:
$\chi$ is either the empty path $\epsilon_{t_0}$,
or $t_0\sim_{t_0,t} t$ or $t\sim_{t_0,t} t_0$ with $t\in\vec t$,
or $t\sim_{(t_0,t)}t_0\sim_{(t_0,t')} t'$ with $t\not=t'\in\vec t$.
Given $t,t'\in \{t_0\}\cup\vec t$, such that either $t=t_0$ or $t'=t_0$ or $t\not= t'$,
there is a unique $c$-bounce $t\leadsto t'$.

\begin{lem}
	\label{lem:bounce}
	Assume $\chi_1\xi\chi_2\in\paths(q)$ and
	$\xi=(t_1\sim_{\cut{t_1}{s_1}} s_1)\xi'(s_2\sim_{\cut{t_2}{s_2}} t_2)$,
	where $\chi_1$ and $\chi_2$ are preserved paths,
	$t_1,t_2\in\vec t$ and $s_1,s_2\in\vec s$.
	Then there exists a $c$-bounce $\bounce\xi$
	such that $\chi_1^-\bounce\xi\chi_2^-\in\paths(p)$.
\end{lem}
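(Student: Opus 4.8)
The plan is to mirror the proof of Lemma~\ref{lem:bridge}, with a $c$-bounce in place of the $c$-bridge: the only genuinely new point is to check that the endpoints forced on us actually admit a $c$-bounce. First I would fix a switching $J$ of $q$ with $\chi_1\xi\chi_2\in\paths(q,J)$, so that $\chi_1,\xi,\chi_2\in\paths(q,J)$. Since $\xi$ begins by crossing $\cut{t_1}{s_1}$ towards $s_1$ and ends by crossing $\cut{t_2}{s_2}$ towards $t_2$, it runs from $t_1$ to $t_2$; hence $\chi_1:v_1\leadsto_{q,J}t_1$ and $\chi_2:t_2\leadsto_{q,J}v_2$ for some $v_1,v_2$. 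As $\chi_1$ and $\chi_2$ are preserved paths, Lemma~\ref{lem:preserved} gives, for any extension $I$ of $J$, preserved paths $\chi_1^-:u_1\leadsto_{p,I}t_1'$ and $\chi_2^-:t_2'\leadsto_{p,I}u_2$ with $t_1'$ an anchor of $t_1$ and $t_2'$ an anchor of $t_2$. Since $t_1,t_2\in\vec t$ and $c=\cut{t_0}{s_0}$ with $t_0=\otimes(\vec t)$ is eliminated, and since each subtree occurs exactly once, the anchors of $t_1$ are precisely $t_1$ and $t_0$, and likewise for $t_2$; so $t_1'\in\{t_1,t_0\}$ and $t_2'\in\{t_2,t_0\}$.

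The key step is then to exhibit a (unique) $c$-bounce $\bounce\xi:t_1'\leadsto t_2'$, for which it suffices that $t_1'=t_0$, or $t_2'=t_0$, or $t_1'\neq t_2'$. Here I would argue that $t_1\neq t_2$: the edges of $\xi$, being edges of the path $\chi_1\xi\chi_2$, are pairwise distinct, so its first edge $\cut{t_1}{s_1}$ and last edge $\cut{t_2}{s_2}$ differ; but these are residuals of the same cut $c$, and distinct residuals involve distinct premises of $t_0=\otimes(\vec t)$, whence $t_1\neq t_2$. Consequently, if $t_1'=t_2'$ then $t_1'=t_2'=t_0$, and in every case the existence condition for a $c$-bounce with endpoints $t_1',t_2'\in\{t_0\}\cup\vec t$ is met. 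Fix the corresponding $\bounce\xi$.

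It remains to glue the three pieces into a path of $p$. The edges crossed by $\bounce\xi$ are bound to the eliminated cut $c$, hence not preserved, so they are disjoint from the (preserved) edges of $\chi_1^-$ and $\chi_2^-$; and $\chi_1,\chi_2$ are disjoint as subpaths of one path of $q$, so $\chi_1^-,\chi_2^-$ have disjoint edge sets as well. A $c$-bounce crosses no $\parr$-edge — it never visits $s_0$ — so $\bounce\xi\in\paths(p,I')$ for every switching $I'$, while $\chi_1^-,\chi_2^-\in\paths(p,I)$ for any extension $I$ of $J$ by Lemma~\ref{lem:preserved}. Fixing such an $I$, the endpoints match ($\chi_1^-$ ends at $t_1'$, $\bounce\xi$ goes $t_1'\leadsto t_2'$, $\chi_2^-$ starts at $t_2'$) and, exactly as in the proof of Lemma~\ref{lem:bridge}, Lemmas~\ref{lem:preserved} and~\ref{lem:bound} allow us to concatenate into $\chi_1^-\bounce\xi\chi_2^-:u_1\leadsto_{p,I}u_2$, a well-defined path of $p$. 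I expect the only mildly delicate point to be the anchor bookkeeping of the first paragraph together with the inequality $t_1\neq t_2$ guaranteeing the bounce exists; the rest is routine concatenation, strictly parallel to Lemma~\ref{lem:bridge}.
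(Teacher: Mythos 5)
Your proposal is correct and follows essentially the same route as the paper's proof: obtain the anchors $t'_1\in\{t_1,t_0\}$ and $t'_2\in\{t_2,t_0\}$ via Lemma~\ref{lem:preserved}, observe that $t_1\neq t_2$ because the two residuals are distinct, pick the unique $c$-bounce with those endpoints, and concatenate using Lemmas~\ref{lem:preserved} and~\ref{lem:bound}. The extra care you take over the disjointness of edge sets and the switching-independence of the bounce is exactly the content the paper delegates to those two lemmas.
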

\begin{proof}
	Necessarily, $\cut{t_1}{s_1}\not=\cut{t_2}{s_2}$, hence $t_1\not=t_2$.
	Write $\chi_1:v_1\leadsto_{q} t_1$ and $\chi_2:t_2\leadsto_{q} v_2$:
	we obtain
	$\chi_1^-:u_1\leadsto_{p} t'_1$ 
	and
	$\chi_2^-:t'_2\leadsto_{p} u_2$
	where $u_1$, $t'_1$, $t'_2$ and $u_2$ are anchors of 
	$v_1$, $t_1$, $t_2$ and $v_2$ respectively,
	for any extension $I$ of $J$.
	In particular, $t'_1\in\{t_1,t_0\}$ and $t'_2\in\{t_2,t_0\}$ with $t_1\not=t_2$,
	and we can fix $\bounce\xi:t'_1\leadsto t'_2$ to be the only 
	the only $c$-bounce with those endpoints.
	Then we can concatenate $\chi_1^-\bounce\xi\chi_2^-:u_1\leadsto_p  u_2$
	by Lemmas \ref{lem:preserved} and \ref{lem:bound}.
\end{proof}
Again, the definition of $\bounce\xi$ does depend on $\chi_1$ and $\chi_2$
but these should be clear from the context when we use Lemma~\ref{lem:bounce}.

We are now ready to prove that paths in $q$ are alternations of 
straight paths and slipknots, and generate paths in $p$
by replacing slipknots with bounces:
\begin{thm}
	\label{thm:paths}
	For each path $\chi\in\paths(q)$,
	there exists a unique sequence of pairwise distinct
	eliminated cuts 
	$c_1,\dotsc,c_n\in\cuts(p)\setminus\cuts(q)$,
	such than we can write
	$\chi=\chi_1\xi_1\cdots\chi_n\xi_n\chi_{n+1}$
	where:
	\begin{itemize}
		\item each $\chi_i$ is a straight path that
			crosses no residual of $c_j$ for $1\le j\le n$;
		\item $\chi_i$ and $\chi_j$ are independent when $i\not=j$;
		\item each $\xi_i$ is a slipknot $(d_i)\xi'_i(d'_i):t_i\leadsto t'_i$
			where $d_i$ and $d'_i$ are residuals of $c_i$,
			and $t_i$ and $t'_i$ are distinct premises of the $\otimes$-tree of $c_i$.
	\end{itemize}
	Moreover $\chi^-=\chi^-_1\bounce{\xi_1}\cdots\chi^-_n\bounce{\xi}_n\chi^-_{n+1}\in\paths(p)$.
\end{thm}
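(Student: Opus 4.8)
The plan is to establish the existence of the decomposition together with the formula $\chi^-\in\paths(p)$ by induction on $\ln(\chi)$, and then to observe that the construction is forced at each step, which yields uniqueness. The base case is when $\chi$ crosses at most one residual of each eliminated cut: then $\chi$ has no slipknot, hence is straight, and Lemma~\ref{lem:straight} gives the conclusion directly with $n=0$, $\chi=\chi_1$ and $\chi^-=\chi_1^-$. In general the argument rests on a careful control --- via acyclicity --- of how a path of $q$ can meet the residuals of a single eliminated cut.

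For the inductive step, assume $\chi$ is not straight. Among all residuals $d$ that $\chi$ crosses and such that $\chi$ also crosses, strictly later, another residual of the same cut, let $d_1$ be the earliest one along $\chi$; let $c_1$ be its cut and $d'_1$ the next residual of $c_1$ crossed after $d_1$, and write $\chi=\chi_1(d_1)\xi'_1(d'_1)\chi'$ with $\xi_1=(d_1)\xi'_1(d'_1)$. Minimality of $d_1$ immediately forces $\chi_1$ to be straight and to cross no residual of $c_1$: a slipknot inside $\chi_1$, or a residual of $c_1$ crossed by $\chi_1$, would provide a residual whose cut has another residual crossed later, yet which is itself crossed before $d_1$. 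Since $\ln(\chi')<\ln(\chi)$ (a slipknot has length at least $2$), the induction hypothesis applies to $\chi'$ and gives $\chi'=\chi_2\xi_2\cdots\chi_n\xi_n\chi_{n+1}$ over eliminated cuts $c_2,\dots,c_n$. Granting the two structural facts below, one then verifies the remaining requirements: $\xi_1$ has the form required of a $\xi_i$; $c_1\notin\{c_2,\dots,c_n\}$; $\chi_1$ crosses no residual of any $c_j$; and $\chi_1$ is independent from each $\chi_j$ --- so that $\chi=\chi_1\xi_1\chi_2\xi_2\cdots\chi_{n+1}$ has the announced shape.

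The two structural facts, which I expect to be the main obstacle, concern a path $\chi\in\paths(q)$ and an eliminated cut $c=\cut{t_0}{s_0}$ with $t_0=\otimes(\vec t)$, $s_0=\parr(\vec s)$, and are proved by exploiting acyclicity --- of $p$, and of $q$ via Lemma~\ref{lemma:acyclicity} --- through the path-generation techniques of Lemmas~\ref{lem:preserved}--\ref{lem:bounce}. \emph{(i)} $\chi$ crosses at most two residuals of $c$: three crossings would, through the associated $c$-bridges and $c$-bounces, produce a path in $p$ revisiting $t_0$ (or $t_0$ and $s_0$ at once), contradicting acyclicity of $p$; together with the minimality of $d_1$ this gives $c_1\notin\{c_2,\dots,c_n\}$. \emph{(ii)} If $\chi$ crosses exactly two residuals $d,d'$ of $c$ with no residual of $c$ strictly between them, then the slipknot $(d)\xi'(d')$ runs between two \emph{distinct} premises of $t_0$: it enters crossing $d$ from some $a\in\vec t$ towards its partner in $\vec s$, and leaves crossing $d'$ from some element of $\vec s$ towards $a'\in\vec t$, with $a\neq a'$. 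The delicate point in (ii) is to exclude the slipknot entering or leaving on the $\parr$-side: such a configuration has no counterpart among the $c$-bridges and $c$-bounces of $p$ --- these join the $\otimes$-side to the $\parr$-side, or two premises of $t_0$, but never two premises of $s_0$ through the $\otimes$-side --- and one checks that it would in fact create a cycle in $q$. This is where the geometry of switchings is essential, notably the fact that in $q$ the $\otimes$-root $t_0$ has been removed, so the residual cuts of $c$ lie pairwise far apart.

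It remains to assemble $\chi^-$. Each straight $\chi_i$ splits into preserved pieces separated by residual-crossings by Lemma~\ref{lem:straight}, and its edges adjacent to a slipknot are preserved by Lemma~\ref{lem:nearcut}; hence Lemma~\ref{lem:bounce} applies at every slipknot boundary and produces the $c_i$-bounce $\bounce{\xi_i}$, Lemma~\ref{lem:compatstraight} provides a single switching $I$ of $p$ for which all the $\chi_i^-$ and all the $\bounce{\xi_i}$ are simultaneously $I$-paths, and Lemmas~\ref{lem:preserved} and~\ref{lem:bound} allow their concatenation, with endpoints matching by the anchor conditions. This yields $\chi^-=\chi_1^-\bounce{\xi_1}\cdots\chi_n^-\bounce{\xi_n}\chi_{n+1}^-\in\paths(p)$. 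Finally, uniqueness of $c_1,\dots,c_n$ is immediate: at each stage $d_1$, hence $c_1$ and the split of $\chi$, is uniquely determined, so a straightforward induction shows the whole sequence is forced.
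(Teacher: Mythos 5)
Your overall architecture matches the paper's: an induction on $\ln(\chi)$, with the two ``structural facts'' playing the roles of the paper's Lemmas~\ref{lem:slipknot} and \ref{lem:loop}, and the assembly of $\chi^-$ via bounces and a compatible switching. But the justification of your fact \emph{(ii)} is inverted. The configurations to be excluded are not excluded because they ``have no counterpart among the $c$-bridges and $c$-bounces'' --- quite the opposite. The correct mechanism is: the inner path $\xi'$ of the putative slipknot is a strictly shorter path of $q$, so the induction hypothesis (not merely Lemmas~\ref{lem:preserved}--\ref{lem:bounce}, which only handle preserved pieces) yields $(\xi')^-:v\leadsto_p v'$ between anchors of its endpoints; whenever one endpoint of $\xi'$ lies on the $\otimes$-side, there \emph{does} exist a $c$-bridge or $c$-bounce $v'\leadsto v$ bound to $c$, and concatenating it with $(\xi')^-$ produces a non-empty cycle \emph{in $p$} --- not in $q$, where no such cycle is apparent since $t_0$ has been dissolved. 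The only surviving case is the one with \emph{no} connecting path bound to $c$, namely when both endpoints of $\xi'$ are (distinct) premises of $s_0=\parr(\vec s)$, which is exactly what forces the slipknot to run between distinct premises of $t_0$. As written, your argument for (ii) does not go through.

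The second gap is uniqueness. Your argument only shows that \emph{your construction} is deterministic; it does not show that every decomposition satisfying the listed properties coincides with it. The missing ingredient is the well-bracketing of slipknots (the paper's Lemma~\ref{lem:bracketing}): one must rule out a configuration $(d_1)\alpha(d_2)\beta(d'_1)\gamma(d'_2)$ with $d_i,d'_i$ residuals of the same cut $c_i$ and $c_1\neq c_2$. Such an interleaving is compatible with your facts (i) and (ii) --- each cut still has exactly two crossed residuals, and each pair still satisfies (ii) --- yet it would yield \emph{two} distinct decompositions meeting all the stated conditions (one taking the $c_1$-slipknot and leaving $d'_2$ inside a straight component, the other taking the $c_2$-slipknot and leaving $d_1$ inside a straight component), so uniqueness would fail. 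Excluding interleaving requires a separate acyclicity argument: pull back the two outer straight pieces to $p$ and close them with a $c_1$-bridge and a $c_2$-bridge to form a non-empty cycle $\chi_1^-\zeta_2(\reverse{\chi_3})^-\zeta_1$ in $p$. Without this step your proof establishes existence of \emph{a} decomposition but not the uniqueness claimed in the statement.
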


\begin{figure}[t]
	\begin{center}
	\begin{tikzpicture}[
			baseline=(cdots.base),
		]

		\node[lien](tensbas) at (0,0){$\otimes$};
		\node[lien](parbas) at (1,0){$\parr$};
		\node[nomcoupure] at(0.5,-0.8){$c_n$};

		\node[lien](tenshaut) at (1,4){$\otimes$};
		\node[lien](parhaut) at (2,4){$\parr$};
		\node[nomcoupure] at(1.5,3.2){$c_1$};
		\draw[dotted](-1,0.5)--(-0.6,0.5);
		\draw[reseau](-0.5,0.5)--(-0.3,0.5)--(tensbas)--(0,-0.5)--(1,-0.5)--(parbas)--(1,1)--(2,1)--(2,0.8);
		\draw[reseau](parbas)--(1.5,0)--(1.5,0.3)--(1.8,0.3)--(1.8,-0.2)--(2,-0.2);
		\draw[dotted] (2,0)--(2,0.6);
		\draw[reseau](tensbas)--(0.5,0.3)--(0.5,0.5);
		\draw[dotted](0.5,0.5)--(0.5,4);
		\draw[reseau](0.5,4.3)--(0.5,4.5)--(0.8,4.5)--(tenshaut);
		\draw[reseau](tenshaut)--(1,3.5)--(2,3.5)--(parhaut)--(2,5)--(3,5)--(3,4.8);
		\draw[reseau](parhaut)--(2.5,4)--(2.5,4.3)--(2.8,4.3)--(2.8,3.8)--(3,3.8);
		\draw[dotted](3,4)--(3,4.6);

		\draw[reseau](tenshaut)--(1.3,4.3)--(1.3,4.5);
		\draw[dotted](1.3,4.7)--(1.3,5);

		\node[nomchemin]at(1.1,5.2){$\chi_1^-$};

		\node[nomchemin]at(-0.8,0.8){$\chi_{n+1}^-$};

	\end{tikzpicture}\qquad reduces to \qquad
	\begin{tikzpicture}[baseline=(cdots.base)]

		\draw[dotted](-1,0.5)--(-0.6,0.5);
		\draw[reseau](-0.5,0.5)--(-0.3,0.5)--(-0.3,-0.5)--(1.5,-0.5)--(1.5,0)--(1.5,0.3)--(1.8,0.3)--(1.8,-0.2)--(2,-0.2);
		\draw[reseau](0.5,0.5)--(0.5,0)--(1,0)--(1,1)--(2,1)--(2,0.8); 
		\draw[dotted] (2,0)--(2,0.6);
		\draw[dotted](0.5,0.5)--(0.5,4);
		\draw[dotted](3,4)--(3,4.6);

		\draw[reseau](0.5,4.3)--(0.5,4.5)--(1,4.5)--(1,3.5)--(2.5,3.5)--(2.5,4)--(2.5,4.3)--(2.8,4.3)--(2.8,3.8)--(3,3.8);
		\draw[reseau](1.5,4.5)--(1.5,4)--(2,4)--(2,5)--(3,5)--(3,4.8);

		\draw[dotted](1.5,4.7)--(1.5,5);

		\node[nomcoupure]at(1.8,4.3){$d_1$};
		\node[nomcoupure]at(1.8,3.2){$d_1'$};

		\node[slipknot]at(3.3,4.3){$\xi_1'$};
		\node[nomchemin]at(1.2,5.2){$\chi_1$};

		\node[nomcoupure]at(0.8,0.3){$d_n$};
		\node[nomcoupure]at(0.8,-0.8){$d_n'$};

		\node[slipknot]at(2.3,0.3){$\xi_n'$};
		\node[nomchemin]at(-0.8,0.8){$\chi_{n+1}$};
	\end{tikzpicture}

\end{center}
\caption{Schematic shape of slipknots on a path (axiom and cut nodes
ommited)}
\label{fig:paths_slipknots}
\end{figure}

Figure~\ref{fig:paths_slipknots} illustrates the relationship between 
$\chi^-$ and $\chi$, in the simple case where each cut is between binary
connectives, and no redirected jump is involved:
each $\chi_i^-$ bounces on the $\otimes$ side of $c_i$ and joins $\chi_{i+1}^-$
directly instead of crossing the cut.

The proof of Theorem~\ref{thm:paths} is by induction on the length of $\chi$.
We break it down into a series of intermediate results.
Formally, given $\chi\in\paths(q)$:
\begin{itemize}
	\item we first establish Lemma~\ref{lem:compat} for all paths 
		such that Theorem~\ref{thm:paths} holds;
	\item we deduce Lemma~\ref{lem:loop} for $\chi$
		from Theorem~\ref{thm:paths} and Lemma~\ref{lem:compat} applied to $\chi$;
	\item Lemma~\ref{lem:slipknot} for $\chi$ is a direct consequence 
		of Lemma~\ref{lem:loop} applied to strict subpaths of $\chi$;
	\item we prove Lemma~\ref{lem:bracketing} for $\chi$ by applying 
		Theorem~\ref{thm:paths}, Lemma~\ref{lem:compat} and Lemma~\ref{lem:slipknot}
		to strict subpaths of $\chi$;
	\item then we prove Theorem~\ref{thm:paths} for $\chi$ by applying
		Lemmas~\ref{lem:slipknot} and \ref{lem:bracketing} to $\chi$,
		and Lemma~\ref{lem:compat} to strict subpaths of $\chi$.
\end{itemize}

\begin{lem}
	\label{lem:compat}
	Assume $\chi_1,\dotsc,\chi_n\in\paths(q,J)$ are pairwise disjoint paths,
	and $c_1,\dotsc,c_k\in\cuts(p)\setminus\cuts(q)$,
	are such that no $\chi_i$ crosses a residual of $c_j$,
	for $1\le i\le n$ and $1\le j\le k$.
	Then for any $\xi_1,\dotsc,\xi_k\in\paths(p)$
	such that $\xi_i$ is bound to $c$ for $1\le i\le k$,
	there exists an extension $I$ of $J$ such 
	that $\chi_i^-\in\paths(p,I)$ for $1\le i \le n$
	and $\xi_j\in\paths(p,I)$ for $1\le j \le k$.
\end{lem}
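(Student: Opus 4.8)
### Proof proposal for Lemma~\ref{lem:compat}

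The plan is to reduce this statement to the already-established Lemma~\ref{lem:compatstraight} by decomposing each $\chi_i$ into straight pieces and slipknots, using Theorem~\ref{thm:paths}. Recall the intended order of the bootstrap: Lemma~\ref{lem:compat} is invoked only for paths $\chi$ for which Theorem~\ref{thm:paths} already holds, so I may freely apply that theorem to each of $\chi_1,\dots,\chi_n$. This gives, for each $i$, a decomposition $\chi_i = \chi_i^1 \xi_i^1 \cdots \chi_i^{m_i}\xi_i^{m_i}\chi_i^{m_i+1}$ where the $\chi_i^a$ are pairwise independent straight paths crossing no residual of the cuts $c_i^b$ associated to the slipknots $\xi_i^b = (d_i^b)(\xi')_i^b(d_i'{}^b)$, and where $\chi_i^- = (\chi_i^1)^- \bounce{\xi_i^1}\cdots(\chi_i^{m_i})^-\bounce{\xi_i^{m_i}}(\chi_i^{m_i+1})^-$, a path of $p$ obtained by concatenating the generated straight paths with the $c_i^b$-bounces.

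Next I would assemble the full collection of ingredients to which Lemma~\ref{lem:compatstraight} will be applied. On the straight side, take the family of all pieces $\{\chi_i^a : 1\le i\le n,\ 1\le a\le m_i+1\}$. These are pairwise independent: within a single $\chi_i$ this is part of Theorem~\ref{thm:paths}, and for $i\neq i'$ it follows from the fact that $\chi_i$ and $\chi_{i'}$ are disjoint (hence cross no common edge, so in particular no common residual, and by the observation that if $\chi^-$ crosses an edge bound to an eliminated cut $c$ then $\chi$ crosses a residual of $c$, one checks there is no shared eliminated cut either). On the ``bound'' side, take the family consisting of the given $\xi_1,\dots,\xi_k$ together with all the bounces $\bounce{\xi_i^b}$: each bounce is, by definition, bound to the cut $c_i^b$, so it qualifies. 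The hypothesis that no $\chi_i$ crosses a residual of $c_j$ transfers to: no straight piece $\chi_i^a$ crosses a residual of $c_j$ (since the $\chi_i^a$ only cross edges already crossed by $\chi_i$). Likewise, no $\chi_i^a$ crosses a residual of any $c_{i'}^b$ — within $\chi_i$ this is the first bullet of Theorem~\ref{thm:paths}, and across different $i$ it is disjointness.

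Applying Lemma~\ref{lem:compatstraight} to this combined family yields an extension $I$ of $J$ such that every $(\chi_i^a)^- \in \paths(p,I)$ and every element of the bound family — the $\xi_j$ and all the $\bounce{\xi_i^b}$ — lies in $\paths(p,I)$. It then remains to observe that for this very $I$ we recover $\chi_i^- \in \paths(p,I)$: indeed $\chi_i^-$ is the explicit concatenation displayed above, and all of its constituent pieces have just been placed simultaneously in $\paths(p,I)$; the concatenation is legitimate because the relevant gluing (preserved path followed by bounce, bounce followed by preserved path) is sanctioned by Lemmas~\ref{lem:preserved} and \ref{lem:bound}, exactly as in the proof of Theorem~\ref{thm:paths}. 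Since $\xi_j\in\paths(p,I)$ is immediate, this closes the argument. The main obstacle I anticipate is bookkeeping rather than conceptual: verifying carefully that all the straight pieces coming from different $\chi_i$'s are genuinely pairwise independent in the technical sense (disjointness plus no shared eliminated cut), and that feeding both the externally-given $\xi_j$ and the internally-produced bounces into Lemma~\ref{lem:compatstraight} does not create a hidden switching conflict on some $\parr$-tree $s_0$ of an eliminated cut — which is precisely what the second bullet of Lemma~\ref{lem:bound} rules out, since independent paths cannot both constrain the switching at the same eliminated $\parr$.
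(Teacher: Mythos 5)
Your proof is essentially the paper's own, unfolded: the paper disposes of this lemma in one line, as a direct consequence of the definition of $\chi_i^-$ in Theorem~\ref{thm:paths} together with Lemma~\ref{lem:compatstraight}, which is exactly the reduction you carry out (decompose each $\chi_i$ into its straight pieces and slipknots, feed the straight pieces plus the given $\xi_j$'s and the bounces into Lemma~\ref{lem:compatstraight}, and reglue). One caveat on the step you yourself flag as the delicate one: your claim that disjointness of $\chi_i$ and $\chi_{i'}$ already yields that they share no eliminated cut is not valid as written --- two disjoint paths may cross \emph{distinct} residuals of the same eliminated cut $c$, and the resulting bridges can then force incompatible values of the switching on the $\parr$-tree of $c$, which is precisely the conflict the second bullet of Lemma~\ref{lem:bound} does \emph{not} rule out. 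This is an imprecision inherited from the statement itself, which hypothesizes only pairwise disjoint paths whereas every invocation in the paper (Lemmas~\ref{lem:loop} and \ref{lem:bracketing}, Theorem~\ref{thm:paths}) supplies pairwise \emph{independent} ones; with independence in the hypothesis your argument goes through as written.
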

\begin{proof}
	This is a direct consequence of the definition of $\chi^-_i$
	in Theorem~\ref{thm:paths},
	together with Lemma~\ref{lem:compatstraight}.
\end{proof}

\begin{lem}
	\label{lem:loop}
	Assume $d=\cut{t}{s}$ and $d'=\cut{t'}{s'}$ are distinct residuals of the same cut
	$c=\cut{t_0}{s_0}$ with $t_0=\otimes(\vec t)$, $s_0=\parr(\vec s)$, $t,t'\in\vec t$ and $s,s'\in\vec s$.
	If $u\in d$, $u'\in d'$ and $\chi:u\leadsto_q u'$ crosses no residual of $c$ then $u=s$ and $u'=s'$.
\end{lem}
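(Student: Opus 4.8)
The plan is to argue by contradiction. Since reversing $\chi$ interchanges the roles of the two endpoints, it suffices to prove $u\ne t$; applying this to $\reverse{\chi}:u'\leadsto_q u$ then yields $u'\ne t'$, hence $u=s$ and $u'=s'$. So I would assume $u=t$ and manufacture a non-empty cycle in $p$, contradicting acyclicity. Throughout, write $c=\cut{t_0}{s_0}$ with $t_0=\otimes(\vec t)$ and $s_0=\parr(\vec s)$, and $d=\cut{t}{s}$, $d'=\cut{t'}{s'}$ with $t,t'\in\vec t$ and $s,s'\in\vec s$; since $d\ne d'$ we have $t\ne t'$ and $s\ne s'$, and moreover $u=t\ne u'$ (disjoint atom sets).

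The first step is to apply Theorem~\ref{thm:paths} to $\chi$, producing a path $\chi^-:\hat u\leadsto_p\hat{u'}$ where $\hat u$ (resp.\ $\hat{u'}$) is an anchor of $u$ (resp.\ $u'$). Because $\chi$ crosses no residual of $c$, the cut $c$ carries no slipknot of $\chi$, so every bounce occurring in $\chi^-$ is bound to an eliminated cut other than $c$; together with the observation already used in the proof of Lemma~\ref{lem:compatstraight} (an edge of a straight piece of $\chi^-$ that is bound to an eliminated cut $\tilde c$ forces $\chi$ to cross a residual of $\tilde c$), this shows that $\chi^-$ crosses no edge bound to $c$ — in particular it crosses neither the cut edge $\sim_c$, nor any $\otimes$-edge $\sim_{t_0,t_k}$ with $t_k\in\vec t$, nor the $\parr$-edge $\sim_{s_0}$. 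Also, since $t$ (a premise of $t_0$) has $t_0$ as its unique parent and $t_0$ is a tree of the eliminated cut $c$, the anchors of $t$ are exactly $t$ and $t_0$; likewise those of $t'$ are $t'$ and $t_0$, and those of $s'$ are $s'$ and $s_0$.

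Now, since $u=t$, I would prepend the $\otimes$-edge $\sim_{t_0,t}$ to $\chi^-$ when $\hat u=t$ (legal, as $\chi^-$ avoids it) to obtain $\psi:t_0\leadsto_p\hat{u'}$, a path crossing no edge bound to $c$ apart from possibly this first edge; $\psi$ is non-empty in all cases (when $\hat u=t_0$ this is because $\chi$ must then begin with a redirected jump, so $\chi^-$ begins with that jump edge). It remains to close $\psi$ into a cycle through $t_0$ according to $\hat{u'}\in\{t',t_0,s',s_0\}$: if $\hat{u'}=t_0$, then $\psi$ already is such a cycle; if $\hat{u'}=t'$, append $\sim_{t_0,t'}$ (fresh, since $t'\ne t$); if $\hat{u'}=s_0$, append $\sim_c$; and if $\hat{u'}=s'$, then either $\psi$ visits $s_0$, in which case its initial segment up to $s_0$ followed by $\sim_c$ is a cycle, or $\psi$ avoids $s_0$, in which case I choose the switching with $I(s_0)=s'$ (legal, since $\psi$ crosses no $\parr$-edge at $s_0$) and append $\sim_{s_0}$ then $\sim_c$. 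In every case this yields a non-empty $I$-path from $t_0$ to $t_0$ in $p$, contradicting the acyclicity of $p$.

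The one genuinely delicate ingredient is the very first step — extracting from an \emph{arbitrary} $\chi\in\paths(q)$ a bona fide path $\chi^-\in\paths(p)$ with controlled endpoints (the anchors) and with no edge bound to $c$ — which is precisely what Theorem~\ref{thm:paths} and the preparatory lemmas were designed to supply. Granting that, the rest is routine: a short case analysis on $\hat u$ and $\hat{u'}$ in which, for each case, one only checks that the prepended or appended edges are not already crossed by $\psi$ and that the switching at $s_0$ can be fixed coherently, all of which follows from the single fact that $\chi^-$ touches no edge bound to $c$.
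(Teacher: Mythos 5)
Your proposal is correct and follows essentially the same route as the paper: lift $\chi$ to $\chi^-\in\paths(p)$ via Theorem~\ref{thm:paths}, observe that $\chi^-$ crosses no edge bound to $c$ (since $\chi$ crosses no residual of $c$), and close it up through the $\otimes$/$\parr$/cut edges of $c$ into a non-empty cycle, contradicting acyclicity. The only differences are presentational: you reduce to the single case $u=t$ by reversing the path where the paper runs three symmetric cases, and you build the cycle edge by edge where the paper phrases the same step as "no path $v\leadsto v'$ bound to $c$ can exist" and then exhibits the offending $c$-bounce or $c$-bridge via Lemmas~\ref{lem:compat} and \ref{lem:bound}.
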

\begin{proof}
	Write $\chi=\chi_1\xi_1\cdots\chi_n\xi_n\chi_{n+1}$
	as in Theorem~\ref{thm:paths},
	and $\chi^-:v\leadsto_{p,I} v'$ 
	where $v$ (resp. $v'$) is an anchor of $u$ (resp. $u'$).
	Observe that $\chi$ is non empty because $u\not=u'$.
	Moreover, the first edge of $\chi$ is a preserved edge:
	if $\chi=(u\sim_e v)\chi'$ then $e$ cannot be a cut,
	as otherwise we would have $e=d$ and $\chi$ would cross a residual of $c$.
	It follows that $\chi_1^-$ is non empty, hence $\chi^-$ is non empty.
	Since $\chi^-$ is non empty, there is no path
	$\zeta:v\leadsto v'$ bound to $c$: otherwise,
	by Lemma~\ref{lem:compat}, we could form a non empty cycle
	$\reverse\zeta\chi^-:v'\leadsto_p v'$ .

	If $u=t$ and $u'=t'$, we have $v\in\{t,t_0\}$ 
	and $v'\in\{t',t_0\}$ with $t\not=t'$,
	and we obtain a $c$-bounce $v\leadsto v'$,
	hence a contradiction.
	If $u=t$ and $u'=s'$, we have $v\in\{t,t_0\}$ 
	and $v'\in\{s',s_0\}$, and we obtain a $c$-bridge
	$v\leadsto v'$, hence a contradiction.
	We rule out the case $u=s$ and $u'=t'$ symmetrically.
\end{proof}

\begin{lem}
	\label{lem:slipknot}
	If $\chi\in\paths(q)$ and
	$c=\cut{\otimes(\vec t)}{\parr(\vec s)}\in\cuts(p)\setminus\cuts(q)$,
	then $\chi$ crosses at most two residuals of $c$,
	and in this case we can write 
	$\chi=\chi_1(t\sim_{\cut ts} s)\chi_2(s'\sim_{\cut{t'}{s'}}t')\chi_3$
	with $t,t'\in\vec t$ and $s,s'\in\vec s$.
\end{lem}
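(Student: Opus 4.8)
The plan is to deduce both claims from Lemma~\ref{lem:loop}, applied to strict subpaths of $\chi$. Throughout, recall that $\chi$ is a trail, so it crosses each edge of $q$ at most once; in particular it crosses each residual of $c$ at most once, and if it crosses a residual $d=\cut{t_d}{s_d}$ (with $t_d\in\vec t$ and $s_d\in\vec s$, since $t_0=\otimes(\vec t)$ and $s_0=\parr(\vec s)$), it traverses the edge $d$ either from $t_d$ to $s_d$ or from $s_d$ to $t_d$.

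First I would isolate the following local observation. Suppose $\chi$ crosses two residuals $d=\cut{t_d}{s_d}$ and $d'=\cut{t_{d'}}{s_{d'}}$, appearing in this order, with no residual of $c$ crossed strictly between them; write $\chi=\alpha\,(d)\,\beta\,(d')\,\gamma$ accordingly, so that $\beta$ is a strict subpath of $\chi$ from an endpoint of $d$ to an endpoint of $d'$ which crosses no residual of $c$. Since $d\neq d'$ are residuals of $c$, Lemma~\ref{lem:loop} applies to $\beta$ and forces $\beta$ to start at $s_d$ and end at $s_{d'}$. Consequently the occurrence of $(d)$ just before $\beta$ is the crossing $t_d\sim_d s_d$, and the occurrence of $(d')$ just after $\beta$ is the crossing $s_{d'}\sim_{d'} t_{d'}$; in particular $\chi$ leaves the residual $d'$ at the node $t_{d'}$, which differs from $s_{d'}$.

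Next I would prove the bound by contradiction. Assume $\chi$ crosses at least three residuals of $c$, and let $d_1,d_2,d_3$ be the first three, in order of appearance along $\chi$; then no residual of $c$ is crossed between $d_1$ and $d_2$, nor between $d_2$ and $d_3$. Applying the local observation to the pair $(d_1,d_2)$, the subpath of $\chi$ issued from $d_2$ starts at $t_{d_2}$. Applying it to the pair $(d_2,d_3)$, that same subpath must start at $s_{d_2}$. Since $t_{d_2}\neq s_{d_2}$, this is absurd, so $\chi$ crosses at most two residuals of $c$.

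Finally, if $\chi$ crosses exactly two residuals $\cut ts$ and $\cut{t'}{s'}$, in this order, the local observation applied to this pair yields $\chi=\chi_1\,(t\sim_{\cut ts} s)\,\chi_2\,(s'\sim_{\cut{t'}{s'}} t')\,\chi_3$ with $t,t'\in\vec t$ and $s,s'\in\vec s$, which is the announced form. The only nontrivial ingredient here is Lemma~\ref{lem:loop}; the rest is careful bookkeeping of the direction in which a path may traverse a residual, so I do not anticipate any real obstacle beyond stating the "consecutive residuals" decomposition precisely.
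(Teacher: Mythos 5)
Your proof is correct and follows essentially the same route as the paper's: both deduce the decomposition and the bound from Lemma~\ref{lem:loop} applied to the subpaths of $\chi$ lying between consecutive residual crossings, which forces each such subpath to run from a $\parr$-side premise to a $\parr$-side premise. Your way of ruling out a third residual (the two forced orientations of the crossing of $d_2$ clash) is just a slight repackaging of the paper's contradiction obtained by applying Lemma~\ref{lem:loop} to a strict subpath of $\chi_1$ or $\chi_3$.
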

\begin{proof}
	If $\chi=\chi_1(d)\chi_2(d')\chi_3$, where
	$d=\cut{t}{s}$ and $d'=\cut{t'}{s'}$ are residuals of $c$
	with $t,t'\in\vec t$ and $s,s'\in\vec s$,
	and if $\chi_2$ crosses no residual of $c$, then 
	by Lemma~\ref{lem:loop} applied to $\chi_2$, we obtain
	$\chi=\chi_1(t\sim_{\cut ts} s)\chi_2(s'\sim_{\cut{t'}{s'}}t')\chi_3$.
	If moreover $\chi_1$ (resp. $\chi_3$) crossed another residual 
	of $c$,  we would obtain a contradiction by applying 
	Lemma~\ref{lem:loop} to a strict subpath of $\chi_1$ (resp. $\chi_3$)
	hence of $\chi$.
\end{proof}

\begin{lem}\label{lem:bracketing}
	Slipknots are well-bracketed in the following sense:
	there is no path $\chi=(d_1)\chi_1(d_2)\chi_2(d'_1)\chi_3(d'_2)\in\paths(q)$
	such that, for $1\le i\le 2$, $d_i$ and $d'_i$ are residuals of the same cut.
\end{lem}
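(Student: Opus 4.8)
The plan is to derive a contradiction with the acyclicity of $p$, by showing that such a $\chi$ would produce a non-empty cycle in $p$. So suppose $\chi=(d_1)\chi_1(d_2)\chi_2(d'_1)\chi_3(d'_2)\in\paths(q)$, where $d_i,d'_i$ are residuals of an eliminated cut $c^{(i)}=\cut{t_0^{(i)}}{s_0^{(i)}}$ with $t_0^{(i)}=\otimes(\vec t^{(i)})$ and $s_0^{(i)}=\parr(\vec s^{(i)})$, for $i\in\{1,2\}$. First I normalise the data using Lemma~\ref{lem:slipknot}, applied to $\chi$ once with $c=c^{(1)}$ and once with $c=c^{(2)}$: this gives $c^{(1)}\neq c^{(2)}$ (else $\chi$ would cross four residuals of one cut), that $d_1,d'_1$ (resp. $d_2,d'_2$) are the only residuals of $c^{(1)}$ (resp. $c^{(2)}$) crossed by $\chi$, and — reading off the crossing directions, using that $d_1$ is the first edge of $\chi$ and $d'_2$ the last — that $\chi$ crosses $d_1=\cut{t_1}{s_1}$ from $t_1\in\vec t^{(1)}$ to $s_1\in\vec s^{(1)}$, $d'_1=\cut{t'_1}{s'_1}$ from $s'_1$ to $t'_1$, $d_2$ from $t_2$ to $s_2$ and $d'_2$ from $s'_2$ to $t'_2$, with $t_i\neq t'_i$ and $s_i\neq s'_i$; moreover none of $\chi_1,\chi_2,\chi_3$ crosses a residual of $c^{(1)}$ or $c^{(2)}$, and $\chi_3$ is non-empty (a cut cannot be immediately followed by another cut, cf. Lemma~\ref{lem:nearcut}). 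The crucial feature is that $\chi$ now exhibits the $c^{(1)}$-slipknot $\eta_1=(d_1)[\chi_1(d_2)\chi_2](d'_1)$ and the $c^{(2)}$-slipknot $\eta_2=(d_2)[\chi_2(d'_1)\chi_3](d'_2)$, whose occurrences overlap with neither containing the other.

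Now I transport a piece of $\chi$ to $p$. Consider the strict subpath $\lambda:=(d_1)\chi_1(d_2)\chi_2(d'_1)\chi_3$, i.e. $\chi$ minus its last edge. Its only residuals of $c^{(1)}$ are $d_1,d'_1$, which occur at the start of $\lambda$ and form the slipknot $\eta_1$, with $\chi_3$ trailing; so Theorem~\ref{thm:paths} applied to $\lambda$ (legitimate since $\lambda$ is strictly shorter than $\chi$) gives $\lambda^-=\bounce{\eta_1}\,\chi_3^-\in\paths(p)$ with $\bounce{\eta_1}$ a $c^{(1)}$-bounce issuing from $t_1$ through $t_0^{(1)}$. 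Hence $\lambda^-$ uses no edge of the $c^{(2)}$-gadget (the crossing of $d_2$ being swallowed inside $\eta_1$) and, on the $c^{(1)}$-gadget, only $\otimes$-edges at $t_0^{(1)}$ — in particular neither the cut-edge $c^{(1)}$ nor a $\parr$-edge of $s_0^{(1)}$; and $\chi_3^-$, built from $\chi_3$, uses no edge of either gadget and is non-empty. I also note that $\chi_1$ and $\chi_3$ are independent: if some further eliminated cut had residuals crossed by both, these together with $d_1,d'_1$ would give interleaved slipknots in $\lambda$, contradicting Theorem~\ref{thm:paths} for $\lambda$; hence the generated paths $\chi_1^-$ and $\chi_3^-$ are edge-disjoint, and so are $\chi_1^-$ and $\bounce{\eta_1}$.

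To form the cycle, drop the first $\otimes$-edge $(t_0^{(1)},t_1)$ of $\bounce{\eta_1}$, yielding a non-empty path $\lambda^{--}$ from $t_0^{(1)}$ to an anchor of $s'_2$ that still avoids the cut-edge $c^{(1)}$; then close it back to $t_0^{(1)}$ by concatenating a $c^{(2)}$-bridge out through $t_0^{(2)}$ (landing on the anchor of $t_2$ at which $\chi_1^-$ ends), then $\reverse{\chi_1^-}$ (which returns to the anchor of $s_1$ at which $\chi_1^-$ starts), then a $c^{(1)}$-bridge from there through the still-unused cut-edge $c^{(1)}$ to $t_0^{(1)}$. The three appended segments lie in the $c^{(1)}$- and $c^{(2)}$-gadgets, are edge-disjoint from $\lambda^{--}$ and from one another, and their only switching demands concern the two distinct $\parr$-trees $s_0^{(1)}$ and $s_0^{(2)}$; the anchors at the three junctions match once one chooses, for each appended bridge, whether it ends at a $\otimes$-premise or at the $\otimes$-root. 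Lemma~\ref{lem:compat} then supplies a single switching of $p$ making $\lambda^-$, $\chi_1^-$ and the two bridges into paths simultaneously, and their concatenation is a non-empty closed path $t_0^{(1)}\leadsto_p t_0^{(1)}$ — a cycle, contradicting the acyclicity of $p$.

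I expect the whole difficulty to sit in this reassembly step: in organising the pieces so that each of the eliminated cuts $c^{(1)}$, $c^{(2)}$ is traversed exactly once by the final closed walk (a slipknot turned into a bounce, plus at most one adjacent bridge on the \emph{other} gadget), which is, in disguise, precisely the impossibility of the overlapping, non-nested configuration of $\eta_1$ and $\eta_2$; and in the accompanying anchor bookkeeping for the redirected jumps that may sit at the ends of $\chi_1$ and $\chi_3$. Everything else — the direction data from Lemma~\ref{lem:slipknot}, the independence of $\chi_1$ and $\chi_3$, the non-emptiness of $\chi_3$ — is there to make that bookkeeping go through, after which acyclicity of $p$ closes the argument.
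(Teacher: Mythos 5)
Your proof is correct and follows essentially the same route as the paper's: normalise the crossing directions via Lemma~\ref{lem:slipknot}, secure the independence of $\chi_1$ and $\chi_3$ (the paper does this by passing to a prefix w.l.o.g., you by invoking the induction hypothesis on the strict subpath $\lambda$ — both legitimate), and then assemble a non-empty cycle in $p$ from $\chi_1^-$, $\chi_3^-$ and one bridge through each of the two eliminated cuts, contradicting acyclicity via Lemma~\ref{lem:compat}. Your closed walk is edge-for-edge the paper's cycle $\chi_1^-\zeta_2(\reverse{\chi_3})^-\zeta_1$, merely traversed in the opposite orientation starting from $t_0^{(1)}$, with the $c^{(1)}$-bounce-plus-bridge repackaging replacing the paper's single bridge $\zeta_1$.
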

\begin{proof}
	Assume $\chi=(d_1)\chi_1(d_2)\chi_2(d'_1)\chi_3(d'_2)\in\paths(q)$
	such that, for $1\le i\le 2$, $d_i$ and $d'_i$ are residuals of the same cut.
	We can assume w.l.o.g. that $\chi_1$ and $\chi_3$ are independent:
	otherwise there is a prefix of $\chi$ with this additional property.

	For $1\le i\le 2$,
	let $c_i=\cut{t_0^i}{s_0^i}$, with $t^i_0=\otimes(\vec t_i)$ and $s^i_0=\parr(\vec s_i)$,
	and assume $d_i=\{t_i,s_i\}$ and $d'_i=\{t'_i,s'_i\}$, 
	with $t_i,t'_i\in\vec t_i$ and $s_i,s'_i\in\vec s_i$.
	By Lemma~\ref{lem:slipknot}, we must have
	$\chi=(t_1\sim_{d_1}s_1)\chi_1(t_2\sim_{d_2}s_2)\chi_2(s'_1\sim_{d'_1}t'_1)\chi_3(s'_2\sim_{d'_2}t'_2)$.

	By Theorem~\ref{thm:paths},
	we obtain $\chi_1^-:u_1\leadsto_p v_2$ and 
	$\chi_3^-:v_1\leadsto_p u_2$, where
	$u_1\in\{s_1,s_0^1\}$, 
	$v_2\in\{t_2,t_0^2\}$,
	$v_1\in\{t'_1,t_0^1\}$
	and $u_2\in\{s'_2,s_0^2\}$.
	Write $\zeta_1:v_1\leadsto_p u_1$
	(resp. $\zeta_2:v_2\leadsto_p u_2$)
	for the only $c_1$-bridge (resp. $c_2$-bridge) with those endpoints.
	Then we obtain a non empty cycle
	$\chi^-_1\zeta_2(\reverse{\chi_3})^-\zeta_1:u_1\leadsto_p u_1$:
	the concatenation is allowed by Lemma~\ref{lem:compat}
	applied to $\chi_1$ and $\reverse{\chi_3}$.
\end{proof}

\begin{proof}[Proof of Theorem~\ref{thm:paths}]
	By Lemma~\ref{lem:bracketing}, we can write 
	$\chi=\chi_1\xi_1\cdots\chi_n\xi_n\chi_{n+1}$
	where $\xi_1,\dotsc,\xi_n$ are the slipknots of $\chi$
	that are maximal (\emph{i.e.} not strict subpaths of other slipknots)
	and this writing is unique.
	Let $c_1,\dotsc,c_n$ be the associated eliminated cuts.
	By Lemma~\ref{lem:slipknot}, the $c_j$'s are pairwise distinct,
	and $\xi_j:t_j\leadsto_p t'_j$ where
	$t_j$ and $t'_j$ are distinct premises of the $\otimes$-tree of $c_j$.
	Moreover, since each slipknot of $\chi$ is a subpath 
	of some $\xi_j$:
	\begin{itemize}
		\item each $\chi_i$ is a straight path and it crosses no residual of any $c_j$;
		\item the $\chi_i$'s are pairwise independent.
	\end{itemize}
	If $n=0$, we can set $\chi^-=\chi_1^-$. 
	Otherwise, we apply Lemma~\ref{lem:compat} to the $\chi_i$'s,
	which are strict subpaths of $\chi$, which allows to concatenate
	$\chi^-=\chi^-_1\bounce{\xi_1}\cdots\chi^-_n\bounce{\xi}_n\chi^-_{n+1}\in\paths(p)$.
\end{proof}

Again, the construction of $\chi^-$ is compatible with 
the concatenation of independent paths:
\begin{lem}
	\label{lem:concat}
	If $\chi_1\chi_2\in\paths(q)$ and $\chi_1$ and $\chi_2$ are independent,
	then $(\chi_1\chi_2)^-=\chi_1^-\chi_2^-$.
\end{lem}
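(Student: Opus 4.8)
The plan is to deduce everything from Theorem~\ref{thm:paths}, which provides, for each path of $q$, a \emph{unique} canonical decomposition into straight paths and maximal slipknots, together with the explicit formula for the generated path. I would first apply Theorem~\ref{thm:paths} to $\chi_1$ and to $\chi_2$ separately, writing $\chi_1=\alpha_1\sigma_1\cdots\alpha_m\sigma_m\alpha_{m+1}$ and $\chi_2=\beta_1\tau_1\cdots\beta_l\tau_l\beta_{l+1}$, with straight pieces $\alpha_i,\beta_j$, maximal slipknots $\sigma_i,\tau_j$, and their associated eliminated cuts, so that $\chi_1^-=\alpha_1^-\bounce{\sigma_1}\cdots\alpha_m^-\bounce{\sigma_m}\alpha_{m+1}^-$ and likewise for $\chi_2^-$. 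The degenerate cases where $\chi_1$ or $\chi_2$ is empty are immediate, since then the corresponding generated path is empty.

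The crucial point is that the maximal slipknots of $\chi_1\chi_2$ are exactly $\sigma_1,\dotsc,\sigma_m,\tau_1,\dotsc,\tau_l$, occurring in this order. Indeed, any slipknot $(d)\cdots(d')$ of $\chi_1\chi_2$ has $d$ and $d'$ among the residuals of a single eliminated cut $c$; since $\chi_1$ and $\chi_2$ are independent, they cannot both cross a residual of $c$, so $d$ and $d'$ are crossed both inside the $\chi_1$-portion or both inside the $\chi_2$-portion of $\chi_1\chi_2$, hence the whole slipknot lies inside $\chi_1$ or inside $\chi_2$. Thus the slipknots of $\chi_1\chi_2$ are precisely those of $\chi_1$ together with those of $\chi_2$, and maximality transfers both ways (a maximal slipknot of $\chi_1$ cannot be enlarged inside $\chi_1\chi_2$ without straddling the concatenation point, which is impossible; conversely a maximal slipknot of $\chi_1\chi_2$ lies in one of the two halves and is maximal there). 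By the uniqueness part of Theorem~\ref{thm:paths}, the canonical decomposition of $\chi_1\chi_2$ is therefore $\chi_1\chi_2=\alpha_1\sigma_1\cdots\alpha_m\sigma_m(\alpha_{m+1}\beta_1)\tau_1\cdots\beta_l\tau_l\beta_{l+1}$, the only merged piece being the straight path $\alpha_{m+1}\beta_1$.

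It then remains to match the two sides through the formula of Theorem~\ref{thm:paths}. Since $\alpha_{m+1}$ and $\beta_1$ are subpaths of the disjoint paths $\chi_1$ and $\chi_2$, they are disjoint; and no eliminated cut has a residual crossed by both, so they are independent straight paths, and Lemma~\ref{lem:concatstraight} gives $(\alpha_{m+1}\beta_1)^-=\alpha_{m+1}^-\beta_1^-$. One also has to check that the bounces $\bounce{\sigma_i}$ and $\bounce{\tau_j}$ are unaffected by the replacement of the two straight pieces neighbouring the concatenation point by the single piece $\alpha_{m+1}\beta_1$: this holds because a bounce is determined solely by the $\otimes$-tree of its cut and by the endpoints of the generated paths of the adjacent straight pieces, and these endpoints are exactly the endpoints of $\alpha_{m+1}^-$ and of $\beta_1^-$ by the previous equality. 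Plugging all this into the formula yields $(\chi_1\chi_2)^-=\alpha_1^-\bounce{\sigma_1}\cdots\alpha_m^-\bounce{\sigma_m}\alpha_{m+1}^-\beta_1^-\bounce{\tau_1}\cdots\beta_l^-\bounce{\tau_l}\beta_{l+1}^-=\chi_1^-\chi_2^-$.

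The step I expect to demand the most care is the identification of the maximal slipknots of $\chi_1\chi_2$, that is, ruling out a slipknot straddling the concatenation point: this is where the independence hypothesis, and not the mere disjointness of $\chi_1$ and $\chi_2$, is genuinely used, and it comes together with the companion verification that the bounces adjacent to that point are not perturbed by the merging of the straight pieces.
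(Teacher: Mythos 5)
Your proof is correct and takes essentially the same route as the paper's: the paper's own argument is just a one-line appeal to the canonical decomposition of Theorem~\ref{thm:paths} applied to $\chi_1\chi_2$, invoking Lemma~\ref{lem:concatstraight} to merge the straight suffix of $\chi_1$ with the straight prefix of $\chi_2$ --- exactly the decomposition-matching argument you spell out. The only difference is that you make explicit the verifications the paper leaves implicit (no slipknot straddles the concatenation point, thanks to independence; the adjacent bounces are determined by the endpoints of the neighbouring generated paths and hence unchanged).
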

\begin{proof}
	As for Lemma~\ref{lem:concatstraight},
	this is a direct consequence of the definition of $(\chi_1\chi_2)^-$, $\chi_1^-$ and $\chi_2^-$,
	this time using Lemma~\ref{lem:concatstraight} to concatenate a
	straight suffix of $\chi_1$ and a straight prefix of $\chi_2$.
\end{proof}

\subsection{Bounding the growth of $\ln$}
\label{subsection:ln}

Now we show that we can bound $\ln(q)$ depending only on $\ln(p)$.
We first need some basic properties relating the length of $\chi^-$ 
with that of $\chi$.
\begin{lem}
	\label{lem:length}
	Let $\chi,\xi\in\paths(q)$ and $\zeta\in\paths(p)$:
	\begin{enumerate}
		\item if $\chi$ is preserved then $\ln(\chi^-)=\ln(\chi)$;
		\item if $\zeta$ is a bridge then $1\le\ln(\zeta)\le 3$;
		\item if $\zeta$ is a bounce then $\ln(\zeta)\le 2$;
		\item if $\chi$ is straight, then $\ln(\chi)\le \ln(\chi^-)\le 3\ln(\chi)$;
		\item if $\xi$ is a prefix of $\chi$ and $\chi$ is straight then 
			$\xi^-$ is a prefix of $\chi^-$;
		\item if $\xi$ is a slipknot then $\ln(\xi)\ge 3$
			and $\ln(\bounce\xi)<\ln(\xi)$;
		\item in general $\ln(\chi^-)\le 3\ln(\chi)$.
	\end{enumerate}
\end{lem}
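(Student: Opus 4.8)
The plan is to establish the seven items essentially in order, since the later ones build on the earlier. Items (1)--(3) are immediate. For (1), Lemma~\ref{lem:preserved} already gives that $\chi^-$ has the same sequence of edges as $\chi$, so $\ln(\chi^-)=\ln(\chi)$. For (2) and (3) I would simply unfold the definitions: a $c$-bridge is, up to reversal, a subpath of $t\sim_{(t_0,t)}t_0\sim_c s_0\sim_{s_0}s$ that crosses $c$, hence consists of the cut edge together with at most one further edge on each side, so its length is $1$, $2$ or $3$; a $c$-bounce is $\epsilon_{t_0}$, a single $\otimes$-edge, or $t\sim_{(t_0,t)}t_0\sim_{(t_0,t')}t'$, so its length is at most $2$.

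For (4), I would invoke Lemma~\ref{lem:straight} to write $\chi=\chi_1(d_1)\cdots\chi_n(d_n)\chi_{n+1}$ with $\chi^-=\chi_1^-\bridge{(d_1)}\cdots\chi_n^-\bridge{(d_n)}\chi_{n+1}^-$. Each $\chi_i$ crosses no residual, and since the only non-preserved edges of $q$ are the residual cuts, each $\chi_i$ is a preserved path; so by (1), $\ln(\chi_i^-)=\ln(\chi_i)$. Since $\ln(\chi)=\sum_i\ln(\chi_i)+n$ and, by (2), $n\le\sum_i\ln(\bridge{(d_i)})\le 3n$, we get $\ln(\chi)\le\ln(\chi^-)=\sum_i\ln(\chi_i)+\sum_i\ln(\bridge{(d_i)})\le 3\sum_i\ln(\chi_i)+3n=3\ln(\chi)$. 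Item (6) then follows quickly: a slipknot $\xi=(d)\xi'(d')$ has $\xi'$ non-empty by Lemma~\ref{lem:nearcut}, whence $\ln(\xi)\ge 3$, while $\ln(\bounce\xi)\le 2$ by (3); thus $\ln(\bounce\xi)<\ln(\xi)$. And (7) follows from Theorem~\ref{thm:paths}: writing $\chi=\chi_1\xi_1\cdots\chi_n\xi_n\chi_{n+1}$ with $\chi^-=\chi_1^-\bounce{\xi_1}\cdots\chi_n^-\bounce{\xi_n}\chi_{n+1}^-$, each $\chi_i$ is straight so $\ln(\chi_i^-)\le 3\ln(\chi_i)$ by (4), and $\ln(\bounce{\xi_i})\le 2\le 3\ln(\xi_i)$ by (3); summing yields $\ln(\chi^-)\le 3\sum_i\ln(\chi_i)+3\sum_i\ln(\xi_i)=3\ln(\chi)$.

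The remaining item (5) is where the genuine work lies, and I expect it to be the main obstacle. A prefix $\xi$ of a straight path $\chi$ is again straight, so it has its own decomposition in the sense of Lemma~\ref{lem:straight}, and this decomposition coincides with an initial segment of that of $\chi$: either $\xi$ ends inside a preserved segment $\chi_{j+1}$, in which case $(\chi_{j+1}')^-$ for the relevant prefix $\chi_{j+1}'$ is the matching prefix of $\chi_{j+1}^-$ by Lemma~\ref{lem:preserved}; or $\xi$ ends just after crossing a residual $(d_j)$, and one must then compare the $c_j$-bridge generated for $\xi$ with the bridge $\bridge{(d_j)}$ appearing inside $\chi^-$. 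The delicate point is that the $\parr$-side endpoint of such a bridge is only pinned down up to an anchor of the premise it lands on --- it may be that premise itself, or the whole $\parr$-tree, according to whether the continuation of the path begins with a redirected jump --- and Lemma~\ref{lem:bridge} leaves exactly this choice open. I would therefore not regard $\xi^-$ as uniquely determined, but rather choose it, among the paths generated by $\xi$, so that its final bridge is an initial segment of $\bridge{(d_j)}$; the concatenations involved are legitimate by Lemmas~\ref{lem:preserved} and \ref{lem:bound}, which is the one part of the argument requiring care. With that choice, $\xi^-$ is literally a prefix of $\chi^-$, as required.
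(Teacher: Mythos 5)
Your items (1)--(4), (6) and (7) are correct and match the paper's (very terse) proof: (1)--(3) from the definitions, (4) by combining (1) and (2) over the decomposition of Lemma~\ref{lem:straight}, (6) from Lemma~\ref{lem:nearcut} and (3), and (7) from (4) and (3)/(6) over the decomposition of Theorem~\ref{thm:paths}. The only real divergence is item (5), where you re-derive the prefix property by hand from the Lemma~\ref{lem:straight} decomposition; the paper instead writes $\chi=\xi\xi'$ and appeals to Lemma~\ref{lem:concatstraight}, the independence of $\xi$ and $\xi'$ being automatic because two residuals of the same cut, one in $\xi$ and one in $\xi'$, would form a slipknot in the straight path $\chi$, so $(\xi\xi')^-=\xi^-(\xi')^-$ and $\xi^-$ is a prefix of $\chi^-$. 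The subtlety you isolate is genuine and is the same one that Lemma~\ref{lem:concatstraight} silently absorbs into the empty-path convention $\epsilon_t^-=\epsilon_t$: when $\xi$ ends exactly at the $\parr$-side tree $s$ of a residual $\cut{t}{s}$ and the continuation $\xi'$ begins with a redirected jump whose $p$-endpoint is the whole eliminated $\parr$-tree $s_0$, the bridge produced for $\xi$ in isolation ends at $s$ while the bridge $\bridge{(d)}$ inside $\chi^-$ stops at $s_0$, so the two generated paths differ by one $\parr$-edge. Your resolution --- fixing $\xi^-$ so that its final bridge is an initial segment of the bridge occurring in $\chi^-$ --- is legitimate and consistent with the paper's conventions, and in any case only the inequality $\ln(\xi^-)\le\ln(\chi^-)$ is ever used downstream (in Lemma~\ref{lem:width}), which holds under either reading. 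So your proof is correct; it just does explicitly, and with more care, what the paper delegates to Lemma~\ref{lem:concatstraight}.
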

\begin{proof}
	The first three properties are direct consequences of the definitions.
	Item (4) follows from (1) and (2).
	Item (5) follows from Lemma~\ref{lem:concatstraight}.
	Item (6) follows from Lemma~\ref{lem:nearcut} and (3).
	And item (7) follows from (4) and (6).
\end{proof}
Observe that in general, we do not have $\ln(\zeta^-)\le\ln(\chi^-)$
when $\zeta$ is a prefix of $\chi$: $\zeta$ may enter an
arbitrarily long slipknot of $\chi$ that is replaced by a single
bounce in $\chi^-$.
For this reason, we introduce the following notion:
if $\chi\in \paths(q)$, we define the \definitive{width} of $\chi$ 
(relative to the reduction $p\totom q$ we consider) by
$\width(\chi)=\max\{\ln(\zeta^-)\tq \zeta \text{ prefix of } \chi\}$.
\begin{lem}
	\label{lem:width}
	For any path $\chi\in\paths(q)$,
	$\ln(\chi^-)\le \width(\chi)\le\ln(p)$
	and
	$\width(\chi)\le 3\ln(\chi)$.
	Moreover, if $\zeta$ is a prefix of $\chi$,
	we have $\width(\zeta)\le\width(\chi)$.
	If moreover $\chi$ is straight,
	$\width(\chi)=\ln(\chi^-)\ge\ln(\chi)$.
\end{lem}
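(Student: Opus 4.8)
The plan is to derive every clause of the statement directly from the definition of $\width$ together with Lemma~\ref{lem:length} and Theorem~\ref{thm:paths}; there is essentially no new combinatorial content here, only bookkeeping about prefixes, so I do not expect any genuine obstacle. The only point requiring a little care is to invoke Theorem~\ref{thm:paths} — and not merely Lemma~\ref{lem:length} — to know that for \emph{every} prefix $\zeta$ of $\chi$ the generated path $\zeta^-$ is a well-defined element of $\paths(p)$, which is what makes $\ln(p)$ an upper bound.

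First I would dispose of $\ln(\chi^-)\le\width(\chi)$: this is immediate, since $\chi$ is a prefix of itself, so $\ln(\chi^-)$ is one of the quantities over which the maximum defining $\width(\chi)$ is taken. For $\width(\chi)\le\ln(p)$, I would observe that by Theorem~\ref{thm:paths} every prefix $\zeta$ of $\chi$ generates a path $\zeta^-\in\paths(p)$, hence $\ln(\zeta^-)\le\ln(p)$; taking the maximum over all prefixes $\zeta$ of $\chi$ yields the bound. For $\width(\chi)\le 3\ln(\chi)$, note that if $\zeta$ is a prefix of $\chi$ then $\ln(\zeta)\le\ln(\chi)$, and Lemma~\ref{lem:length}(7) gives $\ln(\zeta^-)\le 3\ln(\zeta)\le 3\ln(\chi)$; again take the maximum over prefixes.

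Next, for the monotonicity clause: if $\zeta$ is a prefix of $\chi$, then every prefix of $\zeta$ is also a prefix of $\chi$, so the family of values $\{\ln(\zeta'^-)\tq\zeta'\text{ prefix of }\zeta\}$ is contained in $\{\ln(\zeta'^-)\tq\zeta'\text{ prefix of }\chi\}$, and the corresponding maxima compare accordingly, giving $\width(\zeta)\le\width(\chi)$.

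Finally, for the straight case: by Lemma~\ref{lem:length}(5), if $\xi$ is a prefix of the straight path $\chi$ then $\xi^-$ is a prefix of $\chi^-$, hence $\ln(\xi^-)\le\ln(\chi^-)$. So every quantity in the maximum defining $\width(\chi)$ is at most $\ln(\chi^-)$, while $\ln(\chi^-)$ itself occurs in that maximum (taking $\xi=\chi$); therefore $\width(\chi)=\ln(\chi^-)$. The remaining inequality $\ln(\chi^-)\ge\ln(\chi)$ is exactly Lemma~\ref{lem:length}(4), which completes the proof.
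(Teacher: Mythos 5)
Your proof is correct and follows essentially the same route as the paper's: the first three inequalities and the monotonicity clause come directly from the definition of $\width$ together with item (7) of Lemma~\ref{lem:length}, and the straight case combines items (4) and (5). Making explicit that Theorem~\ref{thm:paths} guarantees $\zeta^-\in\paths(p)$ for every prefix $\zeta$ is a reasonable (and harmless) elaboration of what the paper leaves implicit in the phrase ``directly from the definition of width.''
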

\begin{proof}
	We obtain $\ln(\chi^-)\le \width(\chi)\le\ln(p)$ and 
	$\width(\zeta)\le\width(\chi)$ directly from the definition of width.
	Item (7) of Lemma~\ref{lem:length} gives $\width(\chi)\le 3\ln(\chi)$.
	If $\chi$ is straight $\width(\chi)=\ln(\chi^-)\ge\ln(\chi)$ 
	follows from items (4) and (5) of Lemma~\ref{lem:length}.
\end{proof}

Define $\varphi:\N\to\N$ inductively by $\varphi(0)=0$ and
$\varphi(n)=n+(n+1)(\varphi(n-1)+2)$ if $n>0$. 
Observe that $n\le \varphi(n)\le\varphi(n+1)$.


\begin{lem}\label{lem:generatedlength}
	If $\chi\in\paths(q)$ then $\ln(\chi)\le\varphi(\width(\chi))$.
\end{lem}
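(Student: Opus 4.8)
The plan is to prove the inequality by induction on $\ln(\chi)$. The case $\ln(\chi)=0$ is trivial, since then $\width(\chi)=0=\varphi(0)$, so assume $\ln(\chi)\ge1$; then also $w:=\width(\chi)\ge1$, because the length‑one prefix $\zeta$ of $\chi$ already has $\ln(\zeta^-)\ge1$ (whether its edge is preserved or a residual). I would start from the canonical decomposition of Theorem~\ref{thm:paths}: $\chi=\chi_1\xi_1\cdots\chi_n\xi_n\chi_{n+1}$ with the $\chi_i$ straight and pairwise independent, each $\xi_i=(d_i)\xi'_i(d'_i):t_i\leadsto t'_i$ a slipknot whose endpoints are distinct premises of the $\otimes$‑tree of the eliminated cut $c_i$, and $\chi^-=\chi^-_1\bounce{\xi_1}\cdots\chi^-_n\bounce{\xi_n}\chi^-_{n+1}\in\paths(p)$. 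Since $\chi$ is a prefix of itself, $\ln(\chi^-)\le w$, and since bounce lengths are non‑negative this yields $\sum_{i=1}^{n+1}\ln(\chi^-_i)\le w$; as each $\chi_i$ is straight, $\sum_{i=1}^{n+1}\ln(\chi_i)\le\sum_{i=1}^{n+1}\ln(\chi^-_i)\le w$ by Lemma~\ref{lem:length}(4).

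Next I would bound the number $n$ of slipknots. A path cannot cross two cut edges consecutively (the fact underlying Lemma~\ref{lem:nearcut}), and $\xi_{i-1}$ ends on the cut edge $d'_{i-1}$ while $\xi_i$ begins on the cut edge $d_i$; hence each intermediate segment $\chi_2,\dotsc,\chi_n$ is non‑empty, so $\ln(\chi^-_i)\ge\ln(\chi_i)\ge1$ for $2\le i\le n$, and therefore $n-1\le\sum_{i=1}^{n+1}\ln(\chi^-_i)\le w$, that is, $n\le w+1$.

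The core step is the bound $\width(\xi'_i)\le w-1$, which lets me apply the induction hypothesis to the strictly shorter path $\xi'_i$ and conclude $\ln(\xi'_i)\le\varphi(\width(\xi'_i))\le\varphi(w-1)$ by monotonicity of $\varphi$. Fix a prefix $\zeta$ of $\xi'_i$, with first straight piece $\zeta_1$ in the sense of Theorem~\ref{thm:paths}, and consider the prefix $\rho=\chi_1\xi_1\cdots\chi_i(d_i)\zeta$ of $\chi$. Since $\xi'_i$ crosses no residual of $c_i$ (Lemma~\ref{lem:slipknot}) and, by well‑bracketing (Lemma~\ref{lem:bracketing}), no residual crossed by $\chi_i$ can be matched by one crossed inside $\xi'_i$ — such a pair would straddle the slipknot $\xi_i$, giving the forbidden crossing pattern — the path $\chi_i(d_i)\zeta_1$ is straight. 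Applying Theorem~\ref{thm:paths} to $\rho$ together with Lemmas~\ref{lem:straight}, \ref{lem:concatstraight} and \ref{lem:compat}, the generated path $\rho^-$ is of the form $(\cdots)\,\chi^-_i\,B\,\zeta^-$, where $B$ is the $c_i$‑bridge produced by crossing $d_i$ and $\zeta^-$ is the path generated by $\zeta$. Since $\ln(B)\ge1$ (Lemma~\ref{lem:length}(2)) and $\rho$ is a prefix of $\chi$, this gives $\ln(\zeta^-)\le\ln(\rho^-)-\ln(B)\le w-1$; as $\zeta$ was arbitrary, $\width(\xi'_i)\le w-1$.

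Finally I would combine everything:
\begin{align*}
	\ln(\chi)&=\sum_{i=1}^{n+1}\ln(\chi_i)+\sum_{i=1}^{n}\ln(\xi_i)
	\le w+\sum_{i=1}^{n}\bigl(\ln(\xi'_i)+2\bigr)\\
	&\le w+n\bigl(\varphi(w-1)+2\bigr)\le w+(w+1)\bigl(\varphi(w-1)+2\bigr)=\varphi(w),
\end{align*}
using $n\le w+1$ in the last step. The main obstacle is the core step: turning the slogan ``a prefix entering a slipknot must first pay for a bridge, hence cannot exhaust the available width'' into the precise identity for $\rho^-$. This is exactly where well‑bracketing of slipknots (so that $\chi_i(d_i)\zeta_1$ stays straight) and the compatibility of path‑generation with concatenation are needed; the rest is bookkeeping with the length estimates of Lemma~\ref{lem:length}.
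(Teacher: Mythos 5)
Your proof is correct and follows essentially the same route as the paper's: the same decomposition from Theorem~\ref{thm:paths}, the same bound $n\le\width(\chi)+1$ via non-emptiness of the intermediate straight segments, and the same key step showing that a prefix entering a slipknot must first cross a bridge of length at least $1$, whence $\width(\xi'_i)\le\width(\chi)-1$. The only (harmless) difference is that you induct on $\ln(\chi)$ where the paper inducts on $\width(\chi)$; both measures decrease strictly at the recursive call, so either choice works.
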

\begin{proof}
	The proof is by induction on $\width(\chi)$. 
	If $\chi$ is straight then, by Lemma~\ref{lem:width},
	$\ln(\chi)\leq\width(\chi)\le\varphi(\width(\chi))$.

	Write $\chi=\chi_1\xi_1\cdots\chi_n\xi_n\chi_{n+1}$
	as in Theorem~\ref{thm:paths}: we have
	$\ln(\chi)=\sum_{i=1}^{n+1}\ln(\chi_i)+\sum_{j=1}^n\ln(\xi_j)$.
	Since $\chi^-=\chi_1^-\bounce{\xi_1}\cdots\chi_n^-\bounce{\xi_n}\chi_{n+1}^-$,
	we have $\sum_{i=1}^{n+1}\ln(\chi_i^-)\leq\ln(\chi^-)$.
	Since each $\chi_i$ is straight, 
	we obtain $\ln(\chi)\le\ln(\chi^-)+\sum_{j=1}^n\ln(\xi_j)$
	from the previous inequality, 
	by applying item (4) of Lemma~\ref{lem:length}.

	Moreover observe that, by Lemma~\ref{lem:nearcut},
	$\chi_i$ is non empty for $1<i<n+1$.
	Hence $\ln(\chi^-)\ge n-1$, and we obtain $n\le\width(\chi)+1$.

	It remains to bound $\ln(\xi_j)$ for $1\le j\le n$.
	We can write $\xi_j=(d_j)\chi'_j(d'_j)$
	where $d_j$ and $d'_j$ are the residuals of 
	the cut $c_j$ associated with $\xi_j$.
	Let $\zeta'_j$ be a prefix of $\chi'_j$
	and write $\zeta_j=\chi_1\xi_1\cdots\chi_j(d_j)\zeta'_j$
	which is a prefix of $\chi$.
	Observe that, by Theorem~\ref{thm:paths},
	$c_j$ has no residual in $\zeta_j$ other than $d_j$, and 
	$\chi_1\xi_1\cdots\chi_j(d_j)$ and $\zeta'_j$ are independent.
	Hence $\chi_j(d_j)$ is straight and 
	$\zeta_j^-=\chi_1^-\bounce{\xi_1}\cdots\chi_j^-\bridge{d_j}(\zeta'_j)^-$
	follows by Lemma~\ref{lem:concat}.
	Since $\ln(\bridge{d_j})\ge 1$, 
	we obtain $\ln((\zeta'_j)^-)\le \ln(\zeta_j^-)-1\le\width(\chi)-1$.

	Hence $\width(\chi'_j)\le\width(\chi)-1$:
	we apply the induction hypothesis and obtain
	$\ln(\chi'_j)\le\varphi(\width(\chi'_j))\le\varphi(\width(\chi)-1)$
	because $\varphi$ is monotonous.
	It follows that $\ln(\xi_j)\le \varphi(\width(\chi)-1)+2$,
	and we conclude:
	\begin{align*}
	\ln(\chi)
	&\le\ln(\chi^-)+\sum_{j=1}^n\ln(\xi_j)
	\\
	&\le\width(\chi)+(\width(\chi)+1)(\varphi(\width(\chi)-1)+2)
	\\
	&\le\varphi(\width(\chi)).
	\qedhere
	\end{align*}
\end{proof}

Using Lemma~\ref{lem:width} again, we obtain:
\begin{cor}\label{vc}
	Let $p\totom q$. Then, $\ln(q)\leq \varphi(\ln(p))$.
\end{cor}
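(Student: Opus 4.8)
The plan is to read off the corollary directly from Lemmas~\ref{lem:generatedlength} and \ref{lem:width}, since all the combinatorial work has already been done in the analysis of slipknots culminating in Theorem~\ref{thm:paths}. Fix the reduction $p\totom q$ once and for all, so that the notions of generated path $\chi^-$ and of $\width(\chi)$ are well-defined for every $\chi\in\paths(q)$.

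First I would take an arbitrary path $\chi\in\paths(q)$. Applying Lemma~\ref{lem:generatedlength} gives $\ln(\chi)\le\varphi(\width(\chi))$. Applying Lemma~\ref{lem:width} gives $\width(\chi)\le\ln(p)$. Since $\varphi$ is monotonous (as observed right after its definition: $n\le\varphi(n)\le\varphi(n+1)$), we combine these to get $\ln(\chi)\le\varphi(\width(\chi))\le\varphi(\ln(p))$.

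Finally, since this bound holds uniformly for every $\chi\in\paths(q)$, I would take the maximum over all such paths to conclude $\ln(q)=\max\{\ln(\chi)\tq\chi\in\paths(q)\}\le\varphi(\ln(p))$. (If $\paths(q)$ is empty the statement is vacuous, but it always contains the empty paths $\epsilon_t$, so $\ln(q)$ is well-defined and the argument applies.)

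There is no real obstacle here: the corollary is a one-line consequence of the two preceding lemmas, and the only point requiring minimal care is that $\width$ is defined relative to the fixed reduction $p\totom q$, which is exactly the hypothesis we are given. The substantive content — relating paths of $q$ to paths of $p$ via bridges, bounces and the well-bracketing of slipknots — lives entirely in Theorem~\ref{thm:paths} and Lemma~\ref{lem:generatedlength}.
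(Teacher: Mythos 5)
Your proof is correct and is exactly the argument the paper intends: the corollary is stated immediately after Lemma~\ref{lem:generatedlength} with the remark ``Using Lemma~\ref{lem:width} again, we obtain,'' i.e.\ combining $\ln(\chi)\le\varphi(\width(\chi))$ with $\width(\chi)\le\ln(p)$ and the monotonicity of $\varphi$. No further comment is needed.
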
 

Notice that the previous result can be seen as a quantitative version
of the preservation of acyclicity in proof nets under reduction. In
the following example, we illustrate how acyclicity is mandatory for
the existence of a function $\varphi$ as in Corollary~\ref{vc}.
\begin{exa}\label{ex:cycle1}
Let $p=(\cut{t}{s};)$
with $t=\otimes(x_0,\dotsc,x_n)$ and $s=\parr(\bar x_1,\dotsc,\bar x_n,\bar x_0)$:
by setting $I(s)=\bar x_i$, we obtain a cycle
$x_i\sim_{\{x_i,\bar x_i\}}\bar x_i\sim_s s\sim_{\cut{t}{s}} t\sim_{t,x_i} x_i$.
Since each path $\chi\in\paths(p)$ can cross 
each of $s$ and $\cut{t}{s}$ at most once, 
it is easy to check that $\ln(p)=6$.

But $p\totom q=(\cut{x_0}{\bar x_1},\dotsc,\cut{x_{n-1}}{\bar x_n},\cut{x_{n}}{\bar x_0};)$
hence $\ln(q)=2(n+1)$.
The situation is illustrated in Figure~\ref{fig:cycle1}.
\end{exa}
\begin{figure}
\begin{center}
	\begin{tikzpicture}
		\node[lien](0)at(0,0){ax};
		\node[lien](1)at($(0)+(0,-1)$){ax};
		\node(2)at($(1)+(0,-.7)$){$\rvdots$};
		\node[lien](3)at($(2)+(0,-.7)$){ax};
		\node[lien](t)at($(3)+(-1,-1.5)$){$\otimes$};
		\node[lien](p)at($(3)+(1,-1.5)$){$\parr$};
		\node[lien](c)at($(0,-1)+1/2*($(t)+(p)$)$){cut};

		\draw[above left](0)node{$x_0\ $};
		\draw[above right](0)node{$\ \bar x_0$};
		\draw[above left](1)node{$x_1\ $};
		\draw[above right](1)node{$\ \bar x_1$};
		\draw[above left](3)node{$x_n\ $};
		\draw[above right](3)node{$\ \bar x_n$};

		\draw[par dessus](t)to[out=120,in=west](0);
		\draw[par dessus](t)to[out=100,in=west](1);
		\path (t) to [out=80,in=west] node [sloped, near start] {$\rvdots$} (2);
		\draw[par dessus](t)to[out=60,in=west](3);
		\draw[par dessus](p)to[out=125,in=east](1);
		\path (p) to [out=95,in=east] node [sloped, very near start] {$\rvdots$} (2);
		\draw[par dessus](p)to[out=60,in=east](3);
		\path[par dessus](p)to[out=50,in=east](0);
		\draw[par dessus](t)to[out=south,in=west](c);
		\draw[par dessus](p)to[out=south,in=east](c);

		\node[lien](0)at(6,0){ax};
		\node[lien](1)at($(0)+(0,-1)$){ax};
		\node (2)at($(1)+(0,-1)$){$\rvdots$};
		\node[lien](3)at($(2)+(0,-1)$){ax};

		\node[lien](c01)at($(3)+(0,-1.2)$){cut};
		\node[lien](c12)at($(c01)+(0,-1)$){cut};
		\node(c23)at($(c12)+(0,-1)$){$\rvdots$};
		\node[lien](c30)at($(c23)+(0,-1)$){cut};

		\draw[above left](0)node{$x_0\ $};
		\draw[above right](0)node{$\ \bar x_0$};
		\draw[above left](1)node{$x_1\ $};
		\draw[above right](1)node{$\ \bar x_1$};
		\draw[above left](3)node{$x_n\ $};
		\draw[above right](3)node{$\ \bar x_n$};

		\draw[par dessus](0)to[out=west,in=west](c01);
		\draw[par dessus](1)to[out=east,in=east](c01);
		\draw[par dessus](1)to[out=west,in=west](c12);
		\path (c12) to [out=east,in=east] node [very near end,inner sep=0pt,sloped] (c12a) {$\scriptstyle\cdots$}  node [very near end,inner sep=0pt] (c12b) {} (2);
		\draw[par dessus] (c12) to [out=east,in=-40] (c12a.east) ;
		\path (3) to [out=east,in=east] node [very near end,inner sep=0pt,sloped] (c23a) {$\scriptstyle\cdots$}  node [very near end,inner sep=0pt] (c23b) {} (c23);
		\draw[par dessus] (3) to [out=east,in=40] (c23a.east) ;
		\draw[par dessus](3)to[out=west,in=west](c30);
		\draw[par dessus](0)to[out=east,in=east](c30);
		
		\node at (3,-3.5){$\totom$};
	\end{tikzpicture}
\end{center}
\caption{A cyclic counterexample to Corollary~\ref{vc}}
\label{fig:cycle1}
\end{figure}

\subsection{Erratum}

In the extended abstract of the present paper, the
analogue~\cite[Subsection 3.2]{CV18} of Subsection~\ref{subsection:ln} claimed
to establish similar results for another measure on paths:
rather than the length $\ln(\xi)$ of a path $\xi$, we considered the number
$\vc(\xi)$ of all the cuts $\cut{t}{s}$ such that $\xi$ visits $t$ or $s$.

It is easy to check that Lemma~\ref{ax} still holds if we replace the
maximal length of a path with the maximum number of cuts in a chain of axiom 
cuts in the sense of Figure~\ref{fig:collapse}. 
Given the situation depicted in Figure~\ref{fig:paths_slipknots}, however, it
is evident that a bound on the number of cuts crossed by a path cannot be
preserved: the path on the left hand side crosses no cut, while the path in the
reduct crosses an arbitrary number of (possibly axiom) cuts.
We introduced $\vc(\xi)$ in our previous attempt,
precisely to capture this example: if $\xi$ is the path following the 
tensors  of the left hand side, then $\vc(\xi)\ge n$.
But this fix is actually not sufficient:
if we replace each $c_i=\cut{t_i}{s_i}$ in Figure~\ref{fig:paths_slipknots},
with $\cut{\otimes(t_i,s_i)}{\parr(x_i,\bar{x_i})}$,
a path in the obtained net can visit at most two of these new cuts,
but it reduces to the left hand side, with $\xi$ such that $\vc(\xi)\ge n$.

It might be possible to adapt our method for dealing with a relaxed definition
of visited cut: for instance, we might consider the number of cuts $c$ such
that $\xi$ visits a tree $t\in\ST(c)$ (instead of $t\in c$).
But this notion is no longer local, and would introduce further technicalities:
for that reason, we decided to focus on the length of paths instead, 
which is a more intuitive and standard notion,
without any \emph{ad hoc} reference to cuts.

\section{Variations of $\jd(p)$ under reduction}\label{section:qi}

For establishing that $\jd(q)$ is bounded as a function of
$\jd(p)$ and $\ln(p)$ we examine the reductions separately.

\begin{lem}\label{qi_growth_m}
	Let $p,q$ two proof nets. If $p\totom q$, then $\jd(q)\leq
	2\jd(p)$.
\end{lem}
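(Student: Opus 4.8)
The plan is to fix a target subtree $u\in\ST(q)$ and count the counits $\w\in\counits(q)$ with $\jump_q(\w)=u$, separating those whose jump already pointed to $u$ in $p$ from those whose jump was redirected by the reduction. First I would recall two facts from Subsection~\ref{cut-elim}: multiplicative reduction leaves the counits untouched, so $\counits(q)=\counits(p)$; and $\ST(q)=\ST(p)\setminus\{\otimes(\vec t),\parr(\vec s)\tq\cut{\otimes(\vec t)}{\parr(\vec s)}\in\cuts(p)\setminus\cuts(q)\}$, with $\jump_q(\w)=\jump_p(\w)$ unless $\jump_p(\w)$ is the $\otimes$-tree (resp.\ $\parr$-tree) of an eliminated cut, in which case $\jump_q(\w)$ is its first $\otimes$-premise (resp.\ first $\parr$-premise).

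Then, for a fixed $u\in\ST(q)$, I would partition $\{\w\in\counits(q)\tq\jump_q(\w)=u\}$ into the set $A$ of those $\w$ with $\jump_p(\w)=u$, and the set $B$ of those $\w$ with $\jump_p(\w)$ equal to the $\otimes$- or $\parr$-tree of some eliminated cut whose corresponding first premise is $u$. These two classes are exhaustive: if $\jump_q(\w)=u$ then either $\jump_p(\w)\in\ST(q)$, forcing $\jump_p(\w)=\jump_q(\w)=u$ and $\w\in A$, or $\jump_p(\w)\notin\ST(q)$, i.e.\ $\jump_p(\w)$ is an eliminated $\otimes$- or $\parr$-tree, and then $\w\in B$. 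They are disjoint, since $\w\in A$ forces $\jump_p(\w)=u\in\ST(q)$ whereas $\w\in B$ forces $\jump_p(\w)\notin\ST(q)$. By the definition of the jump degree, $\card A=\jd_p(u)\le\jd(p)$.

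For $B$, the key observation is that each subtree occurs exactly once in $p$, so $u$ has at most one immediate parent connective; hence $u$ is the first premise of at most one connective of $p$, and therefore it is the redirection target of at most one eliminated cut. Writing $w_c$ for the $\otimes$- or $\parr$-tree of that cut that redirects to $u$, every $\w\in B$ satisfies $\jump_p(\w)=w_c$, so $\card B\le\jd_p(w_c)\le\jd(p)$ (and $B=\emptyset$ if no such cut exists). Combining, $\jd_q(u)=\card A+\card B\le 2\jd(p)$; taking the maximum over $u\in\ST(q)$ yields $\jd(q)\le 2\jd(p)$. The argument is essentially bookkeeping, and the only point that requires care is the uniqueness of the redirection source used to bound $\card B$, which is exactly where the single-occurrence property of subtrees enters; I do not expect any genuine obstacle. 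The bound is moreover tight: it is attained when $\jd_p(u)=\jd_p(w_c)=\jd(p)$ and $u$ is the first premise of the $\otimes$-tree $w_c=\otimes(\vec t)$ of an eliminated cut.
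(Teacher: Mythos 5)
Your proof is correct and follows essentially the same route as the paper's: the paper's argument also observes that $\jd_q(t)$ can only differ from $\jd_p(t)$ via redirected jumps, and bounds it by $\jd_p(t)+\jd_p(t_0)\le 2\jd(p)$ where $t_0$ is the eliminated cut tree whose first premise is $t$. Your version merely makes explicit the partition into preserved and redirected jumps and the uniqueness of the redirection source (via the single-occurrence property of subtrees), both of which the paper leaves implicit.
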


\begin{proof}
	For all $t\in\ST(q)$ the only case in which $\jump^{-1}_q(t)\not=\jump^{-1}_p(t)$
	is that of redirected jumps:
	there must be $\w\in\counits(q)$ such that
	$t$ is part of a residual $\cut{t}{s}$ of an eliminated cut
	$\cut{t_0}{s_0}\in\cuts(p)$, with $\jump_p(\w)=t_0$.
	In this case we have $\jd_q(t)=\jd_p(t)+\jd_p(t_0)\le 2\jd(p)$.
\end{proof}

\begin{lem}\label{qi_growth_e}
	Let $p,q$ two proof nets. If $p\totoe q$, then $\jd(q)\leq
	(2\jd(p))^{\ln(p)+1}$.
\end{lem}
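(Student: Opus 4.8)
The plan is to reuse the iterated jump-preimage sets $\I_p^k(\cdot)$ already introduced before Lemma~\ref{borne_qi}, and to reduce the claim to the bound $\card\left(\bigcup_{k\in\N}\I_p^k(t)\right)\le(2\jd(p))^{\ln(p)+1}$ established there.

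First I would fix the notation of the reduction: write $p=(\cut{\w_1}{\cw_1},\dots,\cut{\w_l}{\cw_l},\vec c;\vec t)$ and $q=(\vec c;\vec t)$, with $\jump_p(\w_i)\notin\{\w_j,\cw_j\}$ for $1\le i\le j\le l$, and recall the description of $\jump_q$ from Section~\ref{cut-elim}. Since an evanescent reduction performs no substitution, $\ST(q)\subseteq\ST(p)$, and for $\w'\in\counits(q)$ we have $\jump_q(\w')=\rho(i)$ whenever $\jump_p(\w')\in\{\w_i,\cw_i\}$ for some $i$, and $\jump_q(\w')=\jump_p(\w')$ otherwise, where $\rho$ is the redirection function.

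Next, for a fixed $t\in\ST(q)$, I would prove the inclusion $\jump_q^{-1}(t)\subseteq\bigcup_{k\in\N}\I_p^k(t)$ by a short case analysis on $\w'\in\counits(q)$ with $\jump_q(\w')=t$. If $\jump_p(\w')=t$ then $\w'\in\jump_p^{-1}(t)=\I_p^0(t)$. Otherwise $\jump_p(\w')\in\{\w_i,\cw_i\}$ for some $i$ with $\rho(i)=t$; unfolding the recursive definition of $\rho$ yields a decreasing sequence of indices $i=j_0>j_1>\dots>j_m$ with $\jump_p(\w_{j_s})\in\{\w_{j_{s+1}},\cw_{j_{s+1}}\}$ for $0\le s<m$ and $\jump_p(\w_{j_m})=t$. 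Then $\w_{j_m}\in\I_p^0(t)$, and a downward induction on $s$ — using that each $\cut{\w_{j_{s+1}}}{\cw_{j_{s+1}}}$ is a cut of $p$ — gives $\w_{j_s}\in\I_p^{m-s}(t)$; in particular $\w_i=\w_{j_0}\in\I_p^m(t)$, whence $\w'\in\I_p^{m+1}(t)$.

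Finally I would conclude: $\jd_q(t)=\card\,\jump_q^{-1}(t)\le\card\left(\bigcup_{k\in\N}\I_p^k(t)\right)\le(2\jd(p))^{\ln(p)+1}$ by Lemma~\ref{borne_qi}, and taking the maximum over all $t\in\ST(q)$ gives $\jd(q)\le(2\jd(p))^{\ln(p)+1}$. I do not expect any serious obstacle here: the only point requiring some care is to match the bookkeeping of the redirection function $\rho$ with the inductive clause defining $\I_p^{k+1}$, and to observe that $\I_p^k$ is built from every evanescent-shaped cut of $p$, so it automatically absorbs the particular chain of eliminated cuts along which a jump gets redirected. All the genuinely combinatorial work — bounding $\card\left(\bigcup_k\I_p^k(t)\right)$ by $(2\jd(p))^{\ln(p)+1}$, where acyclicity forces such chains to have length at most $\ln(p)$ while each step branches by at most $2\jd(p)$ — has already been done in the proof of Lemma~\ref{borne_qi}.
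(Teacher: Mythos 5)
Your proposal is correct and follows essentially the same route as the paper: the paper's proof of this lemma is exactly the reduction to Lemma~\ref{borne_qi}(1) via the inclusion $\jump_q^{-1}(t)\subseteq\bigcup_{k}\I_p^k(t)$, which it asserts as immediate from the definition of $\I^n_p$. Your unfolding of the redirection function $\rho$ merely makes that asserted inclusion explicit, and it is carried out correctly.
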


\begin{proof}
	Fix $t\in\ST(q)$. For any $\w\in\counits(q)$, if $\jump_q(\w)=t$,
	then $\w\in\I_p^n(t)$ for some $n\in\N$: this is precisely
	the purpose of the definition of $\I^n_p$.
	We obtain
	$\jd_q(t)\le \card\left(\bigcup_{n\in\N}\I_p^n(t)\right)\leq (2\jd(p))^{\ln(p)+1}$
	by Lemma~\ref{borne_qi} (1).
\end{proof}

\begin{lem}\label{qi_growth_a}
	Let $p,q$ two proof nets. If $p\toto_a q$, then $\jd(q)\leq
	(\ln(p)+1)\jd(p)$.
\end{lem}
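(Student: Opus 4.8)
The plan is to reuse the decomposition of the eliminated axiom cuts into chains from the proof of Lemma~\ref{ax}, and to track how the jump function is transformed. Write $p=(\cut{x_1}{t_1},\dotsc,\cut{x_n}{t_n},\vec c;\vec s)$ and $q=(\vec c;\vec s)[t_1/\bar x_1]\cdots[t_n/\bar x_n]$, with the side conditions $\bar x_i\notin\{x_1,\dotsc,x_n\}$ and $\bar x_i\notin\fv(t_j)$ for $i\le j$, and set $\hat u=u[t_1/\bar x_1]\cdots[t_n/\bar x_n]$ for $u\in\ST(p)$. As in Lemma~\ref{ax}, partition the eliminated cuts into maximal chains $\vec c_1,\dotsc,\vec c_k$, with $\vec c_i=(\cut{x^i_0}{\bar x^i_1},\dotsc,\cut{x^i_{n_i-1}}{\bar x^i_{n_i}},\cut{x^i_{n_i}}{t^i})$; recall that $\vec c_i$ induces a switching path of length $2n_i+2$, so $2(n_i+1)\le\ln(p)$.

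First I would spell out how $\jump_q$ arises from $\jump_p$, starting from the description of the transformed jump function in Subsection~\ref{cut-elim} and using $\bar x_i\notin\fv(t_j)$ for $i\le j$ to replace the partial substitutions appearing there by the full one $\hat{(\cdot)}$. The outcome is, for each $\w\in\counits(q)=\counits(p)$:
\begin{itemize}
	\item if $\jump_p(\w)$ is one of the $2(n_i+1)$ variables $x^i_j,\bar x^i_j$ $(0\le j\le n_i)$ occurring in a chain $\vec c_i$, then $\jump_q(\w)=\widehat{t^i}$;
	\item otherwise $\jump_q(\w)=\widehat{\jump_p(\w)}$, and moreover $\jump_p(\w)\notin\{x_l,\bar x_l\tq 1\le l\le n\}$.
\end{itemize}
This dichotomy is well defined: by $\bar x_l\notin\{x_1,\dotsc,x_n\}$ no variable is both some $x_l$ and the dual $\bar x_l$ of a left conclusion, so it lies in at most one chain; and each $t^i\notin\{x_l,\bar x_l\tq l\}$ — the former because every atom occurs only once in $p$, the latter by maximality of $\vec c_i$.

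Then I would fix $t\in\ST(q)$ and bound $\jd_q(t)=\card\jump_q^{-1}(t)$. By the dichotomy, $\jump_q^{-1}(t)=\jump_p^{-1}(U_t)$ where $U_t$ is the set of subtrees $u$ of $p$ with $u\notin\{x_l,\bar x_l\tq l\}$ and $\hat u=t$, together with all chain variables $x^i_j,\bar x^i_j$ of those chains $\vec c_i$ such that $\widehat{t^i}=t$; since $\jump_p$ is a function, $\jd_q(t)\le\card U_t\cdot\jd(p)$. The technical heart is that $\hat{(\cdot)}$ is injective on $\ST(p)\setminus\{x_l,\bar x_l\tq l\}$: I expect this to follow by a routine induction on tree structure from the fact that each atom of $p$ occurs exactly once — at the first position where two such subtrees differ, one would carry some $\bar x_l$ while the other carries $t_l$ or, after unfolding, some $\bar x_m$ with $\widehat{t_m}=\widehat{t_l}$, and iterating forces a repeated atom, the iteration terminating because $\bar x_i\notin\fv(t_j)$ for $i\le j$ excludes cyclic dependencies $\bar x_l\in\fv(t_m)$, $\bar x_m\in\fv(t_l)$. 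Granting this, there is at most one $u\in\ST(p)\setminus\{x_l,\bar x_l\}$ with $\hat u=t$, and since each $t^i$ lies in this set, at most one chain $\vec c_i$ has $\widehat{t^i}=t$. Hence either $U_t$ is a singleton $\{u\}$ with $u$ not an endpoint $t^i$, so $\jd_q(t)\le\jd(p)$, or $U_t\subseteq\{t^i\}\cup\{x^i_j,\bar x^i_j\tq 0\le j\le n_i\}$ for a single chain, so $\card U_t\le 1+2(n_i+1)\le\ln(p)+1$ and $\jd_q(t)\le(\ln(p)+1)\jd(p)$; taking the maximum over $t$ gives $\jd(q)\le(\ln(p)+1)\jd(p)$.

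The main obstacle is precisely this injectivity of $\hat{(\cdot)}$ on non-left-conclusion subtrees (geometrically: eliminating axiom cuts only merges wires lying along a common chain of cuts); the rest is bookkeeping parallel to the proof of Lemma~\ref{ax}. One minor point to record in the write-up: if at least one axiom cut is eliminated then $\ln(p)\ge 2$, so the bound $\jd(p)$ in the singleton case is indeed $\le(\ln(p)+1)\jd(p)$; and if none is eliminated, $q=p$ and the statement is trivial.
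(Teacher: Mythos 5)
Your proof is correct and follows essentially the same route as the paper's: the same chain decomposition inherited from Lemma~\ref{ax}, the same case analysis on how $\jump_q$ arises from $\jump_p$, and the same count of at most $2n_i+3\le\ln(p)+1$ subtrees of $p$ collapsing onto a given target in $q$. The only difference is that you isolate and flag the injectivity of the substitution map on non-chain subtrees as the point needing justification, whereas the paper uses it implicitly when writing $\jd_q(\widehat{t^i})$ as an exact sum of jump degrees over the chain.
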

\begin{proof}
	As in the proof of Lemma~\ref{ax}, we can write
	$p=(\vec{c_1},\dotsc,\vec{c_k},\vec c;\vec s)$
	and
	$q= (\vec c;\vec s)[t^1/\bar x^1_0]\cdots[t^k/\bar x^k_0]$
	where 
	$\vec c_i=(\cut{x^i_0}{\bar x^i_1},\dotsc,\cut{x^i_{n_i-1}}{\bar x^i_{n_i}},\cut{x^i_{n_i}}{t^i})$.

	By the definition of cut elimination, for all $\w\in\counits(q)$ 
	we have:
	\[
		\jump_q(\w)= \begin{cases}
			t^i [t^{i+1}/\bar x_0^{i+1}]\cdots[t^k/\bar x_0^k]
			&\text{if $\jump_p(\w)\in\{x^i_0,\bar x^i_0,x^i_1,\dots,\bar x^i_{n_i},t^i\}$}
			\\
			\jump_q(\w)[t^{1}/\bar x_0^{1}]\cdots[t^k/\bar x_0^k]
			&\text{otherwise}
		\end{cases}.
	\]
	It follows that:
	\begin{itemize}
		\item $\jd_q(t^i [t^{i+1}/\bar x_0^{i+1}]\cdots[t^k/\bar x_0^k])=\jd_p(t^i)+\sum_{j=0}^{n_i}(\jd_p(x^i_j)+\jd_p(\bar x^i_j))
			\le (2n_i+3)\jd(p)$;
		\item if $t\in\ST(p)\setminus\bigcup_{i=1}^k\{x^i_0,\bar x^i_0,x^i_1,\dots,\bar x^i_{n_i},t^i\}$,
			then $\jd_q(t[t^{1}/\bar x_0^{1}]\cdots[t^k/\bar x_0^k])=\jd_p(t)$.
	\end{itemize}
	To conclude, it is sufficient to observe that $2n_i+2\le \ln(p)$:
	indeed each $\vec c_i$ induces a path alternating between 
	$n_i+1$ axioms an $n_i+1$ cuts.
\end{proof}

\section{Bounding the size of antireducts: general and iterated case}\label{section:general}

The previous results now allow us to treat the general case of a reduction $p\toto q$.
\begin{thm}\label{size_bound}
	If $p\toto q$ then $\size(p)\leq
	\psi\left(2(\ln(p)+1)\size(q),\ln(p),\jd(p)\right)$.
\end{thm}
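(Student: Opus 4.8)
The plan is to split the single parallel step $p\toto q$ into three \emph{homogeneous} parallel steps, performed in a carefully chosen order — evanescent cuts first, then axiom cuts, then multiplicative cuts — and then to chain the three size estimates of Section~\ref{section:size} along this decomposition. Concretely, I would write $p=(c_1,\dotsc,c_k,\cut{x_1}{t_1},\dotsc,\cut{x_n}{t_n},\cut{\w_1}{\cw_1},\dotsc,\cut{\w_l}{\cw_l},\vec c;\vec t)$ as in the definition of $\toto$, and use the fact that elimination steps of distinct cuts commute on the nose to first eliminate the $l$ evanescent cuts (obtaining $p_1$ with $p\totoe p_1$), then the $n$ axiom cuts (obtaining $p_2$ with $p_1\totoa p_2$), then the $k$ multiplicative cuts (obtaining $q$ with $p_2\totom q$). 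I would check that the side conditions required at each stage are precisely those already available for the original $\toto$ step: eliminating the evanescent cuts alters neither the shape, nor the free variables, nor the atoms of the axiom and multiplicative cuts still to be reduced, and substituting along an axiom cut sends a multiplicative cut to a multiplicative cut of the same arity; acyclicity is preserved throughout by Lemma~\ref{lemma:acyclicity}.

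Then I would chain the estimates. From $p\totoe p_1$ and Lemma~\ref{affs_size}, $\size(p)\le\psi(\size(p_1),\jd(p),\ln(p))$ — note the last two arguments are exactly $\jd(p)$ and $\ln(p)$, which is why the evanescent step has to come first. From $p_1\totoa p_2$ and Lemma~\ref{ax}, $\size(p_1)\le(\ln(p_1)+1)\,\size(p_2)$; and since $p\totoe p_1$, Lemma~\ref{cc_ax_e} gives $\ln(p_1)\le\ln(p)$, hence $\size(p_1)\le(\ln(p)+1)\,\size(p_2)$ — this is why the axiom step must come before the multiplicative one, the latter being the only step that can lengthen paths. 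From $p_2\totom q$ and Lemma~\ref{mult}, $\size(p_2)\le 2\,\size(q)$. Combining the last two, $\size(p_1)\le 2(\ln(p)+1)\,\size(q)$; and since $\psi(i,j,k)=i\,(1+2(2j)^{k+1})$ is non-decreasing in $i$, I conclude $\size(p)\le\psi(\size(p_1),\jd(p),\ln(p))\le\psi(2(\ln(p)+1)\,\size(q),\jd(p),\ln(p))$, which is the bound claimed in the statement.

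The individual inequalities are exactly the lemmas of Sections~\ref{section:size} and~\ref{section:vc}, so I expect the real content — and the only delicate point — to be the order of the decomposition. Any other order breaks the statement: if multiplicative cuts were reduced first, the intermediate path length would only be bounded by $\varphi(\ln(p))$ (Corollary~\ref{vc}) and the intermediate jump degree by $2\jd(p)$ (Lemma~\ref{qi_growth_m}), and an axiom or evanescent step performed afterwards would inherit those inflated parameters, so the final bound would no longer be a function of $\ln(p)$, $\jd(p)$ and $\size(q)$ alone. Thus the two things I would take care to verify are (i) that the decomposition into an $\totoe$-step, then an $\totoa$-step, then a $\totom$-step is legitimate — this rests on the commutation of disjoint eliminations and the stability of the remaining side conditions, both already established in Section~\ref{section:nets} — and (ii) the elementary monotonicity of $\psi$ in its first argument.
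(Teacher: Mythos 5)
Your proof is correct and is essentially identical to the paper's: the paper also decomposes the step as $p\totoe q'\totoa q''\totom q$ and chains Lemmas~\ref{affs_size}, \ref{ax}, \ref{cc_ax_e} and \ref{mult} in exactly that order, using the monotonicity of $\psi$ in its first argument. Your additional discussion of why this particular order is forced is a sound observation that the paper leaves implicit.
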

\begin{proof}
	Consider $q',q''$ such that $p\totoe q'\totoa q''\totom q$.
	We have:
	\begin{align*}
		\size(p)
		&\leq\psi(\size(q'),\jd(p),\ln(p))
		&&\text{(by Lemma~\ref{affs_size})}\\
		&\leq\psi((\ln(q')+1)\size(q''),\jd(p),\ln(p))
		&&\text{(by Lemma~\ref{ax})}\\
		&\leq\psi((\ln(p)+1)\size(q''),\jd(p),\ln(p))
		&&\text{(by Lemma~\ref{cc_ax_e})}\\
		&\leq\psi(2(\ln(p)+1)\size(q),\jd(p),\ln(p))
		&&\text{(by Lemma~\ref{mult})}\quad .
		\qedhere
	\end{align*}
\end{proof}

\begin{cor}
	\label{corollary:antireducts:onestep}
	If $q$ is an MLL net and $n,m\in\N$,
	then \[\{p\tq p\toto q,\ \jd(p)\le m\text{ and }\ln(p)\le n\}\] is finite.
\end{cor}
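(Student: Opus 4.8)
The plan is to combine Theorem~\ref{size_bound} with an elementary counting argument. First I would fix the net $q$ and the bounds $n,m\in\N$, and set $B=\psi\bigl(2(n+1)\size(q),n,m\bigr)$. For any $p$ with $p\toto q$, Theorem~\ref{size_bound} gives $\size(p)\le\psi\bigl(2(\ln(p)+1)\size(q),\ln(p),\jd(p)\bigr)$; since $\psi(i,j,k)=i(1+2(2j)^{k+1})$ is non-decreasing in each of its arguments over $\N$, the hypotheses $\ln(p)\le n$ and $\jd(p)\le m$ yield $\size(p)\le B$. So every $p$ in the set under consideration has at most $B$ wires.

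Next I would show that, up to the $\alpha$-equivalence of Remark~\ref{rem:alpha}, there are only finitely many MLL nets $p$ with $p\toto q$ and $\size(p)\le B$. The key point is that cut elimination preserves the interface, so such a $p$ has its family $\vec t$ of conclusions indexed by the same finite set as that of $q$. Working up to $\alpha$-equivalence, I may assume that all atoms of $p$ lie in a fixed denumerable set; since $p$ has at most $B$ subtrees, it has at most $B$ atoms, so only finitely many atoms are relevant. For a fixed finite interface and a fixed finite pool of atoms, there are finitely many families of pairwise distinct trees and cuts of total size at most $B$ --- each tree being a finite term of bounded size over the atoms and the connectives $\otimes,\parr$ (in particular the arities are bounded) --- and, for each such bare net, only finitely many jump functions $\jump\colon\counits(p)\to\ST(p)$, both sets being finite. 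Hence the collection of such nets is finite.

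Finally, the set $\{p\tq p\toto q,\ \jd(p)\le m,\ \ln(p)\le n\}$ is contained in this finite collection, so it is itself finite. I do not expect any substantial obstacle: the only points needing a little care are the monotonicity of $\psi$ (so that the bounds on $\ln$ and $\jd$ can be substituted directly into Theorem~\ref{size_bound}) and the bookkeeping of $\alpha$-equivalence, matching the paper's convention that the interface is rigid while cuts and atoms are considered up to reindexing and renaming.
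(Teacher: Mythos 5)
Your proposal is correct and follows exactly the paper's route: apply Theorem~\ref{size_bound} (using monotonicity of $\psi$ to substitute the bounds $n$ and $m$) to get a uniform bound on $\size(p)$, then observe that up to the $\alpha$-equivalence of Remark~\ref{rem:alpha} there are only finitely many nets of bounded size with a fixed interface. The paper leaves the counting step as an easy check, and your fleshed-out version of it is accurate.
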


Of course, that result holds only up to the $\alpha$-equivalence mentioned in Remark~\ref{rem:alpha}:
then it is easy to check that the cardinality of $\{p\tq \size(p)\le k\}$ is bounded
by a function of $k\in\N$.
Also recall from Remark~\ref{rem:jumps} that we are actually interested in bare
nets rather than nets with jumps, so Corollary~\ref{corollary:antireducts:onestep}
should be read as follows:
given a bare net $q$ and $n,m\in\N$ there are finitely many bare nets $p$ such
that $p\toto q$ and that can be equipped with a jump function $\jump_p$
satisfying $\jd(p)\le m$ and $\ln(p)\le n$.
More precisely, Theorem~\ref{size_bound} entails that the number of such bare
nets $p$ can be bounded by a function of $m$, $n$ and $\size(q)$.

It follows that, given an infinite linear combination $\sum_{i\in I} a_i.p_i$,
assuming that we can equip each $p_i$ with a jump function $\jump_{p_i}$
so that $\{\ln(p_i)\tq i\in I\}\cup\{\jd(p_i)\tq i\in I\}$ is finite,
we can  always consider an arbitrary family of reductions $p_i\toto q_i$ for
$i\in I$ and form the sum $\sum_{i\in I} a_i.q_i$: this is always well defined.
But if we want to iterate this process and perform a reduction
from $\sum_{i\in I} a_i.q_i$ to $\sum_{i\in I} a_i.r_i$,
when $q_i\toto r_i$ for $i\in I$,
we need to ensure that a similar side condition holds for the $q_i$’s.
Again, this is a consequence of our previous results, which we sum up 
in the following two theorems.

\begin{thm}\label{vc_growth} Let $p\toto q$.
	Then $\ln(q)\le\varphi(\ln(p))$.
\end{thm}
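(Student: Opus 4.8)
The plan is to reduce this statement to the two results already in hand: Corollary~\ref{vc}, which bounds $\ln$ after a purely multiplicative step, and Lemma~\ref{cc_ax_e}, which says that axiom and evanescent steps never increase $\ln$. The only subtlety is to arrange the decomposition of $p\toto q$ so that the multiplicative part — the sole source of growth — comes last, so that its argument has not yet been inflated.

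Concretely, I would first invoke the decomposition of a parallel reduction into separate steps, exactly as in the proof of Theorem~\ref{size_bound}: choose nets $q',q''$ with $p\totoe q'\totoa q''\totom q$. Then $\ln(q'')\le\ln(q')\le\ln(p)$ by two applications of Lemma~\ref{cc_ax_e}, and $\ln(q)\le\varphi(\ln(q''))$ by Corollary~\ref{vc}. Since $\varphi$ is monotone (as observed right after its definition, $n\le\varphi(n)\le\varphi(n+1)$), we get $\ln(q)\le\varphi(\ln(q''))\le\varphi(\ln(p))$, which is the claim.

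There is essentially no obstacle here beyond bookkeeping: the whole content of the theorem has already been packaged into Corollary~\ref{vc} (whose own proof, via Theorem~\ref{thm:paths} and the slipknot/bounce analysis, is where the real work lies) and Lemma~\ref{cc_ax_e}. The only point to be slightly careful about is that the three kinds of eliminations in $p\toto q$ genuinely commute, so the particular ordering $\totoe\cdot\totoa\cdot\totom$ is available; this is precisely the fact recalled in Subsection~\ref{cut-elim} and already used for Theorem~\ref{size_bound}, so it can be cited rather than reproved. Hence the proof is a three-line chaining of inequalities, and I would keep it that short.
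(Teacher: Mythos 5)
Your proof is correct and is essentially the paper's own argument: decompose the parallel step into the three homogeneous reductions and chain Lemma~\ref{cc_ax_e} with Corollary~\ref{vc}. The only (immaterial) difference is the ordering — the paper uses $p\totom q'\totoe q''\totoa q$, applying Corollary~\ref{vc} first so that monotonicity of $\varphi$ is never needed, whereas your ordering $\totoe\cdot\totoa\cdot\totom$ places the multiplicative step last and therefore needs the observation $n\le\varphi(n)\le\varphi(n+1)$, which you correctly cite; note also that your motivating worry about the argument being ``inflated'' is moot, since the axiom and evanescent steps never increase $\ln$ in the first place.
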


\begin{proof}
	Consider $q',q''$ such that $p\totom q'\totoe q''\totoa q$.
	We have:
	\begin{align*}
		\ln(q)
		&\leq \ln(q'')
		&&\text{(by Lemma~\ref{cc_ax_e})}\\
		&\leq \ln(q')
		&&\text{(by Lemma~\ref{cc_ax_e})}\\
		&\leq \varphi(\ln(p))
		&&\text{(by Corollary~\ref{vc})}\quad .
		\qedhere
	\end{align*}
\end{proof}

\begin{thm}\label{qi_growth}
	There exists a function $\theta:\N\to\N$
	such that $\jd(q)\leq \theta(\ln(p),\jd(p))$
	whenever $p\toto q$.
\end{thm}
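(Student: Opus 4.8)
The plan is to decompose the parallel reduction $p\toto q$ into its three homogeneous components and then chain the bounds already established in Sections~\ref{section:vc} and~\ref{section:qi}, being careful to propagate bounds on \emph{both} $\ln$ and $\jd$ simultaneously. Concretely, I would use the factorization noted in Subsection~\ref{cut-elim}: any $p\toto q$ can be written as $p\totom q_1\totoa q_2\totoe q$. The order is deliberate: eliminating evanescent cuts is the only step whose effect on the jump degree is exponential in the path length (Lemma~\ref{qi_growth_e}), so I want to perform it last, once $\ln$ has been inflated at most once, by the multiplicative step.

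The bookkeeping then runs as follows. For $p\totom q_1$, Corollary~\ref{vc} gives $\ln(q_1)\le\varphi(\ln(p))$ and Lemma~\ref{qi_growth_m} gives $\jd(q_1)\le 2\jd(p)$. For $q_1\totoa q_2$, Lemma~\ref{cc_ax_e} gives $\ln(q_2)\le\ln(q_1)\le\varphi(\ln(p))$ and Lemma~\ref{qi_growth_a} gives $\jd(q_2)\le(\ln(q_1)+1)\jd(q_1)\le 2(\varphi(\ln(p))+1)\jd(p)$. Finally, for $q_2\totoe q$, Lemma~\ref{cc_ax_e} again yields $\ln(q_2)\le\varphi(\ln(p))$, and Lemma~\ref{qi_growth_e} together with the monotonicity of $\varphi$ gives
\[
\jd(q)\le\bigl(2\jd(q_2)\bigr)^{\ln(q_2)+1}\le\bigl(4(\varphi(\ln(p))+1)\jd(p)\bigr)^{\varphi(\ln(p))+1}.
\]
It therefore suffices to set $\theta(n,m)=\bigl(4(\varphi(n)+1)m\bigr)^{\varphi(n)+1}$, understood as a function of the pair $(n,m)$, matching how the statement applies it; the degenerate case $\jd(p)=0$ (no $\counits$, hence none in $q$) is handled trivially since then $\jd(q)=0$.

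I do not expect any real obstacle here: all the substantive combinatorial work lies in the preceding sections, and this theorem is essentially an assembly step. The single point that deserves attention is the choice of factorization order — doing $\totom$ first guarantees that $\ln$ never exceeds $\varphi(\ln(p))$ by the time we reach the evanescent step, which is the one that is exponentially sensitive to it — but any of the valid factorizations would work, merely yielding a different closed form for $\theta$.
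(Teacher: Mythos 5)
Your proof is correct: the statement only asks for the existence of \emph{some} function $\theta$, and your chaining of Corollary~\ref{vc} with Lemmas~\ref{cc_ax_e}, \ref{qi_growth_m}, \ref{qi_growth_a} and \ref{qi_growth_e} along the factorization $p\totom q_1\totoa q_2\totoe q$ is sound, including the degenerate case $\jd(p)=0$. The paper's proof is the same assembly argument but with the opposite ordering, $p\totoa q'\totoe q''\totom q$, and this choice is the one point where your heuristic is slightly backwards. You argue for doing $\totom$ \emph{first} so that $\ln$ is ``inflated at most once'' before the evanescent step; but since $\totom$ is the \emph{only} step that can increase $\ln$ (Lemma~\ref{cc_ax_e}), the sharpest move is to postpone it to the very end, so that the exponentially sensitive evanescent step sees $\ln(q')\le\ln(p)$ rather than $\varphi(\ln(p))$. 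The paper thereby obtains $\jd(q)\le 2\bigl(2(\ln(p)+1)\jd(p)\bigr)^{\ln(p)+1}$, with no occurrence of $\varphi$ at all, whereas your $\theta$ carries $\varphi(\ln(p))+1$ in the exponent. Both are perfectly adequate for the theorem as stated (and for the finiteness results of Section~\ref{section:general}), so this is a difference in the quality of the closed form, not in correctness.
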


\begin{proof} 
	Consider $q',q''$ such that $p\totoa q'\totoe q'' \totom q$.
	We have
	\begin{align*}
		\jd(q)
		&\leq 2\jd(q'')
		&&\text{(by Lemma~\ref{qi_growth_m})}\\
		&\leq 2(2\jd(q'))^{\ln(q')+1}
		&&\text{(by Lemma~\ref{qi_growth_e})}\\
		&\leq 2(2(\ln(p)+1)\jd(p))^{\ln(q')+1}
		&&\text{(by Lemma~\ref{qi_growth_a})}\\
		&\leq 2(2(\ln(p)+1)\jd(p))^{\ln(p)+1}
		&&\text{(by Lemma~\ref{cc_ax_e})}\quad .
		\qedhere
	\end{align*}
\end{proof}

By the previous results, we can iterate Corollary~\ref{corollary:antireducts:onestep} and obtain:
\begin{cor}
	\label{corollary:antireducts}
	If $q$ is an MLL net and $k,n,m\in\N$,
	then \[\{p\tq p\toto^k q,\ \jd(p)\le m\text{ and }\ln(p)\le n\}\] is finite.
\end{cor}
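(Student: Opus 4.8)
The plan is to prove the statement by a straightforward induction on $k$, using the one-step version (Corollary~\ref{corollary:antireducts:onestep}) together with the two growth bounds established in the previous section, which control how $\ln$ and $\jd$ may increase along a single reduction step (Theorems~\ref{vc_growth} and~\ref{qi_growth}). The point of having isolated those two bounds is precisely that they let the induction go through.

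First I would dispatch the base case $k=0$: here $p\toto^0 q$ forces $p=q$ (up to the $\alpha$-equivalence of Remark~\ref{rem:alpha}), so the set under consideration is contained in $\{q\}$ and is finite. For the inductive step, assume the result for $k$ and take any $p$ with $p\toto^{k+1} q$, say $p\toto p'\toto^k q$, with $\jd(p)\le m$ and $\ln(p)\le n$. By Theorem~\ref{vc_growth} and the monotonicity of $\varphi$, we get $\ln(p')\le\varphi(\ln(p))\le\varphi(n)$. By Theorem~\ref{qi_growth}, $\jd(p')\le\theta(\ln(p),\jd(p))$, and since the function $\theta$ produced in that proof (namely $\theta(i,j)=2(2(i+1)j)^{i+1}$) is non-decreasing in both arguments, $\jd(p')\le\theta(n,m)$. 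Hence $p'$ lies in the set
\[
	P':=\{r\tq r\toto^k q,\ \jd(r)\le\theta(n,m)\text{ and }\ln(r)\le\varphi(n)\},
\]
which is finite by the induction hypothesis.

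Now for each fixed $r\in P'$, Corollary~\ref{corollary:antireducts:onestep} tells us that $\{p\tq p\toto r,\ \jd(p)\le m\text{ and }\ln(p)\le n\}$ is finite. Since
\[
	\{p\tq p\toto^{k+1} q,\ \jd(p)\le m\text{ and }\ln(p)\le n\}
	\subseteq
	\bigcup_{r\in P'}\{p\tq p\toto r,\ \jd(p)\le m\text{ and }\ln(p)\le n\},
\]
the left-hand side is a finite union of finite sets, hence finite, which closes the induction.

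This argument is essentially bookkeeping, so I do not expect a genuine obstacle: everything hard has already been done in Sections~\ref{section:vc}, \ref{section:qi} and~\ref{section:general}. The only points requiring a little care are the same as for Corollary~\ref{corollary:antireducts:onestep} — namely that the finiteness is meant up to $\alpha$-equivalence — and the observation that $\varphi$ and the bounding function $\theta$ of Theorem~\ref{qi_growth} can be taken monotone, which is immediate from their closed forms; this monotonicity is what lets us replace the \emph{a priori} unknown intermediate values $\ln(p'),\jd(p')$ by the fixed bounds $\varphi(n),\theta(n,m)$ before invoking the induction hypothesis.
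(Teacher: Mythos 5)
Your proof is correct and follows exactly the route the paper intends: the paper's "proof" of this corollary is just the one-line remark that one iterates Corollary~\ref{corollary:antireducts:onestep} using Theorems~\ref{vc_growth} and~\ref{qi_growth} to propagate the bounds on $\ln$ and $\jd$ along the reduction sequence, and your induction on $k$ (with the monotonicity of $\varphi$ and $\theta$ making the intermediate bounds uniform) is precisely that argument spelled out.
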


\section{Taylor expansion}\label{section:Taylor}

We now show how the previous results apply to Taylor expansion.
For that purpose, we must extend our syntax to MELL proof nets.
Our presentation departs from Ehrhard’s \cite{Ehrhard16} in our treatment of
promotion boxes:
instead of introducing boxes as tree constructors labelled by nets, 
with auxiliary ports as inputs, we consider box ports as $0$-ary trees, 
that are related with each other in a \emph{box context},
associating each box with its contents.
This is in accordance with the usual presentation of promotion 
as a black box, and has two motivations:
\begin{itemize}
	\item in Ehrhard’s syntax, the promotion is not a net but an open
		tree, for which the trees associated with auxiliary ports
		must be mentioned explicitly: 
		this would complicate the expression of Taylor expansion;
	\item since we consider a single class of $\wn$-links instead of having a
		separate dereliction, we must impose constraints 
		on auxiliary ports, that are easier to express when these ports 
		are directly represented in the syntax.
\end{itemize}
Then we show that if $p$ is a resource net in the support of the Taylor
expansion of an MELL proof net $P$, then $\ln(p)$ and $\jd(p)$
are bounded by functions of $P$.

Observe that we need only to consider the support of Taylor expansion,
so we do not formalize the expansion of MELL nets into 
infinite linear combinations of resource nets:
rather, we introduce $\taylor(P)$ as a set of approximants.

\subsection{MELL nets}

In addition to the set of variables, 
we fix a denumerable set $\fp$ of \definitive{box ports}:
we assume given an enumeration $\fp=\{a^b_i\tq i,b\in\N\}$.
We call \definitive{principal ports} the ports $a^b_0$ and \definitive{auxiliary ports} the other ports.
Instead of separate contractions and derelictions, we consider a unified
$\wn$-link of arbitrary arity; auxiliary ports of boxes must be premises of such
links (or of auxiliary ports of outer boxes,
that must satisfy this constraint inductively).

The weakenings (and coweakenings, in the resource nets yet to be introduced)
are not essentially different from the multiplicative
units in our untyped nets. Indeed, we will see that the
geometrical and combinatorial behaviour of the $\wn$-link
(resp. the $\oc$-link) is identical 
to that of the $\parr$ (respectively,
of the $\otimes$). This will be reflected in our use of labels:
in addition to $\units$ and $\counits$, we will use labels from denumerable
sets $\coweaks$ and $\weaks$ (now assuming $\fv$, $\fp$, $\units$,
$\counits$, $\coweaks$ and $\weaks$ are pairwise disjoint),
and write $\pleaves=\units\cup\coweaks$ and $\nleaves=\counits\cup\weaks$.

We introduce the corresponding term syntax.
\definitive{Raw pre-trees} ($S^\circ$, $T^\circ$, \emph{etc.})
and \definitive{raw trees} ($S$, $T$, \emph{etc.})
are defined by mutual induction as follows:
\begin{align*}
	T&\recdef x \mid \unit\cw \mid \counit\w \mid \otimes(T_1,\dotsc,T_n)\mid \parr(T_1,\dotsc,T_n) \mid a^b_0 \mid \weak{\w'}  \mid \wn(T^\circ_1,\dotsc,T^\circ_n)
	\\
	T^\circ & \recdef T \mid a^b_{i+1}
\end{align*}
where $x$ ranges over $\fv$, $\cw$ ranges over $\units$, $\w$ ranges over $\counits$, 
$\w'$ ranges over $\weaks$, $b$ and $i$ range over $\N$ and we require $n\not=0$
in each of $\otimes(T_1,\dotsc,T_n)$, $\parr(T_1,\dotsc,T_n)$ and
$\wn(T^\circ_1,\dotsc,T^\circ_n)$.
The set $\SPT(S)$ of the sub-pre-trees of $S$ is defined in the natural way,
as well as the set $\ST(S)$ of sub-trees of $S$, from which 
we derive the definitions of $\fv(S)$, $\fp(S)$, $\counits(S)$, \emph{etc.}
The set $\atoms(S)$ of atoms of $S$ is then $\fv(S)\cup\fp(S)\cup\units(S)\cup\nleaves(S)$.

A \definitive{tree} (resp. a \definitive{pre-tree}) is a raw tree (resp. raw pre-tree) 
in which each atom occurs at most once.
A \definitive{cut} is an unordered pair of trees $C=\cut{T}{S}$
with disjoints sets of atoms.
Pre-trees and cuts only describe the surface level of MELL nets:
we also have to introduce promotion boxes.

We now define \definitive{box contexts} and \definitive{pre-nets} by mutual induction as follows.
A box context $\Theta$ is the data of a finite set $\boxes_{\Theta}\subset\N$,
and of a pre-net of the form $\Theta(b)=(\Theta_b;\vec C_b;T_b,\vec S^\circ_b;\jump_b)$,
for each $b\in\boxes_{\Theta}$.
We then write $\ar_\Theta(b)$, or simply $\ar(b)$ for the length of the family
$\vec S^\circ_b$, which we call the \definitive{arity} of the box $b$.
A pre-net is a tuple $P^\circ=(\Theta;\vec C;\vec S^\circ;\jump)$
where:
\begin{itemize}
	\item $\Theta$ is a box context;
	\item the \definitive{jump function} $\jump$ is a function
		$\nleaves(\vec C,\vec S^\circ)\to \ST(\vec C,\vec S^\circ)$;
	\item each atom occurs at most once in $\vec C,\vec S^\circ$;
	\item $a^b_i\in\fp(\vec C;\vec S^\circ)$ iff $b\in\boxes_\Theta$ and $0\le i\le\ar(b)$;
	\item $\fv(\vec C,\vec S^\circ)$ is closed under the involution $x\mapsto \bar x$.
\end{itemize}
Then a \definitive{net} is a pre-net
of the form $P=(\Theta;\vec C;\vec S;\jump)$,
\emph{i.e.} without auxiliary ports as conclusions.
In the following, we may write, e.g., $\Theta_{P}$ for $\Theta$ in
this case. An example is illustrated in Figure~\ref{fig:ex_mell}.

\begin{rem}\label{rem:mellnets}
	To be formal, in the definition of a box context $\Theta$,
	we should also fix an enumeration of the family $\vec S^\circ_b$ in $\Theta(b)$.
	Indeed, when we write $a_0^b,a_1^b,\dots,a_{\ar(b)}^b$ for the ports of a box $b$,
	and $\Theta(b)=(\Theta_b;\vec C_b;T_b,S^\circ_{b,1},\dots,S^\circ_{b,\ar(b)};\jump_b)$ for the 
	contents of the box, we implicitly assume a bijection which maps 
	each pre-tree $S^\circ_{b,i}$ to the auxiliary port $a_i^b$ of which it is a premise
	(which leaves $T$ to be mapped to $a_0^b$).
	We prefer to keep this information implicit in the following,
	as the notations should allow to recover it, whenever necessary.

	On the other hand, an analogue of Remark~\ref{rem:alpha}
	applies in this new setting,
	as pre-nets and nets should be considered up to some notion of 
	isomorphism preserving the interface,
	which amounts to:
	\begin{itemize}
		\item reindexing cuts, so that $\vec C$ is considered as a set;
		\item reordering premises of $\wn$-links, which accounts for the
			associativity and commutativity of the underlying binary
			contraction;
		\item renaming atoms and boxes and, simultaneously, 
			changing the enumeration of the family $\vec S^\circ_b$ 
			in each box, all this preserving the duality involution on variables,
			the partition $\{a^b_i\tq i\in\N\}_{b\in\N}$ of $\fp$, the jump functions,
			and the association of each $S^\circ_{b,i+1}$ to $a^b_{i+1}$.
	\end{itemize}
	We still consider this as a form of $\alpha$-equivalence as it only involves
	particular renamings of atoms or indices, preserving the rest of the
	structure.
	Again, we keep this quotient implicit whenever possible in the remaining.

	Also, as already mentioned in Remark~\ref{rem:jumps}, 
	we rely on jumps to control the combinatorics of 
	the elimination of evanescent cuts:
	we need nets to be equipped with jumps 
	only to ensure that the resource nets in the 
	Taylor expansion can also be equipped with jumps,
	that moreover enjoy uniform bounds.
	More precisely,
	we will show that if an MELL net $P$ can be equipped with $\jump_P$
	that satisfies the acyclicity criterion,
	then each $p\in\taylor(P)$ can be equipped with $\jump_p$,
	satisfying uniform bounds on $\ln(p)$ and $\jd(p)$ .

	The existence of such a jump function $\jump_P$ should be
	understood as side condition only:
	we keep it in the definition of nets by default 
	because we rely on it everywhere in the following,
	but in the end we are actually interested in the compatibility of Taylor
	expansion with cut elimination for nets without jumps.
	And the reader may check that, without jumps, our pre-nets
	(up to $\alpha$-equivalence) are essentially the same as, e.g.,
	the in-PS's (up to the names of internal ports)
	defined by de Carvalho~\cite{deCarvalho16}
	for his proof of the injectivity of Taylor expansion.

\end{rem}

\begin{figure}[t]
\begin{center}
\begin{tikzpicture}
	\node[lien](ax) at (0,0){ax};
	\node[lien](w) at ($(ax)+(-1.7,0)$){$\wn$};
	\node[lien,left=1 of w](b0){$\oc$};
	\coordinate[left=1 of b0](b01);
	\node[lien](cut) at ($($.5*(w)+.5*(b0)$)+(0,-0.7)$){cut};
	\node[lien] at ($(-.7,.8)+(b0)$) (ax0){ax};
	\node[lien] at ($(-1.7,0)+(ax0)$) (bot0){$\bot$};
	\coordinate(b02) at (bot0 |- b0);
	\node[lien] at (bot0 |- cut) (d0){$\wn$};
	\node[lien] at ($(.7,-1.4)+(ax)$) (b){$\oc$};
	\node[lien] at ($(-3.2,-.8)+(b)$) (c){$\wn$};
	\coordinate[left=1.2 of b](b1);
	\coordinate(b2) at (b-|b01);
	\node[lien] at (d0 |- c) (d){$\wn$};
	\coordinate(b3) at (d0|-b);

	\node[above left](x) at (ax0){$x\ $};
	\node[above right]at (ax0){$\ \bar x$};
	\node[above left](y) at (ax) {$y\ $};
	\node[above right]at (ax) {$\ \bar y$};
	\node[below right]at (w.south){$\w$};
	\node[below left=1ex]at (bot0){$\w'$};
	\node[below left]at (b01){$1$};
	\node[below left]at (b02){$2$};
	\node[below right]at (b1){$1$};
	\node[below left]at (b2){$2$};
	\node[below left]at (b3){$3$};
	\node[below right=1ex]at (b){$b$};
	\node[below left=1ex]at (b0){$b'$};

	\draw[saut] (w) to[out=80,in=100] (y);
	\draw[saut] (bot0) to[out=80,in=100] (x);
 	\draw (cut)to[out=west,in=south](b0);
 	\draw (cut)to[out=east,in=south](w);
 	\draw (ax0)to[out=east,in=north](b0);
 	\draw (ax)to[out=east,in=north](b);

	\coordinate[above=.5ex] (in0) at (b0);
	\coordinate[above=.5ex] (in) at (b);

	\draw
		(bot0)--(bot0|-in0)
		(b02)--(d0)--(b3|-in)
		(b3)--(d)
		;
	\draw
		(ax0) to[out=west,in=north] (b01|-in0)
		(b01)--(b2|-in) 
		(b2) to[out=south,in=north west] (c);
	\draw(ax)
		to[out=west,in=north] (b1|-in)
		(b1) to[out=south,in=north east] (c);
	\draw(b0.east) --++(.3,0) --++(0,1.7) --++(-3.8,0) --++(0,-1.7) -- (b0.west);
	\draw(b.east) --++(.3,0) --++(0,3.4) --++(-8.5,0) --+(0,-3.4) -- (b.west);
	\draw(c)--(c|- 0,-2.7);
	\draw(b)--(b|- 0,-2.7);
	\draw(d)--(d|- 0,-2.7);
\end{tikzpicture}
\end{center}
\caption{Representation of the net
$(\Theta;;\wn(a^b_3),\wn(a^b_2,a^b_1),a^b_0;\jump)$ 
where
$\boxes_\Theta=\{b\}$, $\ar(b)=3$
and
$\Theta(b)=(\Theta';\cut{a^{b'}_0}{\w};\bar y,y,a^{b'}_1,\wn(a^{b'}_2);\jump')$
where 
$\boxes_{\Theta'}=\{b'\}$, $\ar(b')=2$, $\jump'(\w)=y$
and 
$\Theta'(b')=(\Theta'';;\bar x,x,\w';\jump'')$
where $\boxes_{\Theta''}=\emptyset$ and $\jump''(\w')=x$.
}
\label{fig:ex_mell}
\end{figure}

Given a pre-net $P^\circ=(\Theta;\vec C;\vec S^\circ;\jump)$,
we write
$\fv(P^\circ)=\fv(\vec C;\vec S^\circ)$,
$\ST(P^\circ)=\ST(\vec C;\vec S^\circ)$,
\emph{etc.}
We define the \definitive{toplevel size} of MELL pre-nets by
$\tlsize(P^\circ)=\card\SPT(P^\circ)$.
We write $\depth(P^\circ)$ for the maximum level of nesting of boxes in $P^\circ$,
\emph{i.e.} the inductive depth in the above definition of pre-nets.
The \definitive{size} of MELL pre-nets includes that of their boxes: we set
$\size(P^\circ)=\tlsize(P^\circ)+\sum_{b\in\boxes_\Theta}\size(\Theta(b))$ ---
this definition is of course by induction on $\depth(P^\circ)$.

Notice that, by the above definition, for all $\w\in\nleaves(\vec C,\vec S^\circ)$,
$\jump_{P^\circ}(\w)$ must be at the same depth as $\w$, and
cannot be an auxiliary port.

We extend the switching functions of MLL to $\wn$-links: 
for each $T=\wn(T^\circ_1,\dots,T^\circ_n)\in\ST(P^\circ)$, $I(T)\in\{T^\circ_1,\dots,T^\circ_n\}$,
which induces a \definitive{$\wn$-edge} $T\sim^{P^\circ,I}_T I(T)$.
We also consider \definitive{box edges} $a^b_0 \sim^{P^\circ}_{b,i} a^b_i$ for $b\in\boxes_\Theta$
and $1\le i\le ar(b)$: w.r.t. paths,
a box $b$ behaves like $\ar(b)$ axiom links having the principal port of the box
as a common vertice, and the content is not considered.
Finally, \definitive{jump edges} also include the case of
weakenings: $\w\sim^{P^\circ}_{\w} \jump(\w)$ for $\w\in\nleaves(P^\circ)$.

We write $\paths(P^\circ,I)$ (resp. $\paths(P^\circ)$) for the set of $I$-paths (resp. paths) in $P^\circ$.
We say a pre-net $P^\circ$ is \definitive{acyclic} if there is no cycle in $\paths(P^\circ)$
and, inductively, each $\Theta(b)$ is acyclic. From now on, we consider acyclic pre-nets only.

\subsection{Resource nets and Taylor expansion}

The Taylor expansion of a net $P$ will be a set of \definitive{resource nets}:
these are the same as the multiplicative nets introduced before,
with the addition of term constructors for $\oc$ and $\wn$.
Raw trees are given as follows:
\[
	t\recdef x \mid \unit\cw \mid\counit\w\mid \otimes(t_1,\dotsc,t_n)\mid \parr(t_1,\dotsc,t_n)\mid \coweak{\cw'}\mid \weak{\w'} \mid \oc(t_1,\dotsc,t_n) \mid \wn(t_1,\dotsc,t_n).
\]
where $x$ ranges in $\fv$,
$\cw$ ranges over $\units$, $\w$ ranges over $\counits$, 
$\cw'$ ranges over $\coweaks$, $\w'$ ranges over $\weaks$, 
and we require $n\not=0$ in each case.
In resource nets, we extend switchings to $\wn$-links
and jumps from weakenings as in MELL nets,
associated with $\wn$-edges and jump edges.
Moreover, for each $t=\oc(t_1,\dots,t_n)$, we have \definitive{$\oc$-edges} 
$t\sim_{t,t_i}t_i$ for $1\le i\le n$.
Observe that, except for the notation of the root constructor, the trees
$\coweak{\cw}$, $\weak{\w}$, $\oc(t_1,\dotsc,t_n)$ and $\wn(t_1,\dotsc,t_n)$,
are exactly the same as 
$\unit\cw$, $\counit\w$, $\otimes(t_1,\dotsc,t_n)$ and $\parr(t_1,\dotsc,t_n)$ 
respectively:
in particular they induce the same geometry for paths.

During Taylor expansion, we need to replace a box in a pre-net
with an arbitrary number of approximants of this box.
Let us call \definitive{box replacement of arity $n$} the data
$r=(\vec s_0,\dotsc,\vec s_n)$ of $n+1$ families of pairwise distinct resource
trees $\vec s_0,\dotsc,\vec s_n$, such that $\atoms(s)\cap\atoms(s')=\emptyset$
whenever $s\in\vec s_i$ and $s'\in\vec s_{i'}$, and $i\not=i'$ or $s\not=s'$.
A family $\vec r=(r_b)_{b\in \boxes}$ of box replacements
such that $\atoms(r_b)\cap\atoms(r_{b'})=\emptyset$ for $b\not=b'\in\boxes$
is \definitive{applicable to the pre-term $T^\circ$} if
$\atoms(T^\circ)\cap\atoms(\vec r)=\emptyset$ and, for each
$a^b_i\in\fp(T^\circ)$, $b\in\boxes$ and $r_b$ is of arity at least $i$.

\begin{defi}\label{replacements}
	Let $\vec r$ be a $B$-indexed family of box replacements,
	and write $r_b=(\vec s^b_0,\dotsc,\vec s^b_{n_b})$ for each $b\in B$.
	Assuming that $\vec r$ is applicable to the tree $S$
	(resp. the pre-tree $S^\circ$),
	the \definitive{substitution of $\vec r$ for the boxes of $S$}
	(resp. of $S^\circ$) is the tree $\bsubs{S}{\vec r}$
	(resp. the family of pre-trees $\rsubs{S^\circ}{\vec r}$) defined
	by mutual induction on pre-trees and trees as follows:
	\begin{gather*}
		\begin{aligned}
			\bsubs{x}{\vec r}&=x&
			\bsubs{\cw}{\vec r}&=\cw&
			\bsubs{\w}{\vec r}&=\w&
			\bsubs{a^b_0}{\vec r}&=\begin{cases}
			\cw_b&\text{if $\vec s^b_0$ is empty}\\
			\oc(\vec s^b_0)&\text{otherwise}\\
			\end{cases}
		\end{aligned}
		\\
		\begin{aligned}
			\bsubs{\otimes(T_1,\dotsc,T_n)}{\vec r}&=
			\otimes(\bsubs{T_1}{\vec r},\dotsc,\bsubs{T_n}{\vec r})\\
			\bsubs{\parr(T_1,\dotsc,T_n)}{\vec r}&=
			\parr(\bsubs{T_1}{\vec r},\dotsc,\bsubs{T_n}{\vec r})\\
			\bsubs{\wn(T^\circ_1,\dotsc,T^\circ_n)}{\vec r}&=
		\begin{cases}
			\w_{\wn(T^\circ_1,\dotsc,T^\circ_n)}&
			\text{if $\rsubs{T^\circ_1}{\vec r},\dotsc,\rsubs{T^\circ_n}{\vec r}$ is empty}\\
			\wn(\rsubs{T^\circ_1}{\vec r},\dotsc,\rsubs{T^\circ_n}{\vec r})&
			\text{otherwise}
		\end{cases}
		\end{aligned}
		\\
		\rsubs{T}{\vec r}=\bsubs{T}{\vec r}\qquad
		\rsubs{a^b_{i+1}}{\vec r}=\vec s^b_{i+1}
	\end{gather*}
	where each $\cw_b\in\coweaks$ and each $\w_{\wn(T^\circ_1,\dotsc,T^\circ_n)}\in\weaks$
	is chosen fresh (not in $\atoms(S)$ nor $\atoms(S^\circ)$ nor $\atoms(\vec r)$)
	and unique.\footnote{
		So, formally, this construction should be parametrized by suitable injections 
		$\{a^b_0\in\ST(S^\circ)\}\to \coweaks$ and $\{\wn(T^\circ_1,\dotsc,T^\circ_n)\in \ST(S^\circ)\}\to \weaks$
		to ensure this linearity constraint.
		We keep this implicit in the following, but will rely on the fact 
		that, given $t\in\ST(\rsubs {S^\circ}{\vec r})$, one can
		recover unambiguously one of the following:
		either $T\in\ST(S^\circ)$ such that $t=\bsubs {T}{\vec r}$;
		or $b$ and $j$ such that $t\in\ST(\vec s^b_j)$.
	}
\end{defi}

We are now ready to introduce the expansion of MELL nets depicted
in Figure~\ref{fig:taylor}.\footnote{
	More extensive presentations of the Taylor expansion of MELL
	nets exist in the literature, in various styles
	\cite[among others]{PT09,GPF16,deCarvalho16}.
	Our only purpose here is to introduce sufficient notations
	to present our analysis of the jump degree and the length of paths in
	$\taylor(P)$ w.r.t. the size of $P$.
}
During the construction, we need to track the conclusions 
of copies of boxes, in order to collect copies of auxiliary ports 
in the external $\wn$-links: this is the role of the intermediate
notion of pre-Taylor expansion.

First, recall that we write $T_b,S^\circ_{b,1},\dots,S^\circ_{b,\ar(b)}$
for the trees of $\Theta(b)$ that are respectively mapped to
$a_0^b,a_1^b,\dots,a_{\ar(b)}^b$.
Also, in this case, let us write $\vec S^\circ_b=(S^\circ_{b,1},\dotsc,S^\circ_{b,\ar(b)})$.
\begin{defi}\label{defTaylor}
Given a closed pre-net $P^\circ=(\Theta;\vec C;\vec S^\circ;\jump)$,
a \definitive{pre-Taylor expansion} of $P^\circ$ is any pair $(p,f)$ 
of a resource net $p=(\vec c;\vec t;\jump_p)$,
together with a function $f:\vec t\to\vec S^\circ$ such 
that $f^{-1}(T)$ is a singleton whenever $T\in\vec S^\circ$ is a tree,
obtained as follows:
\begin{itemize}
	\item for each $b\in\boxes_{\Theta}$,
		fix a number $k_b\geq 0$ of copies;
	\item for $1\le j\le k_b$,
		fix inductively a pre-Taylor expansion $(p^b_j,f^b_j)$ of $\Theta(b)$,
		renaming the atoms so that the sets $\atoms(p^b_j)$ are pairwise disjoint,
		and also disjoint from $\atoms(\vec C)\cup\atoms(\vec S^\circ)$;
	\item write $p^b_j=(\vec c^b_j;t^b_j,\vec s^b_j;\jump^b_j)$ so that $f^b_j(t^b_j)=T_b$;
	\item write $\vec r=(r_b)_{b\in \boxes_\Theta}$ for the family of 
		box replacements $r_b=(\vec u^b_0,\dotsc,\vec u^b_{\ar(b)})$, where
		$\vec u^b_0=(t^b_1,\dotsc,t^b_{k_b})$
		and each $\vec u^b_i$ is an enumeration of  $\bigcup_{j=1}^{k_b}(f^b_j)^{-1}(S^\circ_{b,i})$
		for $1\le i\le \ar(b)$;
	\item set $\vec t=\rsubs{\vec S^\circ}{\vec r}$ and $\vec c=\bsubs{\vec C}{\vec r},\vec c'$ 
		where $\vec c'$ is the concatenation of the families $\vec c^b_j$
		for $b\in \boxes_\Theta$ and $1\le j\le k_b$
	\item for $t\in\vec t$, set $f(t)=a^b_i$ if $t\in\vec u^b_i$ with $1\le i\le \ar(b)$,
		otherwise let $f(t)$ be the tree $T\in\vec S^\circ$
		such that $t=\bsubs T{\vec r}$;
	\item for each $\w\in\nleaves(p)$, $\jump_p(\w)$ is defined as follows:
		\begin{itemize}
			\item if $\w\in\nleaves(p_j^b)$ then we set
				$\jump_p(\w)=\jump_{p_j^b}(\w)$.
			\item if $\w=\w_{\wn(T^\circ_1,\dotsc,T^\circ_n)}$
				then each $T_i^\circ\{\vec r\}$ is empty;
				then we select any $i\in\{1,\dotsc, n\}$
				and set $\jump_p(\w)=\bsubs{a_0^b}{\vec r}$
				where $b\in\boxes_\Theta$ is the box such that 
				$T^\circ_i=a^b_{j}$ for some $1\le j\le \ar(b)$;
			\item otherwise $\w\in\nleaves(\vec C;\vec S^\circ)$,
				and then we set $\jump_p(\w)=\bsubs{\jump(\w)}{\vec r}$
				(note that $\jump(\w)$ is a tree
				so this is a valid application of Definition~\ref{replacements}).
		\end{itemize}
\end{itemize}
		The \definitive{Taylor expansion}	of a net $P$ is then
		$\taylor(P)=\{p\tq (p,f)\text{ is a pre-Taylor expansion of }P\}$.
\end{defi}

\begin{exa}
Given the net 
$P=(\Theta;;\wn(a^b_3),\wn(a^b_2,a^b_1),a^b_0;\jump)$ 
of Figure~\ref{fig:ex_mell},
we construct an element $p$ of $\taylor(P)$ as follows.
First, we take two copies of box $b$, fixing $k_b=2$.
Recall that
$\Theta(b)=(\Theta';\cut{a^{b'}_0}{\w};\bar y,y,a^{b'}_1,\wn(a^{b'}_2);\jump')$
where $\boxes_{\Theta'}=\{b'\}$.
Hence, to construct $(p^b_j,f^b_j)$ we must first fix
a number $k_{b',j}$ of copies of the box $b'$:
we set $k_{b',1}=0$ and $k_{b',2}=1$,
and it remains to select a single pre-Taylor expansion 
$(p',f')$ of $\Theta'(b')$ for the only copy of $b'$ in $(p^b_2,f^b_2)$.
Since 
$\Theta'(b')=(\Theta'';;\bar x,x,\w';\jump'')$
contains no box,
we must have $p'=(;\bar x,x,\w';\jump'')$ 
with $f'(\bar x)=\bar x$, $f'(x)=x$ and $f'(\w')=\w'$.

Since $k_{b',1}=0$, we construct
$p^b_1=\big(\rsubs{\cut{a^{b'}_0}{\w}}{r_{b',1}};\rsubs{(\bar y,y,a^{b'}_1,\wn(a^{b'}_2))}{r_{b',1}};\jump^b_1\big)$
where $r_{b',1}$ is the empty replacement:
we obtain $p^b_1=(\cut{\cw}{\w};\bar y,y,\w'')$
where $\cw\in\coweaks$ and $\w''\in\weaks$ are fresh,
and we set $f^b_1(\bar y)=\bar y$, $f^b_1(y)=y$, $f^b_1(\w'')=\wn(a^{b'}_2)$,
and also $\jump^b_1(\w)=\bsubs{\jump'(\w)}{r_{b',1}}=y$
and $\jump^b_1(\w'')=\bsubs{a^{b'}_0}{r_{b',1}}=\cw$.

Having defined $(p',f')$ as above, we must set 
$r_{b',2}=((\bar x),(x),(\w'))$ and we define
$p^b_2=\big(\rsubs{\cut{a^{b'}_0}{\w}}{r_{b',2}};\rsubs{(\bar y,y,a^{b'}_1,\wn(a^{b'}_2))}{r_{b',2}};\jump^b_2\big)$:
we obtain $p^b_2=(\cut{\oc(\bar x)}{\w};\bar y,y,x,\wn(\w'))$,
with $f^b_2(\bar y)=\bar y$, $f^b_2(y)=y$, $f^b_2(x)=a^{b'}_1$ and $f^b_2(\wn(\w'))=\wn(a^{b'}_2)$,
and we set $\jump^b_2(\w)=\bsubs{\jump'(\w)}{r_{b',2}}=y$
and  $\jump^b_2(\w')=\jump''(\w')=x$.

We rename the atoms in both pre-Taylor expansions systematically as follows:
$p^b_1=(\cut{\cw_1}{\w_1};\bar y_1,y_1,\w''_1)$ and
$p^b_2=(\cut{\oc(\bar x_2)}{\w_2};\bar y_2,y_2,x_2,\wn(\w'_2))$,
also redefining $f^b_1$, $\jump^b_1$, $f^b_2$ and $\jump^b_2$
accordingly.

Finally, we set $\vec c=\cut{\cw_1}{\w_1},\cut{\oc(\bar x_2)}{\w_2}$
and $\vec t=\rsubs{(\wn(a^b_3),\wn(a^b_2,a^b_1),a^b_0)}{r_b}$ 
where  $r_b=((\bar y_1,\bar y_2),(y_1,y_2),(x_2),(\w''_1,\wn(\w'_2)))$.
We obtain:
\[p=(\cut{\cw_1}{\w_1},\cut{\oc(\bar x_2)}{\w_2};\wn(\w''_1,\wn(\w'_2)),\wn(x_2,y_1,y_2),\oc(\bar y_1,\bar y_2);\jump_p)\]
with $\jump_p(\w_1)=y_1$, $\jump_p(\w_2)=y_2$,
$\jump_p(\w''_1)=\cw_1$ and $\jump_p(\w'_2)=x_2$,
which is depicted in Figure~\ref{fig:ex_taylor}.
\end{exa}

\begin{figure}[t]
\begin{center}
\begin{tikzpicture}
	\node[lien] (coc){$\oc$};
	\node[lien] at ($(-3,-.8)+(coc)$) (con){$\wn$};

	\node[lien](ax1) at ($(coc)+(-1,1.5)$){ax};
	\node[lien](w1) at ($(ax1)+(-2.2,0)$){$\wn$};
	\node[lien,left=1 of w1](oc1){$\oc$};
	\node[lien](cut1) at ($($.5*(w1)+.5*(oc1)$)+(0,-0.7)$){cut};
	\node[lien] at ($(-4,0)+(cut1)$) (w11){$\wn$};
	\node[below left]at (w11){$\w''_1\ $};
	\node[above left](y1) at (ax1) {$y_1\ $};
	\node[above right]at (ax1) {$\ \bar y_1$};
	\node[below right]at (w1.south){$\w_1$};
	\node[below left=1ex](cw1) at (oc1){$\cw_1$};

	\draw[saut] (w1) to[out=80,in=100] (y1);
	\draw[saut] (w11) to[out=north east,in=north west] (oc1);
 	\draw (cut1)to[out=west,in=south](oc1);
 	\draw (cut1)to[out=east,in=south](w1);
 	\draw (ax1)to[out=east,in=105](coc);

	\node[lien](ax2) at ($(ax1)+(0,2)$) {ax};
	\node[lien](w2) at (ax2 -| w1){$\wn$};
	\node[lien](oc2) at (ax2 -| oc1){$\oc$};
	\node[lien] at ($(-.7,.7)+(oc2)$) (ax20){ax};
	\node[lien](cut2) at ($($.5*(w2)+.5*(oc2)$)+(0,-0.7)$){cut};
	\node[lien] at ($(-1.7,0)+(ax20)$) (bot2){$\bot$};
	\node[lien] at (bot2 |- cut2) (d2){$\wn$};
	\node[lien] at ($.5*($(d2|-con)+(w11|-con)$)$) (con'){$\wn$};

	\node[above left](y2) at (ax2) {$y_2\ $};
	\node[above right]at (ax2) {$\ \bar y_2$};
	\node[below right]at (w2.south){$\w_2$};
	\node[above left](x2) at (ax20){$x_2\ $};
	\node[above right]at (ax20){$\ \bar x_2$};
	\node[below left]at (bot2){$\w'_2\ $};

	\coordinate[left=1 of oc2](b21);
	\coordinate(b2) at (ax1-|b21);
	\coordinate[left=.5 of ax1](b1);

	\draw[saut] (w2) to[out=80,in=100] (y2);
	\draw[saut] (bot2) to[out=80,in=100] (x2);
 	\draw (cut2)to[out=west,in=south](oc2);
 	\draw (cut2)to[out=east,in=south](w2);
 	\draw (ax20)to[out=east,in=north](oc2);
	\draw(ax20)
		to[out=west,in=north] (b21)
		to[out=south,in=north] (b2) 
		to[out=south,in=north west] (con);
	\draw(ax2)
		to[out=west,in=north] (b1)
		to[out=south,in=55] (con);
 	\draw (ax2)to[out=east,in=75](coc);
 	\draw (bot2)to[out=south,in=north](d2);

	\draw[par dessus] (ax1) to[out=west,in=75] (con);

 	\draw (d2)to[out=south,in=75](con');
 	\draw (w11)to[out=south,in=105](con');

	\draw(con) --(con |- 0,-1.3);
	\draw(con')--(con'|- 0,-1.3);
	\draw(coc) --(coc |- 0,-1.3);
\end{tikzpicture}
\end{center}
\caption{Representation of the resource net $p\in\taylor(P)$
	where $P$ is the net of Figure~\ref{fig:ex_mell},
	$k_b=2$, $k_{b',1}=0$ and $k_{b',2}=1$.
}
\label{fig:ex_taylor}
\end{figure}

\subsection{Paths in Taylor expansion}

In the following, we fix a pre-Taylor expansion $(p,f)$ 
of $P^\circ=(\Theta;\vec C;\vec S^\circ;\jump)$
and we describe the structure of paths in $p$. We show that the
critical case depicted in Figure~\ref{fig:cheminstaylor} is maximal, 
so that a path of $p$ passes through at most two copies of each box of $P^\circ$.

\begin{figure}[t]
\begin{center}
	\begin{tikzpicture}[scale=.6,baseline=(cdots.base)]
	\draw[reseau](0.5,0)--(0.5,1)--(4.5,1)--(4.5,0)--cycle;
	\draw(2.2,-0.3)node{$\cdots$};
	\draw(2.5,0.5)node{$p_1^b$};
	
	\draw(5,0.5)node{$\cdots$}; 
	
	\draw[reseau](5.5,0)--(5.5,1)--(9.5,1)--(9.5,0)--cycle;
	\draw(7.5,0.5)node{$p_{k_b}^b$};
	\draw(6.8,-0.3)node{$\cdots$};
	
	\node[lien] (c1) at (3.5,-1.75){$?$};
	\node[lien] (c2) at (5.5,-1.75){$?$};
	\node[lien] (cc) at (6.5,-1.75){$!$};
	\node[above=2pt of c1, points] {$\ldots$};
	\node[above=2pt of c2, points] {$\ldots$};
	\node[above=2pt of cc, points] {$\ldots$};
	\node (cdots) at (4.5,-1.75) {$\cdots$};
	
	\draw(c1)--(3.5,-2.5);
	\draw(c2)--(5.5,-2.5);
	\draw(cc)--(6.5,-2.5);
	
	\draw[multi](1,0) to [out=270,in=125] (c1);
	\draw[multi](3,0) to [out=270,in=125] (c2);
	\draw(4,0) to [out=270,in=125] (cc);
	
	\draw[multi](6,0) to [out=270,in=55] (c1);
	\draw[multi](8,0) to [out=270,in=55] (c2);
	\draw(9,0) to [out=270,in=55] (cc);

	\draw[chemin](0.8,0) to [out=270,in=125] ($(c1)+(-0.5,0)$) to
	($(c1)+(-0.5,-1)$);
	\draw[chemin,dotted](0.8,0) to[out=90,in=90] (4.2,0);
	\draw[chemin](4.2,0) to[out=-90,in=125] (6.5,-1.75)
	to[out=55,in=-90] (9.2,0);
	\draw[chemin,dotted](9.2,0)to[out=90,in=90] (7.8,0);
	\draw[chemin](7.8,0) to[out=-90,in=55] ($(c2)+(-0.5,0)$)
	to ($(c2)+(-0.5,-1)$);

\end{tikzpicture}
\end{center}
	\caption{Box paths in Taylor expansion of $P^\circ$: critical
	case}
	\label{fig:cheminstaylor}
\end{figure}

Observe that 
\[
	\ST(p)=
	\{\bsubs{T}{\vec r}\tq T\in\ST(\vec C,\vec S^\circ)\}
	\cup\bigcup_{b\in\boxes_\Theta}\bigcup_{j=1}^{k_b} \ST(p^b_j)
\]
(using the notations of Definition~\ref{defTaylor}).
It follows that, for each $t\in\ST(p)$:
\begin{itemize}
	\item either $t$ is in a copy of a box, \emph{i.e.}
		(up to $\alpha$-equivalence)
		$t\in\ST(p^b_j)$ for some $b\in\boxes_\Theta$
		and $1\le j\le k_b$, and then we say $t$ is \definitive{inner}
		and write $\boxname(t)=b$ and $\boxcopy(t)=(b,j)$;
	\item or there exists a unique $T\in\ST(P^\circ)$
		such that $t=\bsubs{T}{\vec r}$,
		and then we say $t$ is \definitive{outer},
		and write $\untaylor t=T$.
\end{itemize}
We further distinguish the \definitive{cocontractions} of $p$,
\emph{i.e.} the outer trees $\oc(t_1^b,\dotsc,t_{k_b}^b)$ for
$b\in\boxes_\Theta$, which we denote by $\oc_b$, so that
$\untaylor{\oc_b}=a^b_0$.

We say an edge $t\sim^{p,I}_e s$ of $p$ is an \definitive{inner edge} (resp. an
\definitive{outer edge}) if $t$ and $s$ are both inner (resp. outer) trees.
We say a path $\xi\in\paths(p)$ is an
\definitive{inner path} (resp. an \definitive{outer path}) if 
it crosses inner edges (resp. outer edges) only.

If $t\sim^{p,I}_e s$ is an inner edge then $\boxcopy(t)=\boxcopy(s)$
and we also have $t\sim_e^{p^b_j,I^b_j} s$ where
$(b,j)=\boxcopy(t)$ and $I^b_j$ is the restriction of $I$
to $\nleaves(p^b_j)$.
In this case, we also set $\boxname(e)=b$ and $\boxcopy(e)=(b,j)$.
If $\xi$ is an inner path, we set $\boxname(\xi)$ (resp.  $\boxcopy(\xi)$)
for the common value of $\boxname$ (resp. $\boxcopy$) on the edges crossed by
$\xi$, and we obtain:
\begin{lem}
	\label{lemma:inner}
	If $\xi$ is an inner path
	then $\xi\in\paths(p^b_j,I^b_j)$
	where $(b,j)=\boxcopy(\xi)$.
\end{lem}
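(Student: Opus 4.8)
The plan is to unfold the sequence of edges of $\xi$ and transport everything edgewise into $p^b_j$, using the local properties of inner edges recalled just above the statement. The empty‑path case is trivial (if $\xi=\epsilon_t$ with $t$ inner, then $\xi\in\paths(p^b_j,I^b_j)$ for $(b,j)=\boxcopy(t)$), so I would assume $\xi=t_0\sim_{e_1}\cdots\sim_{e_n}t_n$ with $n\ge 1$ and each $e_i$ an inner edge.

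First I would check that $\boxcopy$ is constant along $\xi$ — which is exactly what makes $\boxcopy(\xi)$ well defined in the first place. For each $i$, the endpoints $t_{i-1},t_i$ of the inner edge $e_i$ satisfy $\boxcopy(t_{i-1})=\boxcopy(e_i)=\boxcopy(t_i)$; since $e_i$ and $e_{i+1}$ share the vertex $t_i$, the value $\boxcopy(e_i)$ does not depend on $i$, so there is a single $(b,j)$ with $\boxcopy(e_i)=(b,j)$ for all $i$ and $\boxcopy(t_i)=(b,j)$ for all $i$; in particular every $t_i$ lies in $\ST(p^b_j)$.

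Then the conclusion is immediate. By the per‑edge statement, $\boxcopy(e_i)=(b,j)$ gives $t_{i-1}\sim^{p^b_j,I^b_j}_{e_i}t_i$, so $e_1,\dotsc,e_n$ are consecutive edges of the switching graph of $(p^b_j,I^b_j)$, starting at $t_0\in\ST(p^b_j)$; and they are pairwise distinct as edges of $p^b_j$ simply because they already are as edges of $p$, every edge of $p^b_j$ being, by construction, an edge of $p$ with the same endpoints (cuts of $p^b_j$ are among the cuts of $p$ by Definition~\ref{defTaylor}, atoms and trees of $p^b_j$ are atoms and trees of $p$, and jumps internal to a copy are left unchanged, again by Definition~\ref{defTaylor}). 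Hence $\xi$ matches the definition of an $I^b_j$‑path in $p^b_j$, i.e.\ $\xi\in\paths(p^b_j,I^b_j)$.

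The statement is almost purely formal once those observations are in place; the only points I would be careful about are the well‑definedness of $\boxcopy(\xi)$ treated above, and the sanity check that $I^b_j$ — the restriction of $I$ — really is a switching of $p^b_j$, namely that for a $\parr$‑ or $\wn$‑subtree $T$ of $p^b_j$ the chosen premise $I(T)$ lands in $\ST(p^b_j)$. This holds because $I(T)$ is one of the immediate premises of $T$, hence a subtree of a tree of $p^b_j$. Neither point is hard, but they are where a too‑quick ``clearly'' could hide a gap, and so I would treat the constancy of $\boxcopy$ as the (mild) crux of the proof.
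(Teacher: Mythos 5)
Your proof is correct and matches the paper's intent exactly: the paper states this lemma without proof, as an immediate consequence of the per-edge observation that an inner edge $t\sim^{p,I}_e s$ satisfies $\boxcopy(t)=\boxcopy(s)$ and is also an edge $t\sim^{p^b_j,I^b_j}_e s$. Your careful spelling-out of the constancy of $\boxcopy$ along consecutive edges, the distinctness of the transported edges, and the validity of the restricted switching is precisely the routine verification the paper leaves implicit.
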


The classification of the outer edges of $p$ is more delicate.
First, we associate a switching $\untaylor I$ of $P^\circ$
with each switching $I$ of $p$ as follows:
\begin{itemize}
	\item if $I(\parr(\bsubs{T_1}{\vec r},\dotsc,\bsubs{T_n}{\vec r}))=\bsubs{T_i}{\vec r}$, 
		we set $\untaylor I(\parr(T_1,\dotsc,T_n))=T_i$;
	\item if $I(\wn(\rsubs{T^\circ_1}{\vec r},\dotsc,\rsubs{T^\circ_n}{\vec r}))
		\in\rsubs{T^\circ_i}{\vec r}$, 
		we set $\untaylor I(\wn(T^\circ_1,\dotsc,T^\circ_n))=T^\circ_i$;
	\item if $\bsubs{\wn(T^\circ_1,\dotsc,T^\circ_n)}{\vec r}=\w$,
		$T^\circ_i=a^b_{j}$ and $\jump_p(\w)=\oc_b$,
		we set $\untaylor I(\wn(T^\circ_1,\dotsc,T^\circ_n))=T^\circ_i$.\footnote{
			Observe that there might be several possible choices
			for $T^\circ_i$ so $\untaylor I$ is not uniquely defined
			in this manner: our following constructions thus 
			depend on the choices we make for $\untaylor I$.
		}
\end{itemize}

If $t\sim^{p,I}_e s$ is an outer edge 
then, in each of the following cases,
we can define an $\untaylor I$-edge $\untaylor e$ of $P^\circ$
such that $e=\bsubs{\untaylor e}{\vec r}$
and $\untaylor t\sim^{P^\circ,\untaylor I}_{\untaylor e} \untaylor s$:
\begin{itemize}
	\item $e$ is an axiom edge, and we set $\untaylor e=e$;
	\item $e$ is a $\otimes$-edge,
		e.g. $t=\otimes(t_1,\dotsc,t_n)$ and $s=t_i$,
		and we set $\untaylor e=(\untaylor t,\untaylor{t_i})$;
	\item $e$ is a $\parr$-edge or a $\wn$-edge,
		e.g. $t=\parr(t_1,\dotsc,t_n)$, $s=t_i$ and $I(t)=s$,
		and we set $\untaylor e=\untaylor t$;
	\item $e=\cut ts$ is a cut edge,
		and we set $\untaylor e=\cut{\untaylor t}{\untaylor s}$;
	\item $e=\w\in\nleaves(P^\circ)\subset\nleaves(p)$
		and we set $\untaylor e=e$
		(observe that in this case
		we have $\jump_{P^\circ}(\w)=\untaylor{\jump_p(\w)}$).
\end{itemize}
If any of the above cases holds, we say the outer edge $e$ is \definitive{superficial}.

If $e$ is an outer edge that is not superficial
then $e$ must be a \definitive{created jump}:
$e=\w\in\weaks(p)\setminus\big(\weaks(P^\circ)
	\cup\bigcup_{b\in\boxes_\Theta}\bigcup_{j=1}^{k_b} \weaks(p^b_j)\big)$.
If, e.g., $t=\w$, then we can write 
$\untaylor t=\wn(T^\circ_1,\dotsc,T^\circ_n)$
and $s=\jump_p(t)=\oc_b$ where $b$ is such that
$\untaylor I(\untaylor{t})=a^b_{i}$ with $1\le i\le \ar(b)$.
In this case, we obtain a path 
$\boxedge e=\untaylor t\sim^{P^\circ,\untaylor I}_{\untaylor t} a^b_i
\sim^{P^\circ,\untaylor I}_{b,i} a^b_0=\untaylor s$.

\begin{lem}
	\label{lem:outer}
	If $\xi$ is an outer $I$-path in $p$,
	then there exists an $\untaylor I$-path $\untaylor\xi$ in $P^\circ$ 
	with $\ln(\untaylor\xi)\ge\ln(\xi)$.
\end{lem}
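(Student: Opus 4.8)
The plan is to build $\untaylor\xi$ by replacing, one edge at a time, each edge crossed by $\xi$ with its image in $P^\circ$. I would first fix, once and for all, one switching $\untaylor I$ of $P^\circ$ associated with $I$ as above (the choice is not canonical, but any of them will do). Write $\xi=t_0\sim_{e_1}t_1\sim_{e_2}\cdots\sim_{e_n}t_n$; since $\xi$ is an outer path, every $e_i$ is an outer edge, hence every $t_i$ is an outer tree and $\untaylor{t_i}$ is well defined (if $\xi$ is empty, take $\untaylor\xi=\epsilon_{\untaylor{t_0}}$). For each $i$: if $e_i$ is superficial, let $\zeta_i$ be the single $\untaylor I$-edge $\untaylor{t_{i-1}}\sim_{\untaylor{e_i}}\untaylor{t_i}$; if $e_i$ is a created jump, say $e_i=\w$ with $t_{i-1}=\w$ (the case $t_i=\w$ being symmetric), then $\untaylor I(\untaylor\w)=a^b_j$ for some box $b\in\boxes_\Theta$ and $1\le j\le\ar(b)$ while $\untaylor{t_i}=a^b_0$, so let $\zeta_i$ be the path $\boxedge{e_i}=\untaylor\w\sim_{\untaylor\w}a^b_j\sim_{b,j}a^b_0$. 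Consecutive $\zeta_i$ and $\zeta_{i+1}$ share the endpoint $\untaylor{t_i}$, so the concatenation $\untaylor\xi=\zeta_1\cdots\zeta_n$ is a walk from $\untaylor{t_0}$ to $\untaylor{t_n}$ in $P^\circ$; and since $\ln(\zeta_i)=1$ when $e_i$ is superficial and $\ln(\zeta_i)=2$ when $e_i$ is a created jump, $\ln(\untaylor\xi)=\sum_i\ln(\zeta_i)\ge n=\ln(\xi)$.

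It remains to check that $\untaylor\xi$ is a genuine $\untaylor I$-path, that is, that it crosses pairwise distinct edges — this is the heart of the argument. The key input is that $\untaylor{\cdot}$ is injective on outer trees: by the linearity of the substitution of box replacements (the footnote to Definition~\ref{replacements}) each outer tree $t$ determines a unique $T\in\ST(P^\circ)$ with $t=\bsubs{T}{\vec r}$, and $\bsubs{\cdot}{\vec r}$ preserves the root constructor (mapping $\otimes$ to $\otimes$, $\parr$ to $\parr$, cuts to cuts, variables to themselves, and $\wn$-trees either to $\wn$-trees or to fresh weakenings). Hence: distinct superficial edges of $\xi$ have distinct images of the same type in $P^\circ$; the image of a superficial edge is never a box edge, so it cannot coincide with a box edge $\sim_{b,j}$ occurring inside some $\boxedge{e_i}$; a superficial $\wn$-edge cannot coincide with a $\wn$-edge $\sim_{\untaylor\w}$ coming from a created jump $\w$, because $\bsubs{\untaylor\w}{\vec r}=\w$ is a weakening whereas an endpoint of a superficial $\wn$-edge is a $\wn$-tree; and two distinct created jumps $\w\ne\w'$ crossed by $\xi$ satisfy $\untaylor\w\ne\untaylor{\w'}$ (otherwise $\w=\bsubs{\untaylor\w}{\vec r}=\bsubs{\untaylor{\w'}}{\vec r}=\w'$), so their $\wn$-edges differ, and the box ports $a^b_j=\untaylor I(\untaylor\w)$ and $a^{b'}_{j'}=\untaylor I(\untaylor{\w'})$ are distinct atoms — each occurs at most once in $P^\circ$ and a port is a premise of at most one $\wn$-link, so $a^b_j=a^{b'}_{j'}$ would force $\untaylor\w=\untaylor{\w'}$ — whence $\sim_{b,j}\ne\sim_{b',j'}$. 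Since this exhausts the possible collisions, $\untaylor\xi\in\paths(P^\circ,\untaylor I)$, which completes the proof.

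I expect the main obstacle to be precisely this injectivity bookkeeping and, within it, the handling of created jumps: one must make sure that the two-edge detours $\boxedge{e_i}$ introduced for the various created jumps of $\xi$ are internally disjoint, disjoint from one another, and disjoint from the superficial part of $\untaylor\xi$, which is exactly where the fact that every box port occurs at most once in $P^\circ$ is used essentially. A secondary point to dispatch along the way is that $\untaylor I$ really is a switching of $P^\circ$ on each $\parr$- or $\wn$-tree that actually appears as an endpoint of an edge of $\untaylor\xi$ — immediate for $\parr$-trees and for the $\wn$-trees underlying superficial $\wn$-edges from the first two clauses defining $\untaylor I$, and, for the $\wn$-tree $\untaylor\w$ underlying a created jump, essentially the content of its third clause.
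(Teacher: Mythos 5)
Your proof is correct and follows essentially the same route as the paper's: replace each superficial edge by its image $\untaylor e$ and each created jump by the two-edge detour $\boxedge e$ (or its reverse), then check that distinct edges of $\xi$ yield disjoint pieces of length at least $1$. The only difference is that the paper merely asserts this disjointness with an ``observe that,'' whereas you spell out the injectivity bookkeeping — a welcome elaboration, not a different argument.
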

\begin{proof}
	It is sufficient to replace each outer edge $e$ crossed by $\xi$ with:
	\begin{itemize}
		\item either $\untaylor e$ if $e$ is superficial,
		\item or the path $\boxedge e$ or $\reverse{\boxedge e}$
			if $e$ is a created jump.
	\end{itemize}
	Observe indeed that if $t\sim^{p,I}_e s$ and $t'\sim^{p,I}_{e'} s'$
	are outer paths of length $1$ with $e\not=e'$ then the paths
	$\untaylor{(t\sim^{p,I}_e s)}:\untaylor t\leadsto_{P^\circ,\untaylor I} \untaylor s$
	and 
	$\untaylor{(t'\sim^{p,I}_{e'} s')}:\untaylor {t'}\leadsto_{P^\circ,\untaylor I} \untaylor {s'}$
	thus defined are disjoint, and of length at least $1$.
\end{proof}

Some edges are neither inner nor outer:
a \definitive{boundary edge} is an edge $t\sim^{p,I}_e s$ such that 
$t$ is outer and $s$ is inner,
in which case we set $\boxcopy(e)=\boxcopy(s)$.
There are two kinds of boundary edges:
\begin{itemize}
	\item the \definitive{principal boundary} of the box copy $(b,j)$
		is the $\oc$-edge $(\oc_b,t^b_j)$;
	\item an \definitive{auxiliary boundary} $e$ of the box copy $(b,j)$
		is any $\wn$-edge $t\sim^{p,I}_t s$
		where $I(t)=s\in\vec s^b_j$ is such that $f^b_j(s)=S^\circ_{b,i}$ with $1\le i\le \ar(b)$,
		in which case we must have $\untaylor I(\untaylor t)=a^b_i$,
		and then we write $\boxdoor e=i$ for the index of the corresponding
		auxiliary port.
\end{itemize}
We call \definitive{box path} any path of the form 
$\chi=(e)\xi(e')$ where $e$ and $e'$ are boundaries and $\xi$ is an inner path:
in this case, we write $\boxname(\chi)=\boxname(\xi)$ and $\boxcopy(\chi)=\boxcopy(\xi)$.
Obviously, any path $\xi$ with outer endpoints 
is obtained as an alternation of outer paths and box paths:
we can write uniquely $\xi=\xi_0\chi_1\xi_1\cdots\chi_n\xi_n$
where each $\xi_i$ is an outer path,
and each $\chi_i$ is a box path.

Let $\chi=(e)\xi(e'):t\leadsto_p s$ be a box path with $\boxcopy(\chi)=(b,j)$.
Since $e\not=e'$, then at most one of $e$ and $e'$ is the principal boundary,
and if both $e$ and $e'$ are auxiliary boundaries,
then we must have $\boxdoor e\not=\boxdoor{e'}$:
indeed $e$ is the only $\wn$-edge whose premises include $\rsubs{a^b_{\boxdoor e}}{\vec r}$.
We can thus define $\untaylor\chi:\untaylor t\leadsto_{P^\circ} \untaylor s$ as follows:
\begin{itemize}
	\item if $e$ and $e'$ are auxiliary boundaries then
		$\untaylor\chi=\untaylor t\sim_{\untaylor t} a^b_{\boxdoor e} \sim_{b,\boxdoor e}
		a^b_0\sim_{b,\boxdoor{e'}} a^b_{\boxdoor{e'}} \sim_{\untaylor s} \untaylor s$
	\item otherwise, e.g. $e'$ is principal and we set
		$\untaylor\chi=\untaylor t\sim_{\untaylor t} a^b_{\boxdoor e} \sim_{b,\boxdoor e} a^b_0 =\untaylor s$.
\end{itemize}

\begin{lem}
	\label{lem:cheminsTaylor}
	Assume $\xi=\xi_0\chi_1\xi_1\cdots\chi_n\xi_n: t\leadsto_{p,I} s$
	where each $\xi_i$ is an outer path,
	and each $\chi_i$ is a box path.
	Then, setting
	$\untaylor\xi=\untaylor{\xi_0}\untaylor{\chi_1}\untaylor{\xi_1}\cdots\untaylor{\chi_n}\untaylor{\xi_n}$
	we obtain $\untaylor\xi:\untaylor t\leadsto_{P^\circ,\untaylor I} \untaylor s$.
	Moreover, if $\boxname(\chi_i)=\boxname(\chi_j)=b$ and $i<j$, 
	then $j=i+1$, $\xi_i=\epsilon_{\oc_b}$,
	and $\boxcopy(\chi_i)\not=\boxcopy(\chi_j)$.
\end{lem}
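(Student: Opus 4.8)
The plan is to establish the two parts in sequence: first that $\untaylor\xi$ is a legitimate $\untaylor I$-path in $P^\circ$, then to extract the structural constraint on repeated box names from the acyclicity of $P^\circ$.

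For the first part, I would begin by noting that $\untaylor\xi$ is a \emph{consecutive walk}: each $\untaylor{\xi_i}$ is an $\untaylor I$-path by Lemma~\ref{lem:outer}, each $\untaylor{\chi_i}$ is an $\untaylor I$-path of length at most $4$ directly by the construction of $\untaylor{(-)}$ on box paths (its box edges are switching-independent, and its $\wn$-edges are precisely the ones the definition of $\untaylor I$ was crafted to validate), and consecutive pieces glue because $\untaylor{(-)}$ sends the endpoint of a piece to the $\untaylor{(-)}$-image of the corresponding endpoint in $p$, while consecutive pieces of $\xi$ share an endpoint. The real content is that $\untaylor\xi$ crosses no edge of $P^\circ$ twice, which I would argue source by source using that $\xi$ is itself a trail in $p$. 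An axiom, $\otimes$-, $\parr$-, cut- or jump-edge of $P^\circ$ has a single preimage edge in $p$ (since $\bsubs{(-)}{\vec r}$ is injective on outer trees), crossed at most once by $\xi$. A $\wn$-edge of $P^\circ$ at a $\wn$-tree $U$ is crossed by $\untaylor\xi$ only when $\xi$ crosses the (unique, once $I$ is fixed) $\sim$-edge at the single node of $p$ determined by $U$ — either $\bsubs U{\vec r}$ or the weakening $\w_U$ — by the unambiguity statement in the footnote to Definition~\ref{replacements}. Finally, for a box edge $\sim_{b,\ell}$: every crossing in $\untaylor\xi$ arises either from a box path entering a copy of $b$ through an auxiliary boundary of door $\ell$, or from a created jump at a weakening $\w$ with $\jump_p(\w)=\oc_b$ and $\untaylor I(\untaylor\w)=a^b_\ell$; in both cases it is routed through the node of $p$ obtained by substituting into the unique $\wn$-link of $P^\circ$ having $a^b_\ell$ as a premise (a genuine $\wn$-link if that premise's replacement is non-empty, the weakening $\w$ otherwise — these two alternatives being mutually exclusive), whose incident $\sim$-edge $\xi$ crosses at most once. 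Hence $\untaylor\xi\in\paths(P^\circ,\untaylor I)$, so $\untaylor\xi:\untaylor t\leadsto_{P^\circ,\untaylor I}\untaylor s$.

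For the ``moreover'' clause, suppose $\boxname(\chi_i)=\boxname(\chi_j)=b$ with $i<j$. Both $\untaylor{\chi_i}$ and $\untaylor{\chi_j}$ visit the vertex $a^b_0$, whereas by acyclicity of $P^\circ$ and the first part, $\untaylor\xi$ visits $a^b_0$ at most once. Since the pieces of $\untaylor\xi$ occur in order, the single occurrence of $a^b_0$ must be the last vertex of $\untaylor{\chi_i}$, the first vertex of $\untaylor{\chi_j}$, and every piece strictly between must be an empty path. No image $\untaylor{\chi_k}$ of a box path is empty, so $j=i+1$; and then the lone intermediate piece $\untaylor{\xi_i}$ is empty, hence $\xi_i$ is empty by Lemma~\ref{lem:outer}, and — its common endpoint being the only outer tree with $\untaylor{(-)}=a^b_0$, namely $\oc_b$ — we get $\xi_i=\epsilon_{\oc_b}$. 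It follows that $\chi_i$ ends at $\oc_b$ and $\chi_{i+1}$ starts at $\oc_b$; since $\oc_b$ is an $\oc$-tree, the only boundary edges incident to it are the principal boundaries of the copies of $b$, so $\chi_i$ and $\chi_{i+1}$ respectively end and start with the principal boundaries of $\boxcopy(\chi_i)$ and $\boxcopy(\chi_{i+1})$. If these were the same copy, $\xi$ would cross that single $\oc$-edge of $p$ twice — impossible for a trail — so $\boxcopy(\chi_i)\neq\boxcopy(\chi_{i+1})$.

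The hard part is the edge-distinctness step: keeping track of the several routes by which a box edge $\sim_{b,\ell}$ or a $\wn$-edge of $P^\circ$ can be produced along $\untaylor\xi$, and ruling out spurious coincidences. This is exactly where one must rely on the origin-tracking guaranteed by Definition~\ref{replacements} (each subtree of a replacement-substituted pre-tree has an unambiguous origin) and on the dichotomy between weakenings created by empty replacements and genuine $\wn$-links; the delicate sub-case is an auxiliary port that is itself a port of a nested box, for which the relevant replacement may turn out empty even though inner copies are present.
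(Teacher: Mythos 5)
Your proof is correct, and while the first half follows essentially the paper's route, the second half takes a genuinely different one. For the well-definedness of $\untaylor\xi$, the paper simply invokes pairwise disjointness of the images of the pieces (outer/outer, outer/box, box/box with disjoint boundaries); your source-by-source accounting of which edges of $p$ can produce a given $\wn$-edge or box edge of $P^\circ$ is a fleshed-out version of the same argument, and you correctly identify the one point that needs care (the unique $\wn$-link of $P^\circ$ having $a^b_\ell$ as a premise, and the dichotomy between its genuine-$\wn$ and created-weakening instantiations). For the ``moreover'' clause, the paper proceeds by an explicit case analysis on whether the boundaries $e'_i$ and $e_j$ are principal or auxiliary, manufacturing in each bad case a cycle in $P^\circ$ by gluing $\untaylor{(\xi_i\chi_{i+1}\cdots\xi_{j-1})}$ with bridge segments through $a^b_0$, and then needs a separate final step to rule out $j>i+1$. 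You instead observe that both $\untaylor{\chi_i}$ and $\untaylor{\chi_j}$ visit $a^b_0$, that the first part makes $\untaylor\xi$ a path of the acyclic $P^\circ$, hence the two visits must coincide positionally — which at once forces the intervening pieces to be empty, kills the $j>i+1$ case (no $\untaylor{\chi_k}$ is empty), and pins both boundaries adjacent to $\xi_i$ to be principal. This is more uniform and slightly cleaner, at the cost of leaning on the full strength of the first part (the paper's cycle construction only needs the middle segment to be a path). Your derivation of $\boxcopy(\chi_i)\neq\boxcopy(\chi_{i+1})$ from the trail condition on the single $\oc$-edge $(\oc_b,t^b_l)$ matches the paper's ``since $e'_i\neq e_j$''. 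No gap.
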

\begin{proof}
	We have already observed in the proof of Lemma \ref{lem:outer} that if $\xi$
	and $\xi'$ are disjoint outer paths then $\untaylor{\xi}$ and
	$\untaylor{\xi'}$ are also disjoint.
	Similarly, if $\xi$ is outer and $\chi$ is a box path, it follows directly
	from the definitions that $\untaylor{\xi}$ and $\untaylor{\chi}$ are disjoint.
	And if $\chi$ and $\chi'$ are box paths with disjoint boundaries,
	again  $\untaylor{\chi}$ and $\untaylor{\chi'}$ are disjoint paths
	by construction.
	It follows that, if $\xi=\xi_0\chi_1\xi_1\cdots\chi_n\xi_n: t\leadsto_{p,I} s$,
	each $\xi_i$ is an outer path,
	and each $\chi_i$ is a box path,
	then the concatenation 
	$\untaylor\xi=\untaylor{\xi_0}\untaylor{\chi_1}\untaylor{\xi_1}\cdots\untaylor{\chi_n}\untaylor{\xi_n}$
	is well defined.

	Write $\chi_i=(e_i)\xi'_i(e'_i):t_i\leadsto s_i$ for $1\le i\le n$.
	Assume $\boxname(\chi_i)=\boxname(\chi_j)=b$, 
	and moreover $\boxname(\chi_k)\not=b$ for $i<k<j$.
	We obtain a path
	$\xi'=\untaylor{(\xi_i\chi_{i+1}\cdots\xi_{j-1})}:\untaylor{s_i}\leadsto_{P^\circ}\untaylor{t_j}$:
	by construction, $\xi'$ does not cross any box edge $(b,l)$ for $1\le l\le\ar(b)$.
	If $(e'_i)$ and $(e_j)$ were both auxiliary, 
	we could form a cycle
	$\xi'(\untaylor{t_j}\sim_{\untaylor{t_j}}a^b_{\boxdoor{e_j}}\sim_{b,\boxdoor{e_j}}
	a^b_0\sim_{b,\boxdoor{e'_i}} a^b_{\boxdoor{e'_i}} \sim_{\untaylor{s_i}}\untaylor{s_i})$,
	since $\xi'$ would cross neither $\untaylor{t_j}$ nor $\untaylor{s_i}$.
	If, e.g., $(e'_i)$ was principal and $(e_j)$ was auxiliary, 
	we could form a cycle
	$\xi'(\untaylor{t_j}\sim_{\untaylor{t_j}}a^b_{\boxdoor{e_j}}\sim_{b,\boxdoor{e_j}} a^b_0)$,
	as $\xi'$ would not cross $\untaylor{t_j}$.
	So both must be principal and we have $\untaylor{s_i}=\untaylor{t_j}=a^b_0$:
	since $P^\circ$ has no non empty cycle, 
	we must have $\xi'=\epsilon_{a^b_0}$ hence 
	$\xi_i\chi_{i+1}\cdots\xi_{j-1}=\epsilon_{\oc_b}$
	and then $j=i+1$ and $\xi_i=\epsilon_{\oc_b}$.
	Since $e'_i\not=e_j$, 
	we moreover obtain $\boxcopy(\chi_i)\not=\boxcopy(\chi_j)$.

	It remains only to prove that, in general, we never have 
	$\boxname(\chi_i)=\boxname(\chi_j)$ with $j>i+1$:
	otherwise, by iterating our previous argument,
	we would obtain $\boxname(\chi_k)=\boxname(\chi_i)$ whenever $i\le k\le j$,
	and both $e_{k}$ and $e'_{k}$ would both be principal boundaries whenever $i<k<j$.
\end{proof}

It follows that $p$ is acyclic as soon as $P^\circ$ is.
Indeed, if $\xi$  is a cycle in $p$:
\begin{itemize}
	\item either $\xi$ contains an outer tree,
		and we can apply Lemma~\ref{lem:cheminsTaylor} to obtain a cycle in $P^\circ$;
	\item or $\xi$ is an inner path,
		and we proceed inductively in $\Theta(\boxname(\xi))$.
\end{itemize}
Our next result is a quantitative version of this property:
not only there is no cycle in $p$ but the length of paths in $p$ is bounded
by a function of $P^\circ$ (whereas the size of $p$ is obviously not bounded in general).
\begin{thm}
	\label{theorem:length}
	If $p\in\taylor(P^\circ)$ and $\xi\in\paths(p)$ then $\ln(\xi)\le 2^{\depth(P^\circ)}\size(P^\circ)$.
\end{thm}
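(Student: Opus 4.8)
The plan is to proceed by induction on $\depth(P^\circ)$, leveraging the structural analysis of paths in $\taylor(P^\circ)$ established above. Fix $p\in\taylor(P^\circ)$ with the notations of Definition~\ref{defTaylor}, and let $\xi\in\paths(p,I)$. First I would dispose of the case where $\xi$ is an inner path: then Lemma~\ref{lemma:inner} gives $\xi\in\paths(p^b_j,I^b_j)$ with $(b,j)=\boxcopy(\xi)$, and since $p^b_j\in\taylor(\Theta(b))$ and $\depth(\Theta(b))\le\depth(P^\circ)-1$, the induction hypothesis yields $\ln(\xi)\le 2^{\depth(\Theta(b))}\size(\Theta(b))\le 2^{\depth(P^\circ)}\size(P^\circ)$. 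The base case $\depth(P^\circ)=0$ has no inner trees, so every path is then outer and falls under the next step with no box paths, where the estimate reduces to $\ln(\xi)\le\tlsize(P^\circ)-1\le\size(P^\circ)$.

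For a general $\xi$, I would split it as $\xi=\eta_0\,\xi'\,\eta_1$, where $\eta_0$ (resp.\ $\eta_1$) is the maximal initial (resp.\ final) segment made of inner edges, possibly extended by one boundary edge, so that $\xi'$ has outer endpoints; by Lemma~\ref{lemma:inner} the inner part of each of $\eta_0,\eta_1$ lies in a single box copy. Writing $\xi'=\alpha_0\beta_1\alpha_1\cdots\beta_n\alpha_n$ as in Lemma~\ref{lem:cheminsTaylor}, with $\alpha_i$ outer and $\beta_i=(e_i)\gamma_i(e'_i)$ a box path, the key point is that the untaylored path $\untaylor{\xi'}=\untaylor{\alpha_0}\untaylor{\beta_1}\cdots\untaylor{\beta_n}\untaylor{\alpha_n}\in\paths(P^\circ,\untaylor I)$ satisfies $\ln(\untaylor{\alpha_i})\ge\ln(\alpha_i)$ by Lemma~\ref{lem:outer} and $\ln(\untaylor{\beta_i})\ge 2$ by construction — so the two boundary edges of each box path, which contribute $2$ to $\ln(\xi')$, are exactly offset by the box edges of $\untaylor{\xi'}$. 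Since $P^\circ$ is acyclic, $\untaylor{\xi'}$ is a trail that visits at most $\card\SPT(P^\circ)=\tlsize(P^\circ)$ vertices, hence $\sum_i\ln(\alpha_i)+2n\le\ln(\untaylor{\xi'})\le\tlsize(P^\circ)-1$, and therefore $\ln(\xi')=\sum_i\ln(\alpha_i)+\sum_i\bigl(\ln(\gamma_i)+2\bigr)\le\tlsize(P^\circ)-1+\sum_i\ln(\gamma_i)$.

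It then remains to bound $\ln(\eta_0)+\ln(\eta_1)+\sum_i\ln(\gamma_i)$. Each of the inner paths $\gamma_i$ and the inner parts of $\eta_0,\eta_1$ lies in a box copy, so the induction hypothesis bounds its length by $2^{\depth(P^\circ)-1}\size(\Theta(b))$ where $b$ is its box name, and — by the acyclicity argument of Lemma~\ref{lem:cheminsTaylor}, now applied to all of $\xi$ — these box copies are pairwise distinct and each box name occurs among them at most twice. Hence $\ln(\eta_0)+\ln(\eta_1)+\sum_i\ln(\gamma_i)\le 2+2^{\depth(P^\circ)-1}\cdot 2\sum_{b\in\boxes_\Theta}\size(\Theta(b))=2+2^{\depth(P^\circ)}\bigl(\size(P^\circ)-\tlsize(P^\circ)\bigr)$, using $\size(P^\circ)=\tlsize(P^\circ)+\sum_b\size(\Theta(b))$. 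Combining with the bound on $\ln(\xi')$ gives $\ln(\xi)\le\tlsize(P^\circ)+1+2^{\depth(P^\circ)}\bigl(\size(P^\circ)-\tlsize(P^\circ)\bigr)\le 2^{\depth(P^\circ)}\size(P^\circ)$, the last step holding because $(2^{\depth(P^\circ)}-1)\tlsize(P^\circ)\ge 1$ whenever $\xi$ traverses a box (which forces $\depth(P^\circ)\ge 1$), the box-free case having already been settled.

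The delicate point — and the one I expect to require most care — is precisely the last invocation of the acyclicity argument: Lemma~\ref{lem:cheminsTaylor} is stated only for paths with outer endpoints, so one has to re-run its cycle-in-$P^\circ$ argument to check that an inner prefix (or suffix) cannot sit in a box copy that is revisited later in $\xi$, nor add a third visit to any box \emph{name}. Once this is in place, the rest is elementary arithmetic; the two load-bearing facts are that each box is traversed at most twice (so the induction hypothesis' factor $2^{\depth(P^\circ)-1}$ only doubles, giving $2^{\depth(P^\circ)}$) and that boundary edges are matched by box edges in the untaylored path (so they cost only the additive $\tlsize(P^\circ)$ term and nothing exponential).
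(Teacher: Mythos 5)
Your proof is correct and follows essentially the same route as the paper's: induction on $\depth(P^\circ)$, decomposition into outer and box paths, Lemma~\ref{lem:outer} to transfer the outer length bound to $\untaylor{\xi}$, Lemma~\ref{lem:cheminsTaylor} for the at-most-two-visits-per-box property (extended, as you rightly flag, to paths with inner endpoints), and the induction hypothesis inside each box copy. The only differences are in bookkeeping — you keep tighter additive constants and close with the observation $(2^{\depth(P^\circ)}-1)\tlsize(P^\circ)\ge 1$, whereas the paper absorbs everything by multiplying the whole estimate by $2^{\depth(P^\circ)}$ — and both versions go through.
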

\begin{proof}
	The proof is by induction on $\depth(P^\circ)$.

	First assume that $\xi=\xi_0(e_1)\chi_1(e'_1)\xi_1\cdots(e_n)\chi_n(e'_n)\xi_n$ where
	each $\xi_i$ is an outer path, and each $(e_i)\chi_i(e'_i)$ is a box path.
	Write $(b_i,j_i)=\boxcopy(\chi_i)$:
	by applying the induction hypothesis to $\chi_i\in\paths(p^{b_i}_{j_i})$,
	we obtain $\ln(\chi_i)\le 2^{\depth(\Theta(b_i))}\size(\Theta(b_i))$.
	Moreover observe that 
	$2n+\sum_{i=0}^n \ln(\untaylor{\xi_i})\le \ln(\untaylor{\xi})\le\tlsize(P^\circ)$.
	By Lemma~\ref{lem:outer},
	it follows that $2n+\sum_{i=0}^n \ln(\xi_i)\le \tlsize(P^\circ)$.
	We obtain:
	\[
		\ln(\xi)
		= 2n+\sum_{i=0}^n\ln(\xi_i)+\sum_{i=1}^{n}\ln(\chi_i)
		\le \tlsize(P^\circ)
		+ \sum_{i=1}^n 2^{\depth(\Theta(b_i))}\size(\Theta(b_i)).
	\]
	By Lemma~\ref{lem:cheminsTaylor},
	each $b\in\boxes_\Theta$ occurs at most twice in
	the sequence $(b_1,\dotsc,b_n)$, hence 
	we obtain:
	\[
		\ln(\xi) 
		\le \tlsize(P^\circ)
		+ 2\sum_{b\in\boxes_\Theta} 2^{\depth(\Theta(b))}\size(\Theta(b)).
	\]
	hence
	\[
		\ln(\xi) 
		\le 2^{\depth(P^\circ)} \big(\tlsize(P^\circ)
		+ \sum_{b\in\boxes_\Theta} \size(\Theta(b))\big).
	\]
	since $\depth(\Theta(b))<\depth(P^\circ)$ for each $b\in\boxes_\Theta$.
	We conclude recalling that $\size(P^\circ)=\tlsize(P^\circ)+\sum_{b\in\boxes_\Theta}\size(\Theta(b))$.

	The other possible cases are those of paths 
	$\chi_0(e'_0)\xi$, $\xi(e_{n+1})\chi_n$ or
	$\chi_0(e'_0)\xi(e_{n+1})\chi_n$
	where $\xi$ is as above $e'_0$ and $e_{n+1}$ are boundaries
	and $\chi_0$ and $\chi_n$ are inner paths.
	Reasonning as in the proof of Lemma~\ref{lem:cheminsTaylor},
	we also obtain that each $b\in\boxes_\Theta$ occurs at most twice in
	the sequence, e.g., $(b_0,\dotsc,b_{n+1})$,
	and then the proof follows similarly.
\end{proof}
In particular, we obtain $\ln(p)\le 2^{\depth(P^\circ)}\size(P^\circ)$,
In the following lemma, we show that our measure on jumps in the
Taylor expansion of $P^\circ$ is also entirely determined by $P^\circ$.
\begin{lem}\label{QI_Taylor}
	If $p\in\taylor(P^\circ)$
	then $\jd(p)\leq\size(P^\circ)$.
\end{lem}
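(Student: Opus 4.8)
The plan is to prove $\jd(p)\le\size(P^\circ)$ by induction on $\depth(P^\circ)$, exploiting the explicit description of $\jump_p$ in Definition~\ref{defTaylor}. Fix a pre-Taylor expansion $(p,f)$ of $P^\circ=(\Theta;\vec C;\vec S^\circ;\jump)$, with its box copies $p^b_j$ (for $b\in\boxes_\Theta$, $1\le j\le k_b$) and box replacements $\vec r$ as in that definition. The core observation is that each negative leaf $\w\in\nleaves(p)$ falls in exactly one of three groups, according to the clause of Definition~\ref{defTaylor} that defines $\jump_p(\w)$: (1) $\w\in\nleaves(p^b_j)$ for a unique box copy $(b,j)$, and then $\jump_p(\w)=\jump_{p^b_j}(\w)\in\ST(p^b_j)$ is \emph{inner}; (2) $\w=\w_{\wn(T^\circ_1,\dotsc,T^\circ_n)}$ is a created weakening attached to a surface $\wn$-subtree of $P^\circ$, and then $\jump_p(\w)=\bsubs{a^b_0}{\vec r}$ is a cocontraction $\oc_b$ (or a fresh $\cw_b$ if $k_b=0$), which is \emph{outer}; (3) $\w\in\nleaves(\vec C;\vec S^\circ)=\nleaves(P^\circ)$, and then $\jump_p(\w)=\bsubs{\jump(\w)}{\vec r}$ is \emph{outer} with $\untaylor{\jump_p(\w)}=\jump(\w)$. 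These groups are pairwise disjoint, by the freshness conditions in Definitions~\ref{replacements} and~\ref{defTaylor}, and their union is $\nleaves(p)$.

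First I would treat an \emph{inner} target $t\in\ST(p^b_j)$. Since distinct box copies have disjoint sets of atoms and every resource tree contains an atom, $t$ lies in no other $\ST(p^{b'}_{j'})$; and as $t$ is inner, only group-(1) leaves can jump to it, and only those in $\nleaves(p^b_j)$. Hence $\jd_p(t)=\jd_{p^b_j}(t)\le\jd(p^b_j)$, which by the induction hypothesis applied to $p^b_j\in\taylor(\Theta(b))$ (with $\depth(\Theta(b))<\depth(P^\circ)$) is at most $\size(\Theta(b))\le\size(P^\circ)$, using $\size(P^\circ)=\tlsize(P^\circ)+\sum_{b'\in\boxes_\Theta}\size(\Theta(b'))$.

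Next I would treat an \emph{outer} target $t$, say $\untaylor t=T\in\ST(P^\circ)$. No group-(1) leaf jumps to $t$, so $\jd_p(t)$ counts only leaves from groups (2) and (3). A group-(3) leaf $\w$ has $\jump_p(\w)=t$ exactly when $\bsubs{\jump(\w)}{\vec r}=\bsubs{T}{\vec r}$, i.e. when $\jump(\w)=T$ by the unambiguous recovery of $T$ from $t$ noted after Definition~\ref{replacements}; so such leaves inject into $\nleaves(P^\circ)$. A group-(2) leaf with $\jump_p(\w)=t$ forces $t=\oc_b$ for some $b$, and $\w$ is determined by the surface $\wn$-subtree it is attached to; so such leaves inject into the set of $\wn$-subtrees of $P^\circ$. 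As negative leaves and $\wn$-subtrees form disjoint subsets of $\ST(P^\circ)$, we conclude $\jd_p(t)\le\card\ST(P^\circ)\le\card\SPT(P^\circ)=\tlsize(P^\circ)\le\size(P^\circ)$. Combining the two cases, $\jd_p(t)\le\size(P^\circ)$ for every $t\in\ST(p)$, hence $\jd(p)\le\size(P^\circ)$; the base case $\depth(P^\circ)=0$ is the outer case alone and goes through identically.

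The step I expect to be the main obstacle is the bookkeeping around created weakenings: one must check carefully that the three groups of negative leaves are genuinely disjoint, that a leaf's group is determined by whether its jump target is inner or outer, and that the two injections used in the outer case — created weakening $\mapsto$ its surface $\wn$-subtree, and the identity on surface negative leaves — have disjoint images inside $\ST(P^\circ)$, so that groups (2) and (3) can be bounded together without overcounting. Once the inner/outer dichotomy on $\ST(p)$ and the corresponding uniqueness of $\untaylor{(\cdot)}$ are invoked systematically, everything else is routine counting.
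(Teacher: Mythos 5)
Your proof is correct and follows essentially the same route as the paper's: induction on $\depth(P^\circ)$, with the inner case handled by $\jump_p^{-1}(t)=\jump_{p^b_j}^{-1}(t)$ plus the induction hypothesis, and the outer case by injecting the jumping weakenings into $\ST(P^\circ)$. Your split of the outer case into created weakenings (indexed by $\wn$-subtrees of $P^\circ$) and surface negative leaves is just an unpacking of the paper's single observation that any such $\w$ equals $\bsubs{T}{\vec r}$ for some $T\in\ST(P^\circ)$.
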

\begin{proof}
	We show that if $t\in\ST(p)$ then $\jd(t)\le\size(P^\circ)$.
	The proof is, again, by induction on $\depth(P^\circ)$.
	If $t$ is inner with $\boxcopy(t)=(b,j)$, then we conclude directly by applying 
	the induction hypothesis to $p^b_j$ and $\Theta(b)$:
	indeed in this case, $\jump^{-1}_p(t)=\jump^{-1}_{p^b_j}(t)$,
	and $\size(\Theta(b))\le\size(P^\circ)$.

	So we can assume that $t$ is outer. In this case, observe
	from  Definition~\ref{defTaylor} that if
	$\jump_p(\w)=t$ then $\w=\bsubs T{\vec r}$ for some $T\in\ST(P^\circ)$.
	It follows that $\card{\jump_p^{-1}(t)}\le\card{\ST(P^\circ)}\le\size(P^\circ)$.
	\qedhere

\end{proof}

\subsection{Cut elimination and Taylor expansion}

In resource nets \cite{ER05}, the elimination of the cut \[\cut{\wn(t_1,\dots,t_n)}{\oc(s_1,\dots,s_m)}\]
yields the finite sum
\[\sum_{\sigma:\{1,\dotsc,n\}\stackrel\sim\to\{1,\dotsc,m\}}\cut{t_1}{s_{\sigma(1)}},\dots,\cut{t_n}{s_{\sigma(n)}}.\]
It turns out that the results of Sections~\ref{section:size} to \ref{section:general} apply directly to
resource nets: setting 
\[\cut{\wn(t_1,\dots,t_n)}{\oc(s_1,\dots,s_n)}\to
\cut{t_1}{s_{\sigma(1)}},\dots,\cut{t_n}{s_{\sigma(n)}} \]
for each permutation $\sigma$,
we obtain an instance of multiplicative reduction, as the order of premises is
irrelevant from a combinatorial point of view ---
this is all the more obvious because no typing constraint was involved in our
argument. In other words, Corollary~\ref{corollary:antireducts} also applies
to the parallel reduction of resource nets.
With Theorem~\ref{theorem:length} and Lemma~\ref{QI_Taylor} we obtain:
\begin{cor}
	\label{corollary:taylor}
If $q$ is a resource net and $P$ is an MELL net and $k\in\N$,
$\{p\in\taylor(P)\tq p\toto^k q\}$ is finite.
\end{cor}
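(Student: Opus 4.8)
The plan is to reduce the statement to Corollary~\ref{corollary:antireducts} via the uniform bounds on $\ln$ and $\jd$ that the Taylor expansion provides. First I would make explicit, as already sketched at the beginning of this subsection, that the cut elimination of resource nets fits the framework of Sections~\ref{section:size}--\ref{section:general}. Indeed, for a fixed bijection $\sigma$, the step $\cut{\wn(t_1,\dots,t_n)}{\oc(s_1,\dots,s_n)}\to\cut{t_1}{s_{\sigma(1)}},\dots,\cut{t_n}{s_{\sigma(n)}}$ plugs bijectively the premises of the $\oc$-link with those of the $\wn$-link, so it is an instance of multiplicative reduction; cuts between a coweakening and a weakening are evanescent; axiom cuts are handled verbatim; and the trees $\coweak{\cw}$, $\weak{\w}$, $\oc(t_1,\dotsc,t_n)$, $\wn(t_1,\dotsc,t_n)$ induce the same edges as their multiplicative counterparts, as do jumps and the way they are redirected by reduction. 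Since the arguments of Sections~\ref{section:nets}--\ref{section:general} were deliberately phrased to rely only on the three ingredients recalled in Section~\ref{section:nets}, Corollary~\ref{corollary:antireducts} holds verbatim with resource nets in place of MLL nets.

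Then I would set $n=2^{\depth(P)}\size(P)$ and $m=\size(P)$. By Theorem~\ref{theorem:length}, every $p\in\taylor(P)$ satisfies $\ln(p)\le n$; by Lemma~\ref{QI_Taylor}, every $p\in\taylor(P)$ satisfies $\jd(p)\le m$. Hence
\[
	\{p\in\taylor(P)\tq p\toto^k q\}\subseteq\{p\tq p\toto^k q,\ \jd(p)\le m\text{ and }\ln(p)\le n\},
\]
and the right-hand side is finite by the resource-net version of Corollary~\ref{corollary:antireducts}. As there, finiteness is understood up to the $\alpha$-equivalence of Remark~\ref{rem:mellnets}, which is harmless since it preserves the size of nets.

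The only genuinely non-routine step is the first one: checking that the combinatorial development of the previous sections transfers to resource nets with generalized exponential links. This is precisely the point the paper has been set up to make painless, so I expect it to require no more than citing the relevant definitions; all the rest is a direct instantiation of the bounds on $\ln$ and $\jd$ along the Taylor expansion, with $k$ fixed.
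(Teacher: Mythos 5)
Your proposal is correct and follows exactly the paper's own route: observe that resource-net cut elimination is an instance of the MLL reduction framework (so Corollary~\ref{corollary:antireducts} transfers), then instantiate the uniform bounds $\ln(p)\le 2^{\depth(P)}\size(P)$ and $\jd(p)\le\size(P)$ from Theorem~\ref{theorem:length} and Lemma~\ref{QI_Taylor}. The only detail worth keeping in mind, which the paper flags right after the corollary, is that the whole argument rests on the standing acyclicity assumption, inherited by every $p\in\taylor(P)$ from the acyclicity of $P$.
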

As for Corollaries~\ref{corollary:antireducts:onestep} and \ref{corollary:antireducts},
this holds only up to $\alpha$-equivalence.
And, again, it should be read keeping in mind that jumps are only 
an additional control structure on top of the underlying net.
Indeed, if $P$ is a bare MELL net (\emph{i.e.} an MELL net without a jump function)
then we can define $\taylor(P)$ as a set of bare resource nets.
Then, given $k\in\N$, a bare resource net $q$, a bare MELL net $P$,
and a jump function $\jump$ such that $(P,\jump)$ acyclic,
there are finitely many bare resource nets $p\in\taylor(P)$ such that
$p\toto^k q$: it suffices to construct $\jump_p$ from $\jump$.

Beware that Corollary~\ref{corollary:taylor}
depends on the acyclicity of the original MELL net.
The following example shows how a cyclic net can induce infinite sets of
antireducts.
\begin{exa}\label{ex:cycle2}
	Let $P=(\Theta; \cut{\wn(a_1^b)}{a_0^b};;\jump)$ with
	$\Theta(b)=(\Theta';;x,\bar x;;\jump)$
	where the domain of $\Theta'$, $\jump$ and $\jump'$ is empty.
	Then, by definition, 
	\[\taylor(P)=\{p=(\cut{\cw}{\w};)\}\cup\{ p_n=(\cut{\wn(x_0,\dotsc,x_n)}{\oc(\bar x_0,\dotsc,\bar x_n)};)\tq n\in\N\} \]
	where $\jump_{p}(\cw)=\w$.
	Then, for each $n\in\N$, we have
	\[
		p_n\to (\cut{x_0}{\bar x_{\sigma(0)}},\dotsc,\cut{x_n}{\bar x_{\sigma(n)}};)
	\]
	for each permutation $\sigma$ of $\{0,\dotsc,n\}$.
	In particular, if we set $\sigma(i)=i+1\mod(n+1)$,
	then we obtain $p_n\to q_n=(\cut{x_0}{\bar x_1},\dotsc,\cut{x_n}{\bar x_0};)\totoa(\cut{x_0}{\bar x_0};)$,
	and it follows that $\{q\in\taylor(P)\mid q\toto^2 (\cut{x_0}{\bar x_0};)\}$ is infinite.
	This situation is illustrated in Figure~\ref{fig:cycle2}.
\end{exa}
\begin{figure}[t]
\begin{center}
	\begin{tikzpicture}
		\node[lien](a)at(0,.5){ax};
		\node[lien](c)at($(a)+(0,-1.5)$){cut};
		\node[lien](b)at($(a)+(0.5,-0.5)$){$\oc$};
		\node[lien](w)at($(a)+(-0.5,-1)$){$\wn$};
		\coordinate[above=.5ex] (in) at (b);

		\draw[above left](a)node{$x\ $};
		\draw[above right](a)node{$\ \bar x$};

		\draw[reseau](a)to[out=west,in=north](w|-in);
		\draw[reseau](w|-b)--(w);
		\draw[reseau](a)to[out=east,in=north](b);
		\draw[reseau](w)to[out=south,in=west](c);
		\draw[reseau](b)to[out=south,in=east](c);
		\draw(b)--++(-1.5,0)--++(0,1)--++(1.9,0)--++(0,-1)--(b);

		\node (P) at (0,-2.5) {$P$};

		\node (pn) at ($(P)+(4.2,0)$) {$p_n\in\taylor(P)$};
		
		\node[lien](c)at($(pn)+(0,.8)$){cut};
		\node[lien](w)at($(c)+(-1.25,1)$){$\wn$};
		\node[lien](b)at($(c)+(1.25,1)$){$\oc$};
		\node[lien](a1)at($(0,3)+1/2*($(w)+(b)$)$){ax};
		\node[lien](an)at($(a1)+(0,-1.3)$){ax};
		
		\draw[above left](a1)node{$x_0\ $};
		\draw[above right](a1)node{$\ \bar x_0$};
		\draw[above left](an)node{$x_n\ $};
		\draw[above right](an)node{$\ \bar x_n$};

		\draw(a1)to[out=west,in=north west](w);
		\draw(an)to[out=west,in=north east](w);
		\draw(a1)to[out=east,in=north west](b);
		\draw[par dessus](an)to[out=east,in=north east](b);

		\draw(w)to[out=south,in=west](c);
		\draw(b)to[out=south,in=east](c);

		\draw($(w)+(0,0.5)$)node{$\dots$};
		\draw($(b)+(0,0.5)$)node{$\dots$};

		\draw ($1/2*($(a1)+(an)$)$) node{$\rvdots$};

		\node (to) at ($(pn)+(2.5,3)$) {$\to$};

		\node[lien](a1)at($(pn)+(5,5)$){ax};
		\node[lien](an)at($(a1)+(0,-1.3)$){ax};
		\draw ($1/2*($(a1)+(an)$)$) node{$\rvdots$};

		\draw[above left](a1)node{$x_0\ $};
		\draw[above right](a1)node{$\ \bar x_0$};
		\draw[above left](an)node{$x_n\ $};
		\draw[above right](an)node{$\ \bar x_n$};

		\node[lien](c1)at($(an)+(-0.8,-1.4)$){cut};
		\node[lien](cn)at($(c1)+(0,-1.4)$){cut};
		\draw ($1/2*($(c1)+(cn)$)$) node{$\rvdots$};
		
		\draw(a1)to[out=west,in=west](c1);
		\draw[par dessus](an)to[out=west,in=west](cn);
		\draw[above right](c1)node{$\ \bar x_{\sigma(0)}$};
		\draw[above right](cn)node{$\ \bar x_{\sigma(n)}$};

		\draw[dotted,rounded corners]
		($(c1)+(1.2,0.2)$)--++(0.7,0)
		--($(cn)+(1.9,-0.2)$)--++(-0.7,0)--cycle;
		\draw(c1)--++(1.2,0);
		\draw(cn)--++(1.2,0);

		\draw($(cn)+(1.5,0.6)$)node{$\sigma$};

		\draw($(c1)+(1.9,0)$)to[out=east,in=east](a1);
		\draw[par dessus]($(cn)+(1.9,0)$)to[out=east,in=east](an);
		
		\node[lien](a)at($(an)+(4,0)$){ax};
		\node[lien](c)at($(a)+(0,-1.3)$){cut};
		\draw(a)to[out=west,in=west](c);
		\draw(a)to[out=east,in=east](c);
		
		\node at ($(to)+(5,0)$) {$\totoa$};

		\draw[above left](a)node{$x_0\ $};
		\draw[above right](a)node{$\ \bar x_0$};

	\end{tikzpicture}

\end{center}
\caption{Resource nets $p_n$ of $\taylor(P)$ reducing to a single net}
\label{fig:cycle2}
\end{figure}

\section{Conclusion}
\label{section:conclusion}

Recall that our original motivation was the definition of 
a reduction relation on infinite linear combinations 
of resource nets, simulating cut elimination 
in MELL through Taylor expansion.
We claim that a suitable notion is as follows:

\begin{defi}
Write $\sum_{i\in I}a_i p_i \Rightarrow \sum_{i\in I}a_i q_i$
as soon as:
\begin{itemize}
	\item each $p_i$ is a resource net and each $q_i$ is a finite sum of resource nets
		such that $p_i\toto q_i$;
	\item for any resource net $p$,
		$\{i\in I\tq p_i=p\}$ is finite;
	\item for any resource net $q$,
		$\{i\in I\tq q\text{ is a summand of }q_i\}$ is finite.
\end{itemize}
\end{defi}

In particular, if $\sum_{i\in I}a_i p_i$ is a Taylor expansion,
then Corollary~\ref{corollary:taylor} ensures that the last condition of the definition
of $\Rightarrow$ is automatically valid.
The details of the simulation in a quantitative setting 
remain to be worked out, but the main stumbling block
is now over:
the necessary equations on coefficients are well established,
as they have been extensively studied in the various denotational models;
it only remained to be able to form the associated sums 
directly in the syntax.

Another incentive to publish our results is the 
\emph{normalization-by-evaluation} programme that we develop
with Guerrieri, Pellissier and Tortora de Falco~\cite{CGPV17}.
This approach is restricted to connected MELL proof nets,
\emph{i.e.} MELL proof nets without weakening, and whose switching graphs are
not only acyclic but also connected:\footnote{
	These are sufficiently expressive to simulate 
	the $\lambda I$-calculus, which is Turing-complete.
}
\begin{itemize}
	\item in this setting, a net $P$ is entirely determined by the point of order
		2 of its Taylor expansion, \emph{i.e.} the unique resource net
		$p\in\taylor(P)$ with binary cocontractions only~\cite{GPF16};
	\item moreover, 
		given two cut-free $\eta$-expanded nets $Q$ and $R$, both the size of the normal
		form of a cut between $Q$ and $R$ and the number of cut elimination steps 
		necessary to reach it can be bounded by a function of the relational semantics 
		of $Q$ and $R$ \cite{dCPTdF11};
	\item from this data, we obtain a bound on the size 
		of the point $p_0$ of order 2 of the normal form of the cut,
		as well as a bound on the number of parallel cut 
    elimination steps necessary to obtain $p_0$ from its antecedent 
		$p$ in the Taylor expansion of the cut.
\end{itemize}
Our results in the present paper then provide a bound on the size of $p$:
to find $p$ it is then sufficient to compute 
the relational semantics of all the elements of the 
Taylor expansion of the cut whose size 
does not exceed this bound, and to check which 
one gives a semantics of order 2;
then we can compute $p_0$ as the normal form of $p$,
and this is sufficient to determine the normal form of the cut.

The restriction to connected nets is necessary to apply
the injectivity result of Guerrieri, Pellissier and Tortora de Falco~\cite{GPF16},
based on a fixed order of Taylor expansion.
The injectivity of Taylor expansion and thus of the relational semantics
of full MELL has been proved by de Carvalho~\cite{deCarvalho18}:
to determine $P$ from $\taylor(P)$,
this result relies on a $k$-heterogeneous expansion of $P$,
\emph{i.e.} an expansion of $P$ for which the number of copies of each box
is a power of $k$, and those degrees of expansion are chosen pairwise distinct.
For the result to apply, the value of the parameter $k$ must be sufficiently large:
such a $k$ may be computed from the linear expansion of $P$, obtained by taking
exactly one copy of each box;
but the degrees of expansion of boxes cannot be bounded in advance, and it is thus not clear if 
the above normalization-by-evaluation procedure could be adapted in this setting.

Let us conclude with a remark about a possible adaptation of our results to
a (maybe) more standard representation of nets, including separate
derelictions and coderelictions, with a finer grained cut elimination procedure.
This introduces additional complexity in the formalism but
it essentially requires no new concept or technique:
the difficulty in parallel reduction is to control the chains
of cuts to be simultaneously eliminated,
and decomposing cut elimination into finer reduction steps 
can only decrease the length of such chains.
On the other hand, in that setting,
it is well known that cut elimination alone is not enough to capture the
$\beta$-reduction of $\lambda$-calculus, and it must be
extended with additional rewriting rules accounting for structural identities
(e.g., associativity and commutativity of contraction).
The details of the Taylor expansion analysis of cut elimination up to these
identities are worked out in the PhD thesis of the first
author~\cite[Chapter 2]{chouquet:phd}\footnote{In French.}, 
including the treatment of coefficients as mentioned above.

\subsection*{Acknowledgements}
We would like to thank the anonymous referee whose keen remarks have 
allowed us to clarify several key aspects of our contributions.

This work also owes much to the friendly and stimulating environment provided by the
International Research Network on Linear Logic\footnote{\url{http://www.linear-logic.org/}}
between the French CNRS and the Italian INDAM:
the first steps were actually taken on the occasion of the very first workshop
of the network at Roma Tre in December 2015, dedicated to \emph{New trends in linear logic proof-nets};
and successive versions were presented and discussed on various occasions organized by the IRN-LL.

\bibliographystyle{alpha}
\bibliography{biblio}
\end{document}